\documentclass[acmsmall,screen,nonacm,natbib=false]{acmart}

\setcopyright{cc}
\setcctype{by}
\acmDOI{10.1145/3763112}
\acmYear{2025}
\acmJournal{PACMPL}
\acmVolume{9}
\acmNumber{OOPSLA2}
\acmArticle{334}
\acmMonth{10}
\received{2025-03-25}
\received[accepted]{2025-08-12}

\RequirePackage[
  datamodel=acmdatamodel,
  style=acmnumeric,
  backref=true,
  backrefstyle=three,
  uniquename=false,
  uniquelist=false
  ]{biblatex}
\usepackage{software-biblatex}

\addbibresource{references.bib}

\usepackage{mathtools}
\usepackage{multicol}
\usepackage{bcprules}
\usepackage{commands}
\usepackage{xspace}
\usepackage{fancybox}
\usepackage[shortlabels]{enumitem}
\newcommand{\capless}{\textsf{Capless}\xspace}
\newcommand{\calculus}{\capless}
\newcommand{\ccformal}{\textsf{CC}\ensuremath{_{<:\Box}}\xspace}
\newcommand{\cappy}{\textsf{Reacap}\xspace}
\newcommand{\capcalculus}{\cappy}
\newcommand{\decap}{\textsf{Decap}\xspace}

\newcommand{\smallcode}[1]{{\footnotesize\texttt{#1}}}

\usepackage{mdframed}

\usepackage{listings}

\lstdefinelanguage{dotty}{
  basicstyle=\footnotesize\ttfamily,
  keywords={erased, val, var, if, then, in, handle,
    return, def, match, case, new, type, trait,
     package, object, given, eff,
     pretype, class, extends, extension, infix, else,
     box, unbox, try, catch, import, throw, throws, using,
		 use, cap, box, unbox, extension, this, @use},
  keywordstyle=\bfseries,
  sensitive=true,
  comment=[l]{//},
  morecomment=[s]{/*}{*/},
  commentstyle=\color{ACMDarkBlue},
  stringstyle=\color{gray}, %
  morestring=[b]',
  morestring=[b]",
  moredelim=**[is][\btHL]{`}{`},
	columns=fullflexible,
}
\lstset{
language=dotty,
aboveskip=2pt,
belowskip=2pt,
xleftmargin=2pt,
}

\usepackage{thmtools}
\usepackage{thm-restate}

\usepackage{cleveref}
\usepackage{stmaryrd}

\usepackage{wrapfig}

\newtheorem{definition}{Definition}[section]
\newtheorem{theorem}{Theorem}[section]

\newcommand{\mknote}[3]{{\color{#1}\textbf{\fcolorbox{blue!20}{blue!20}{#2}: {#3}}}}
\renewcommand{\mknote}[3]{}

\usepackage{yfonts,lettrine}

\AtBeginDocument{\setlength{\DefaultFindent}{0.5em}}
\setlength{\DefaultNindent}{0pt}
\setcounter{DefaultLines}{4}

\newcommand{\pparagraph}[1]{\paragraph{\normalfont\textbf{\textsf{#1}}}}

\newif\ifextended

\newcommand{\genericappendixref}[3]{%
  \ifextended
    {#3}~\ref{#2}%
  \else
    \cite[{#3}~#1]{whatisinthebox}%
  \fi
}
\newcommand{\appendixref}[2]{%
  \genericappendixref{#1}{#2}{Appendix}%
}
\newcommand{\appendixfigref}[2]{%
  \genericappendixref{#1}{#2}{Figure}%
}

\newcommand{\condtext}[2]{%
  \ifextended
    #1%
  \else
    #2%
  \fi
} \extendedfalse    %
\extendedtrue   %

\makeatletter%
\begin{document}

\title{What's in the Box}
\subtitle{Ergonomic and Expressive Capture Tracking over Generic Data Structures (Extended Version)}

\author{Yichen Xu}
\orcid{0000-0003-2089-6767}
\affiliation{%
  \institution{EPFL}
  \city{Lausanne}
  \country{Switzerland}
}
\email{yichen.xu@epfl.ch}

\author{Oliver Bračevac}
\orcid{0000-0003-3569-4869}
\affiliation{%
  \institution{EPFL}
  \city{Lausanne}
  \country{Switzerland}
}
\email{oliver.bracevac@epfl.ch}

\author{Cao Nguyen Pham}
\orcid{0009-0005-2543-3309}
\affiliation{%
  \institution{EPFL}
  \city{Lausanne}
  \country{Switzerland}
}
\email{nguyen.pham@epfl.ch}

\author{Martin Odersky}
\orcid{0009-0005-3923-8993}
\affiliation{%
  \institution{EPFL}
  \city{Lausanne}
  \country{Switzerland}
}
\email{martin.odersky@epfl.ch}

\begin{abstract}
Capturing types in Scala unify static effect and resource tracking with object capabilities,
enabling lightweight effect polymorphism with minimal notational overhead. However, their
expressiveness has been insufficient for tracking capabilities embedded in generic data structures,
preventing them from scaling to the standard collections library -- an essential prerequisite for
broader adoption. This limitation stems from the inability to name capabilities within the system's
notion of box types.

This paper develops System \capless, a new foundation for capturing types
that provides the theoretical basis for reach capabilities (rcaps), a novel mechanism for naming
``what's in the box''.
The calculus refines the universal capability notion into a new scheme with existential and
universal capture set quantification. Intuitively, rcaps witness existentially quantified capture
sets inside the boxes of generic types in a way that does not require exposing existential capture
types in the surface language. We have fully mechanized the formal metatheory of System \capless{}
in Lean, including proofs of type soundness and scope safety.
System \capless{} supports the same lightweight notation of capturing types plus rcaps, as
certified by a type-preserving translation, and also enables fully optional explicit capture-set
quantification to increase expressiveness.

Finally, we present a full reimplementation of capture checking in Scala 3 based on System Capless
and migrate the entire Scala collections library and an asynchronous programming library to evaluate
its practicality and ergonomics. Our results demonstrate that reach capabilities enable the adoption
of capture checking in production code with minimal changes and minimal-to-zero notational overhead
in a vast majority of cases.
\end{abstract}

\begin{CCSXML}
  <ccs2012>
     <concept>
         <concept_id>10003752.10010124.10010125.10010130</concept_id>
         <concept_desc>Theory of computation~Type structures</concept_desc>
         <concept_significance>500</concept_significance>
         </concept>
     <concept>
         <concept_id>10011007.10011006.10011008.10011024.10011025</concept_id>
         <concept_desc>Software and its engineering~Polymorphism</concept_desc>
         <concept_significance>300</concept_significance>
         </concept>
   </ccs2012>
\end{CCSXML}

\ccsdesc[500]{Theory of computation~Type structures}
\ccsdesc[300]{Software and its engineering~Polymorphism}

\keywords{Scala, Capture Checking, Effect Polymorphism, Generic Data Structures}

\maketitle

\section{Introduction}\label{sec:introduction}

Statically tracking effects and resources through type systems
has attracted increasing research efforts in programming languages \cite{DBLP:conf/popl/LindleyMM17,DBLP:journals/jfp/ConventLMM20,DBLP:journals/pacmpl/BrachthauserSO20,DBLP:journals/pacmpl/BrachthauserSLB22,DBLP:journals/corr/abs-2407-11816}.
Despite this growing body of research,
integrating effect systems into mainstream programming languages
remains challenging due to concerns about usability and flexibility.
Capturing Types (CT)\footnote{We also refer to the approach of Capturing Types as \emph{capture checking} and \emph{capture tracking}.} \cite{DBLP:journals/toplas/BoruchGruszeckiOLLB23} is a promising advancement
that applies the object-capability model \cite{objectcapabilites}
to provide a simple, safe, and practical foundation for effect tracking in Scala.
Developing an effect system for such an established language
brings both opportunities and constraints:
while the existing ecosystem facilitates adoption,
legacy designs and pre-existing code bases
impose strict requirements
for ergonomics and backward compatibility.

\vspace{-5pt}
\pparagraph{Bringing Effect Tracking to the Masses}
The key to making effect tracking practical and usable in established languages is to describe
effect polymorphism without sacrificing flexibility. Since effects are transitive along call edges,
every higher-order function needs to be effect-polymorphic to account for effects performed by its
arguments. In effect systems with explicit quantifiers, effect parameters in signatures are
required. For instance, the familiar \lstinline|map| function would become the following in a
hypothetical effect system with explicit quantifiers:
\begin{lstlisting}[aboveskip=1pt,belowskip=1pt]
class List[+T] { def map[U, E](f: T -> U eff E): List[U] eff E }
\end{lstlisting}
While such verbosity is acceptable in principle, especially when designing a new language, it
quickly becomes disruptive in established languages like Scala where signature complexity grows
rapidly. CT addresses this challenge through lightweight effect signatures and implicit effect
polymorphism, a design goal shared by many recent works on other languages
\cite{DBLP:conf/popl/LindleyMM17,DBLP:journals/jfp/ConventLMM20,DBLP:journals/pacmpl/BrachthauserSO20,DBLP:journals/pacmpl/BrachthauserSLB22,DBLP:journals/corr/abs-2407-11816}.

In CT's approach, effects are performed and resources are accessed via capabilities,
which are objects that are referenced and tracked as regular program variables.
CT's essence is to \emph{track captured variables in types}. It introduces \emph{capturing types}
that augment regular types with a \emph{capture set} over-approximating the variables a value may
capture, and thus giving a handle on the effects a value may perform. Consider a function that
greets someone using console I/O:
\begin{lstlisting}[aboveskip=1pt,belowskip=1pt]
(name: String) => console.println("Hello " + name)
\end{lstlisting}
Assuming \lstinline|console| is a capability for console I/O,
this function has type \lstinline|String ->{console} Unit| (shorthand for \lstinline|(String -> Unit)^{console}|).
This \emph{capturing type} consists of
(1) the \emph{shape type} specifying parameter and return types, and
(2) the \emph{capture set} \lstinline|{console}| indicating the function may capture \lstinline|console|.
The type makes it evident that the function may perform console I/O.

CT's lightweight notation enables effect polymorphism that is minimally invasive.
Returning to our earlier example, the signature of \lstinline|map| remains \emph{unchanged} from vanilla Scala:
\begin{lstlisting}[aboveskip=1pt,belowskip=1pt]
class List[+T] { def map[U](f: T => U): List[U] /* no signature changes needed */ }
\end{lstlisting}
The function arrow \lstinline|T => U| is shorthand for \lstinline|(T -> U)^{cap}|,
where \lstinline|{cap}| is the top capture subsuming all capabilities.
This allows argument \lstinline|f| to capture arbitrary capabilities,
making \lstinline|map| effect-polymorphic, analogous to modeling polymorphism through subtyping and \lstinline|Object| in OO languages.

\vspace{-5pt}
\pparagraph{Retrofitting Capturing Types into Scala}
Another key concern is harmoniously and non-invasively integrating the new universe of capturing
types (like \lstinline|String ->{console} Unit|) into the existing Scala type system. Previous work
(System \ccformal{} \cite{DBLP:journals/toplas/BoruchGruszeckiOLLB23}) formally studied a pragmatic
and sound way of retrofitting this universe in a model $\lambda$-calculus that
informed Scala's CT implementation. Consider a set of concurrent futures
\lstinline|Set[Future[T]^]|, where the hat annotation \lstinline|Future[T]^| (shorthand for
\lstinline|Future[T]^{cap}|) indicates that futures are tracked as capabilities. Under the hood,
\ccformal{} requires generic type arguments with captures to be \emph{boxed}, i.e., this type
desugars to \lstinline|Set[box Future[T]^{cap}]|. The appeal of boxing is that existing
generic types like \lstinline|Set[T]| do not need to be polluted with extra quantifiers for captures,
and work with normal Scala types as well as capturing types,
contributing to the lightweight nature of CT.

\vspace{-5pt}
\pparagraph{Generic Types, Elusive Captures}
Attempts to apply CT to Scala's collection library, however, revealed a key limitation of
\ccformal{} that makes using generic data structures impractical.
Consider a function that turns a set of futures
into a stream arranging them by completion order:
\begin{lstlisting}[aboveskip=1pt,belowskip=1pt]
def collect[T](fs: Set[Future[T]^]): Stream[Future[T]^] =
  val channel = Channel()
  fs.forEach(_.onComplete(v => channel.send(v)))  // error, elements inaccessible!
  Stream.of(channel)
\end{lstlisting}
This function is \emph{untypeable} in \ccformal{}!
To prevent unsafe capability leaks, the calculus forbids accessing a boxed value when its capture set is the top element \lstinline|{cap}| \cite{DBLP:journals/toplas/BoruchGruszeckiOLLB23}.
Hence, the futures in \lstinline|fs| are inaccessible and \lstinline|collect| does not type-check
(we detail boxes and \lstinline|{cap}| in \Cref{sec:motivation}).
\ccformal{} lacks the ability to handle \emph{nested} captures within generic types.

Furthermore, even if the type system tolerated capability leaks and allowed unboxing \lstinline|cap|-qualified types,
the function would be impractical.
The result type \lstinline|Stream[box Future[T]^{cap}]| is imprecise:
it suggests futures in the stream may perform \emph{arbitrary} effects,
despite these futures originating from the input \lstinline|fs|
and performing at most the effects of futures in that collection.
\ccformal{} cannot express this precise input-output relationship.
Is boxing doomed?

\vspace{-5pt}
\pparagraph{The Problem: What's in the Box?}
The fundamental problem is the lack of a mechanism for 
safely accessing and referring to capabilities inside boxes.
To see the core issue,
we approach the problem from the angle of explicit capture quantifications.
The universal capture \lstinline|cap| can be understood as an existential capture.
For instance, the type \lstinline|Future[T]^{cap}| means a future capturing \emph{some} arbitrary capabilities; so it can be viewed as \lstinline[mathescape=true]|$\exists c.$Future[T]^{$c$}|.
Under this perspective, the \lstinline|collect| signature then becomes:
\begin{lstlisting}[mathescape=true]
def collect[T](fs: Set[box $\exists c_1.$Future[T]^{$c_1$}]): Stream[box $\exists c_2.$Future[T]^{$c_2$}]
\end{lstlisting}
This reveals the disconnect between parameter and result captures.
Since result futures originate from input futures,
a more precise and desirable signature would be:
\begin{lstlisting}[mathescape=true]
def collect[T][$\forall c_1$](fs: Set[box Future[T]^{$c_1$}]): Stream[box Future[T]^{$c_1$}]
\end{lstlisting}
Here, the witness $c_1$ flows from input to output.
Unfortunately, this signature is inexpressible in \ccformal{}.
The lesson to be learned here is that ``no two \lstinline|cap|s are created equal'': we need
more granular means to distinguish between them.

As one solution, the CT system we propose supports \emph{optional} explicit capture polymorphism:
\begin{lstlisting}[language=Dotty]
def collect[T, C^](fs: Set[box Future[T]^{C}]): Stream[box Future[T]^{C}]
\end{lstlisting}
While this works with \lstinline|C^| declaring a \emph{capture parameter}, relying on explicit polymorphism alone
would undermine CT's lightweight design.
From a language design perspective,
we would like to keep explicit polymorphism optional
and offer ergonomic alternatives for such a common pattern.

\vspace{-6pt}
\pparagraph{Reach Capabilities: Existentials without the Clutter}
We propose \emph{reach capabilities} as an effective and lightweight means to name existential captures in boxes.
With reach capabilities, the \lstinline|collect| signature becomes:
\begin{lstlisting}[aboveskip=1pt,belowskip=1pt,mathescape=true]
def collect[T](@use$\;$fs:$\;$Set[Future[box$\;$T]^]):$\;$Stream[box$\;$Future[T]^{fs*}]$\;$//$\;$<- precise capture {fs*}
\end{lstlisting}
This signature tracks the futures in \lstinline|fs| through:
(1) the reach capability \lstinline|{fs*}| that names \emph{what's in the box} of \lstinline|fs|'s type;
and (2) the \lstinline|@use| annotation signifies that the reach capability is used by \lstinline|collect| (see \Cref{sec:use-annotation}).
No extra universal quantifiers or existential types are inflicted upon users.

\vspace{-6pt}
\pparagraph{A New Foundation for Capturing Types}
Previous attempts at supporting naming mechanisms for box contents in Scala 3
\cite{DBLP:conf/programming/XuO24} suffered from several soundness issues
\cite{soundness1,soundness2,soundness3}.
This experience, along with our analysis of \lstinline|cap|'s limitations,
motivated us to develop a new theoretical foundation for CT.
We present two calculi:
\textbf{System \capless}, a new foundational capture calculus with explicit universal and existential capture quantification, and
\textbf{System \cappy}, a surface calculus formalizing CT's lightweight syntax with reach capabilities.
System \capless provides the new theoretical bedrock for capture tracking,
while System \cappy maintains CT's lightweight design.
A type-preserving translation from \cappy to \capless
assigns precise meaning to the surface syntax.
System \capless also informs our new quantifier-based capture checker implementation for Scala 3.

\vspace{-6pt}
\pparagraph{Scala Collections, Capture Checked}
The new and improved capture-checker implementation
based on System Capless finally enables integrating capturing types into Scala's entire standard collections library with minimal modifications (<5\% LoC changed, almost 90\% function signatures stay the same). 
The kinds of modifications typically look as follows:
\begin{lstlisting}[aboveskip=1pt,belowskip=1pt,keepspaces=true]
class Set[T]: // a mutable set
  def filterInPlace(pred: T => Boolean): this.type   // no changes
  def prependAll(items: IterableOnce[T]^): this.type // extra universal capture set `^`
class Iterator[T]:
  def map(f: T => Boolean): Iterator[T]^{this, f}    // capture set {this, f} on return type
\end{lstlisting}
When the lightweight notation falls short, e.g., for mutable builders (\Cref{sec:mutable-collectors}), we support \emph{optional}
explicit capture parameters. Notably, the collections library
required \emph{none}! %
Thus, our work is a decisive step towards bringing practical effect systems to real-world programming languages.

\vspace{-6pt}
\pparagraph{Contributions} To summarize, our contributions are as follows: %
\begin{itemize}[leftmargin=1.25em,nosep]

	\item \emph{Reach Capabilities:} We motivate reach capabilities (\Cref{sec:motivation}), which enable
		  expressive and lightweight effect polymorphism over generic data structures. 
		  We explain the subtleties of the previous system's boxing mechanism and universal
		  capability that necessitate reach capabilities.
	\item \emph{System \capless:} We present a new foundation for capturing types (System \capless, \Cref{sec:capless}) with existential and universal quantification of capture sets.
	Beyond providing the theoretical basis of reach capabilities,
	it is a more principled and expressive formalization of capture checking compared to the previous system \ccformal{}~\cite{DBLP:journals/toplas/BoruchGruszeckiOLLB23}.
	\item \emph{System \cappy:} We present System \cappy (\Cref{sec:cappy}) which formalizes the surface language of capture checking with rcaps,
	whose semantics is defined by a type-preserving translation to System \capless.
	\item \emph{Mechanized Metatheory:} We establish the type soundness and scope safety of our capture tracking system (\Cref{sec:metatheory}).
	      The metatheory of System \capless{} is mechanized in Lean 4, while pencil-and-paper proofs of the type-preserving translation from System \cappy to \capless are provided in the supplementary material.
	\item \emph{Implementation and Evaluation:} We applied the theory in a full re-implementation of Scala 3's capture checker %
	      (\Cref{sec:evaluation}) which was used
	      to compile capture-checked versions of an asynchronous programming library (\Cref{sec:case_study}),
		  and Scala's standard collections library.
      We assess the required changes to the latter's \textasciitilde 30K-line code base in Section \ref{sec:evaluation}.
      The required change set is simple and small enough to make the transition to capture checking practical.

\end{itemize}
Finally, we discuss limitations and future directions in \Cref{sec:limitations}, related work in Section \ref{sec:related} and conclude in Section \ref{sec:conclusion}.
The Lean 4 mechanization, the compiler implementation and the code for evaluation are available in the accompanying artifact \cite{artifact}.

\section{A Tale of Names and Boxes}\label{sec:motivation}

This section motivates \emph{reach capabilities} and the two proposed calculi.
All examples can be compiled by our implementation which is part of the Scala 3 compiler.

\subsection{A Brief Introduction to Capture Tracking}

Bringing effect tracking to a well-established, mainstream language like Scala poses specific
constraints. Scala’s broad adoption makes the ecosystem highly sensitive to notational overhead and
backward compatibility: systems that demand substantial syntactic changes or disrupt established
idioms are unlikely to be viable. Classical type-and-effect systems face a fundamental propagation
problem: effects flow along call chains, forcing each function to account for the effects of its
callees. To cope, such effect systems typically choose between manual specialization for fixed effect classes
(duplication) or pervasive effect annotations (syntactic burden). This combination of propagation
and notation has been a major barrier to deploying effect systems in Scala~\cite{DBLP:journals/toplas/BoruchGruszeckiOLLB23}.

The capability-based approach circumvents this fundamental issue
by modeling effects through capabilities tracked in the type system.
Rather than explicitly tracking effect propagation,
capabilities naturally flow through
the program as ordinary program variables.
Nevertheless, capability systems face the problem of \textit{captures} \cite{DBLP:journals/toplas/BoruchGruszeckiOLLB23},
where closures can ``leak'' effects outside of their designated lifetime
by simply holding a reference to such capabilities.
Capturing types (CT)
\cite{DBLP:journals/toplas/BoruchGruszeckiOLLB23,DBLP:conf/scala/OderskyBBLL21}
takes this capability-oriented approach while proposing a lightweight mechanism to track captures,
bringing effective-yet-ergonomic effect tracking to Scala.
A capturing type tracks the capabilities a value can capture and
takes the form of
$T\capt\set{x_1,\cdots,x_n}$, consisting of two components:
(1) the \textbf{shape type} $T$, a ``classical'' type describing the shape of the value (e.g. \lstinline|Int|, a function from \lstinline|Int| to \lstinline|Int|, etc.), and
	(2) the \textbf{capture set} $\set{x_1,\cdots,x_n}$, a set of program variables a value of this type can at most capture.
Consider the function below which prints a greeting, using the
capability \smallcode{console} for console I/O:
\begin{lstlisting}{language=Dotty}
def sayHi(name: String): Unit = console.log(s"Hi, $name!")
\end{lstlisting}
\lstinline|sayHi| has the capturing type
\lstinline|(String -> Unit)^{console}|. Since capabilities are
represented as variables, the capture set indicates the effects and resources a value of this type
can produce and access. Here, the capture set of \smallcode{sayHi} indicates that the function
\emph{at most} performs console I/O.

\subsubsection{Lightweight Effect Polymorphism}

CT supports \emph{implicit effect polymorphism}.
For higher-order functions, classical %
effect systems
have to use explicit effect binders to track effects
like in the list \lstinline|map| example from \Cref{sec:introduction}.

At its core, the issue with effect binders is about \emph{naming}.
When writing a higher-order
function, a name is needed to account for the effect produced by the argument. Failing to do so
leads to either a restrictive or an imprecise type. 

By contrast, in CT, the signature of the list \smallcode{map} method stays unchanged:
\begin{lstlisting}[language=Dotty]
trait List[+A] { def map[B](f: A => B): List[B] }
\end{lstlisting}
It is effect-polymorphic yet stays \emph{identical} to the original signature.
Under the hood, \lstinline|A => B|
expands to \lstinline[language=Dotty]|A ->{cap} B|
which itself is a shorthand for \lstinline|(A -> B)^{cap}|.
This type is a function from $A$ to $B$ capturing at most \lstinline|{cap}|
where \lstinline|cap| is the \emph{universal capability}.

\subsubsection{Universal Capability as a Device for Effect Polymorphism}
\label{sec:cap-for-polymorphism}

In CT, every capability is derived from a set of existing ones, forming a hierarchy of authority.
The universal capability \lstinline|cap| is the root of this \emph{capability hierarchy}.
E.g., the following
\lstinline|Logger| class
writes logging messages to both a file and the console:
\begin{lstlisting}[language=Dotty]
class Logger(f: File^) { def log(msg: String): Unit = { f.write(msg); console.log(msg) } }
val f: File^ = ...
val logger: Logger^{f,console} = Logger(f)
\end{lstlisting}
Here, \lstinline|File^| (short for \lstinline|File^{cap}|) is a capability for file I/O.
The \lstinline|logger| capability
obtains access to the file and the console
from existing capabilities \lstinline|f| and \lstinline|console|,
i.e., deriving from \lstinline|f| and \lstinline|console|.
Furthermore, the \lstinline{sayHi} function
whose type is \lstinline|String ->{console} Unit|
can be viewed as a capability derived from \lstinline|console|.

CT introduces \emph{subcapturing},
a subtyping relation between capture sets.
It augments set inclusion with the capability hierarchy.
A capability is a subcapture of the capabilities it derives from.
For instance, the following subcapturing relations hold for the \lstinline|Logger| example:
\begin{lstlisting}[columns=fixed]
         {} <: {logger} <: {f,console}         {} <: {f} <: {cap}
\end{lstlisting}
Subtyping between capturing types is defined by
the combination of
regular subtyping between shape types and subcapturing.
Since all capabilities are ultimately derived from the universal capability,
any capture set is a subcapture of \lstinline|{cap}|.
Therefore, subcapturing and the universal capability \lstinline|cap| can be used as a device for \emph{effect polymorphism}.
Going back to \lstinline|List.map|,
the argument type \lstinline|A => B| indicates that \lstinline|map| takes functions performing arbitrary effects:
it is effect-polymorphic.
This is analogous to using \lstinline|Any| as the top type for subtype polymorphism.

In fact,
the argument \lstinline|f| of \lstinline|map| itself is a capability,
thus \lstinline|f| becomes a \emph{name} of its effects.
We do not need an extra name!
For example,
the following function takes an operation
and returns an iterator that repeatedly applies that operation:
\begin{lstlisting}[language=Dotty]
def repeated[T](f: () => T): Iterator[T]^{f} = new Iterator[T]:
  def next(): T = f()
  def hasNext(): Boolean = true
\end{lstlisting}
The signature reads that, given any operation \lstinline|f|,
the function returns an iterator
with \lstinline|f|'s effects.

\subsubsection{What's in the Box?}\label{sec:motiv:whats-in-the-box}
Another challenge in supporting capture tracking in Scala
is the interaction between capturing types and generics,
which is a fundamental part of functional programming.
Consider the following generic function that transforms the first element of a pair:
\begin{lstlisting}[language=Dotty]
def mapFirst[A, B, C](p: Pair[A, B], f: A => C): Pair[C, B] = Pair(f(p.x), p.y)
\end{lstlisting}
What should be the signature of this function under CT?
Without any restrictions, the type variables \lstinline|A|, \lstinline|B|, and \lstinline|C| could be instantiated to capturing types.
This means that the pair \lstinline|p| could capture arbitrary capabilities through its fields.
Consequently, the parameter \lstinline|p| needs to be annotated with \lstinline|{cap}|:
\begin{lstlisting}[language=Dotty]
def mapFirst[A, B, C](p: Pair[A, B]^{cap}, f: A => C): Pair[C, B]^{cap} = ...
\end{lstlisting}
The resulting \lstinline|Pair| also needs to be annotated with \lstinline|{cap}| because there is no account of what \lstinline|C| captures: it can be anything.
This is an unacceptably imprecise signature.

The solution in CT is to keep generic types pure
and introduce boxes to recover expressiveness.
This simplifies generic functions, as they do not need to account for potential effects from their type arguments.
With this restriction, the \lstinline|mapFirst| function works without any capture annotations:
\begin{lstlisting}[language=Dotty]
def mapFirst[A, B, C](p: Pair[A, B], f: A => C): Pair[C, B] = ...
\end{lstlisting}
Intuitively,
parametricity ensures that \lstinline|mapFirst| cannot inspect the contents of the pair,
so the captures of the generic field types should be irrelevant to this function.

To handle capturing types in generic contexts,
CT introduces \emph{boxes}, which encapsulate impure values as pure ones.
To access a boxed value,
it needs to be \emph{unboxed},
which ``pops out'' the captures that were previously hidden.
For example:
\begin{lstlisting}[language=Dotty]
val consoleOps: List[box () ->{console} Int]^{} = List(box () => console.readInt())
val f: () ->{} Boolean = () => consoleOps.isEmpty
val g: () ->{console} Int = () => (unbox consoleOps.head)()
\end{lstlisting}
Even though elements of \lstinline|consoleOps| are effectful,
they are all boxed and the list \lstinline|consoleOps| is pure.
The function \lstinline|f| is pure
since it does not access the elements of \lstinline|consoleOps|.
Conversely, \lstinline|g| \emph{does} access the elements of \lstinline|consoleOps|
and unboxes its element,
which pops out the captured capabilities hidden by the box.
Therefore,
\lstinline|g| captures the previously hidden capability \lstinline|console|.

As a rule of thumb,
whenever we see a capturing type in type-argument position,
there is implicitly a box.
For instance, the type \lstinline|List[() => Int]|
expands automatically to \lstinline|List[box () => Int]|.
We will nevertheless show boxes in examples for pedagogical reasons.
In fact, the Scala 3 compiler implements complete box inference,
so boxes are transparent to users \cite{Xu2023FormalizingBI} and
the language does not even have a surface syntax for them.\footnote{Unlike the ``boxing'' in the JVM,
boxes in CT are purely a compile-time construct
with no runtime overhead.}

Now we can see how the \lstinline|mapFirst| function works with boxed impure values.
Consider a pair containing a file capability and an operation:
\begin{lstlisting}[language=Dotty,keepspaces=true]
val p: Pair[box File^{f}, box () ->{f} Unit] = ...
val q = mapFirst(p, f => box (new Logger(unbox f))) // : Pair[box Logger^{f}, box () ->{f} Unit]
val useLogger = () => (unbox q.fst).log("test")     // : () ->{f} Unit
\end{lstlisting}
The generic function \lstinline|mapFirst| operates on this pair with all the impure values boxed.
The transformation function creates a new \lstinline|Logger| by unboxing the file,
and the result is re-boxed to maintain purity.
When the logger is finally used,
unboxing it reveals the capability \lstinline|f| in the capture set.
This demonstrates the idea of \emph{capture tunnelling} \cite{DBLP:journals/toplas/BoruchGruszeckiOLLB23}:
captures are tunneled through generic contexts via boxes
and only surface when the boxed values are accessed.
This reflects the relational parametricity \cite{parametricity} of generic functions,
and allows CT to stay concise and practical \cite{DBLP:journals/toplas/BoruchGruszeckiOLLB23}.

Furthermore, boxes play a crucial role in ensuring the scope safety of capabilities:
a boxed value capturing \lstinline|cap| (e.g. \lstinline|box File^{cap}| where \lstinline|^| binds tighter than the box) cannot be unboxed,
as it typically represents a capability that has escaped from its defining scope \cite{DBLP:journals/toplas/BoruchGruszeckiOLLB23}.
The following example tries to leak a \lstinline|File| out of its defining local scope:
\begin{lstlisting}[language=Dotty,keepspaces=true]
def withFile[T](f: File^ => T): T = { val l: File^ = new File; val r = f(l); l.close(); r }
val leakedBox: box File^{cap} = withFile[box File^{cap}](file => box file)
val leakedFile = unbox leakedBox // error: unboxing value capturing {cap}
\end{lstlisting}
Note the type parameter to \lstinline|withFile| has to be instantiated with \lstinline|box File^{cap}|, as
the local parameter \lstinline|file| is out of scope. Consequently, the unbox operation fails due to the above restriction.

Despite these properties, boxes introduce a fundamental limitation: they \emph{cut the tie} between
the name of a generic data structure
and the effects of its elements, making them untrackable.

For instance,
given \lstinline|ops: List[() => Int]|,
we cannot use \lstinline|ops| to name the effects of the list elements:
\begin{lstlisting}[language=Dotty]
def mkIterator[T](ops: List[() => T]): Iterator[T]^{cap} = ...
\end{lstlisting}
Here, \lstinline|mkIterator| creates an iterator from a list of closures, running them one by one.
This function cannot be expressed in the previous CT system \cite{DBLP:journals/toplas/BoruchGruszeckiOLLB23}
due to the scope safety restriction mentioned above.
Furthermore,
even if we had sacrificed scope safety and lifted the restriction,
we are only able to type the result at \lstinline|Iterator[T]^{cap}|
since we have no means to \emph{name} the effects of the elements in \lstinline|ops|.
This is again utterly imprecise:
even if only pure operations are passed in,
the result is considered performing arbitrary effects. The following definition of
\lstinline|pure| has a pure RHS, but will fail to type-check:
\begin{lstlisting}[language=Dotty,mathescape=true]
val pure: () ->{} Int = () => mkIterator(List(() => 1)).next()$\;$//$\;$error:$\;$using value capturing {cap}
\end{lstlisting}
The root cause lies in the \lstinline|mkIterator| example itself.
The argument \lstinline|ops| (of type \lstinline|List[box () => T]^{}|)
is pure: \lstinline|{ops} <: {}|,
since list elements are \textit{boxed}.
Therefore, the list's capture set becomes completely disconnected from the capture sets of its elements.
Hence, we cannot name \emph{what's in the box} of a generic data structure!
This inability to characterize the contents of boxes is precisely the underlying problem that this paper addresses.

\subsection{Naming What's in the Box}\label{sec:motiv:reach-capabilities}

The problem of naming capabilities inside boxes can be solved by extending the type system with explicit quantification over capture sets.
This is supported by our new foundational calculus,
System \capless{},
which provides a sound and principled basis for explicit capture quantifications.
We can give a precise type to our \lstinline|mkIterator| example:
\begin{lstlisting}[language=Dotty]
def mkIteratorExplicit[T, C^](ops: List[() ->{C} T]): Iterator[T]^{C} = ...
\end{lstlisting}
Here, \lstinline|[..., C^]| introduces a universal capture set variable \lstinline|C|.
The signature now precisely states that for any capture set \lstinline|C|,
if \lstinline|mkIterator| is given a list of operations that all at most capture \lstinline|C|,
it returns an iterator that also captures \lstinline|C|.
\begin{lstlisting}[language=Dotty,keepspaces=true]
mkIteratorExplicit[Int, {}](List(() => 1, () => 2)) // : Iterator[Int]^{}
mkIteratorExplicit[Int, {console}](consoleOps)      // : Iterator[Int]^{console}
\end{lstlisting}
When called with pure operations, \lstinline|c| can be instantiated to the empty set \lstinline|{}|, and the resulting iterator is pure.
When called with \lstinline|consoleOps| of type \lstinline|List[() ->{console} Int]|, \lstinline|c| is instantiated to \lstinline|{console}|, and the result captures \lstinline|{console}|.

While expressive,
this explicit style can be verbose:
every function that maps generic collections of capabilities
has to be annotated with explicit capture variables.
This deviates from the lightweight philosophy
that makes CT appealing for established languages like Scala.
We therefore propose to
keep explicit quantification as an optional feature,
and introduce \emph{reach capabilities},
an ergonomic and lightweight mechanism for naming ``what's in the box''.
Reach capabilities allow us
to write the \lstinline|mkIterator| signature
as follows,
with minimal disruption to the original code:
\begin{lstlisting}[language=Dotty]
def mkIterator[T](@use ops: List[() => T]): Iterator[T]^{ops*} = ...
\end{lstlisting}
Reach capabilities can be understood by translating them to explicit capture variables.
For instance, the reach capability \lstinline|ops*| directly corresponds to
the variable \lstinline|c| in the explicit version.
It serves as a \emph{name} for the capabilities that can be \emph{reach}ed through the boxes of \lstinline|ops|.
Similar to the explicit version, the following is well-typed:
\begin{lstlisting}[language=Dotty,keepspaces=true]
mkIterator(List(() => 1, () => 2)) // : Iterator[Int]^{}
mkIterator(consoleOps)             // : Iterator[Int]^{console}
\end{lstlisting}
Reach capabilities are realized through three core mechanisms: reach refinement (\Cref{sec:rcaps:reach-refinement}), deep capture sets (\Cref{sec:rcaps:deep-capture-sets}), and the \lstinline|@use| annotation (\Cref{sec:use-annotation}).

\subsubsection{Reach Refinement}\label{sec:rcaps:reach-refinement}
To introduce rcaps,
the type-checker performs \emph{reach refinement}. When a variable \lstinline|ops| is used, this
process replaces certain occurrences of the universal capability \TCAP{} in its type with the reach
capability \lstinline|ops*|. This essentially gives a name to the capabilities inside the boxes,
which is analogous to how a variable names the capabilities it directly captures. For instance,
given \lstinline|ops: List[() ->{cap} T]|, reach refinement infers its type as \lstinline|List[() ->{ops*} T]|. 
Let's inspect the \lstinline|mkIterator| function as an example:
\begin{lstlisting}[language=Dotty]
def mkIterator[T](@use ops: List[box () => T]): Iterator[T]^{ops*} = new Iterator[T]:
  var current: List[box () ->{ops*} T] = ops
  def next(): T =
    val f: () ->{ops*} T = unbox (current.head : box () ->{ops*} Unit)
    current = current.tail
    f()
  def hasNext(): Boolean = current.nonEmpty
\end{lstlisting}
This type-checks thanks to reach refinement:
\begin{itemize}[leftmargin=*]
	\item In the definition of \lstinline|current|, \lstinline|ops| is accepted because its type is refined from \lstinline|List[box () ->{cap} Int]| to \lstinline|List[box () ->{ops*} Int]|. %
	\item On the RHS of the variable definition \lstinline|f|, \lstinline|current.head| has type \lstinline|box () ->{ops*} Int|.
	\item The unboxing then propagates the reach capability \lstinline|ops*| to the capture set of the iterator's closure.
	\item The resulting iterator correctly captures \lstinline|ops*|.
\end{itemize}
From an explicit-quantification perspective,
the parameter \lstinline|ops| has the type \lstinline[mathescape=true]|$\exists c.$ List[box () ->{$c$} T]|,
and \lstinline|ops*| corresponds directly to the witness of $c$.

Understanding rcaps
in terms of a translation to quantification
is essential for a sound design.
The earlier,
ad-hoc implementation of rcaps suffered from soundness issues
precisely because it lacked this foundation \cite{soundness1,soundness2,soundness3}.
The central question is:
which occurrences of the universal capability \TCAP{} in a type should be refined to a reach capability?
The initial, intuitive answer was ``all covariantly-occurring \TCAP{}s''.
This is unsound, accepting the following code:
\begin{lstlisting}[language=Dotty]
val map: [T] -> (files: List[box File^]) -> (op: (box File^) => T) -> List[T] =
  files.map(op) // a function that maps a list of files
val makeFilePure: File^ -> File^{map*} = (f: File^) => map[box File^{map*}](List(f))(x => x).head
\end{lstlisting}
The \lstinline|makeFilePure| function is problematic:
it converts
an arbitrary \lstinline|File^| capability (which can perform file I/O)
to one that only captures \lstinline|map*|.
Since \lstinline|map| is a pure function, \lstinline|map*| is empty,
meaning the resulting \lstinline|File| is considered \emph{pure},
despite being the same impure \lstinline|File| that was given!

The problem lies within
\lstinline|map|'s reach refinement:
\begin{lstlisting}
  [T] -> (files: List[box File^{cap}]) -> (op: (f: box File^{map*}) => T) -> T
\end{lstlisting}
It becomes clear that this is absurd when we translate the type of \lstinline|map| into explicit quantification:\footnote{For clarity, we show existential quantifiers on parameters, which are trivially convertible into universal quantifiers.}
\begin{lstlisting}[mathescape=true]
  [T] -> (files: $\exists c_1.$ List[box File^{$c_1$}]) -> (op: (f: $\exists c_2.$ box File^{$c_2$}) => T) -> T
\end{lstlisting}
\lstinline|map*| witnesses the existential scoped at the outermost level of \lstinline|map| (in this case, there is no existential bound at this level),
and clearly does not witness $c_2$.
The unsound reach refinement principle implicitly assumes \lstinline|map|'s type translates to:
\begin{lstlisting}[mathescape=true]
  $\exists c_2.$ [T] -> (files: $\exists c_1.$ List[box File^{$c_1$}]) -> (op: (f: box File^{$c_2$}) => T) -> T
\end{lstlisting}
It confuses the scope of the existential variable $c_2$.
Our principled approach avoids such confusion
by ensuring that each \lstinline|cap| is mapped to an existential in the closest enclosing scope,
and reach capabilities always witness the outermost existential.

\subsubsection{Deep Capture Sets}\label{sec:rcaps:deep-capture-sets}
At function call sites, reach capabilities of parameters are instantiated with the \emph{deep capture sets} of the corresponding argument types.
The deep capture set of a type collects all capabilities that occur covariantly in that type.
It provides a concrete witness for the existential quantification that reach capabilities represent.

Let us revisit our \lstinline|mkIterator| example with its explicit quantification version:
\begin{lstlisting}[language=Dotty]
def mkIteratorExplicit[T, c^](ops: List[() ->{c} T]): Iterator[T]^{c} = ...
\end{lstlisting}
When calling \lstinline|mkIteratorExplicit[Int, {console}](consoleOps)|, 
the capture variable \lstinline|c| is explicitly instantiated to \lstinline|{console}|.
Deep capture sets achieve the same for reach capabilities:
when calling \lstinline|mkIterator(consoleOps)|,
the reach capability \lstinline|ops*| (which witnesses the explicit variable \lstinline|c|) is instantiated with the deep capture set of \lstinline|List[() ->{console} Int]|,
which yields \lstinline|{console}|, correctly typing the result as \lstinline|Iterator[Int]^{console}|.
Deep capture sets collect only covariantly-occurring capture sets because these represent capabilities
that can flow ``outward'' -- the capabilities ``in the box'' that can be accessed when using a value.
Contravariant positions, by contrast, represent capabilities flowing ``inward'' -- requirements for
using the value rather than capabilities it captures.

\subsubsection{The \lstinline|@use| Annotation}\label{sec:use-annotation}
The \lstinline|@use| annotation on a function parameter
signifies that the parameter's reach capability is \emph{used} by the function.
This annotation is needed to ensure that function applications are tracked correctly.
Normally, the capabilities captured by an application \lstinline|f(x)| are simply \lstinline|{f,x}|.
However, this is unsound for functions that use their parameters' reach capabilities:
\begin{lstlisting}
def runOps(@use ops: List[() => Unit]): Unit = ops.foreach(op => op()) // run each op in list
val ops: List[() ->{console} Unit] = List(() => console.log("Hello"))
val r2 = () => runOps(ops)
\end{lstlisting}
Without a special rule for \lstinline|@use| parameters,
the body of \lstinline|r2| would only capture \lstinline|{runOps, ops}|,
which is pure.
As a result,
\lstinline|r2| would be incorrectly typed as \lstinline|() ->{} Unit|, even though it performs console I/O.
The \lstinline|@use| annotation signals that the call site must account for the capabilities \emph{inside} the argument.
The general rule is that for \lstinline|@use| parameters, the call captures the deep capture set of the argument, whereas for normal parameters only its (shallow) capture set is used.
For \lstinline|runOps(ops)|, the captured set is the union of \lstinline|{runOps}| and the deep capture set of \lstinline|ops|'s type, which is \lstinline|{console}|.
This correctly gives \lstinline|r2| the type \lstinline|() ->{console} Unit|.

\subsubsection{Type Definitions}\label{sec:motiv:typedefs}
Sometimes, the default behavior of introducing existential quantifiers in the closest
enclosing scope is not desirable.
To address this,
System \cappy{} includes \emph{type definitions},
which are essentially parameterized aliases for types.
For instance,
the following type definition
defines non-dependent functions, i.e., those that do not depend on their parameters:
\begin{lstlisting}[language=Dotty]
type Function[-A, +B] = (z: A) -> B
\end{lstlisting}
Instances of \lstinline|cap| in a type definition's type parameters are translated into existentials \emph{before} the type definition is expanded.
Consider the \lstinline|flatMap| function for \lstinline|Iterator| in the Scala collections library:
\begin{lstlisting}[language=Dotty,mathescape=true]
def flatMap[A, B](it: Iterator[A]^, f: A => Iterator[B]^): Iterator[B]^{?} //$\;$<- what's the result?
\end{lstlisting}
With the default translation scheme,
the resulting iterator's capture set is the overly imprecise \lstinline|{cap}|.
This is because type parameter \lstinline|f|'s type translates to
\lstinline[mathescape=true]|A => $\exists c.$ Iterator[B]^{c}|.
Each application of \lstinline|f| yields an iterator that captures a locally-quantified existential $c$.
A more desirable translation would be
\lstinline[mathescape=true]|$\exists c.$ A => Iterator[C]^{c}|,
with the existential $c$ being scoped over the whole function,
and the reach capability \lstinline|f*| would be the capture set of the result iterator.
By treating non-dependent function types \lstinline|T => U|
as applied type definitions \lstinline|Function[T, U]^|,
we can type-check the following signature with a precise result capture set:
\begin{lstlisting}[language=Dotty]
def flatMap[A, B](it: Iterator[A]^, f: A => Iterator[B]^): Iterator[B]^{it, f, f*}
\end{lstlisting}
This is exactly how the implementation works.
The applied type \lstinline|Function[A, Iterator[B]^]|
translates and dealiases as follows:
\begin{lstlisting}[mathescape=true]
Function[A, Iterator[B]^{cap}] $\leadsto$ $\exists c.$ Function[A, Iterator[B]^{c}] $\leadsto$ $\exists c.$ (z: A) -> Iterator[B]^{c}
\end{lstlisting}
Type definitions play a crucial role in our system
by offering a way to change the scope of existentials,
as needed by, e.g., church-encoded data structures.
We further discuss them in Section~\ref{sec:cappy:type-definitions}.

While explicit quantification over capabilities provides a more powerful and general solution to capability polymorphism,
this generality often comes at the cost of verbosity and boilerplate.
Reach capabilities,
in contrast,
offer a lightweight and ergonomic alternative that is sufficient for the vast majority of common programming patterns.
For instance,
rcaps are sufficiently expressive for capture-checking the standard library.
We do not need any explicit quantification.
The usefulness of reach capabilities is further demonstrated in the case study on asynchronous programming (Section~\ref{sec:case_study}) and in\condtext{}{ our technical report} \appendixref{B.1}{sec:local-mutable-state}.
\section{System \capless{}: A New Foundation for Expressive Capture Tracking}
\label{sec:capless}

\begin{wide-rules}\noindent
	{\footnotesize\begin{multicols}{3}\noindent
		\begin{flalign*}
			x,\,y,\,z         \tag*{\textbf{Variable}}\\
			X                 \tag*{\textbf{Type Variable}}\\
			\new{c}                 \tag*{\textbf{Capture Variable}}\\
			s,\,t,\,u\coloneqq\ &           \tag*{\textbf{Term}}\\
			&a                              \tag*{answer}\\
			&x\,y                              \tag*{app.}\\
			&x[S]                              \tag*{type app.}\\
			&\new{x[c]}                              \tag*{capture app.}\\
			&\LET x = t \IN u                  \tag*{let}\\
			&\new{\LET \langle c, x \rangle = t \IN u}    \tag*{existential let}\\
			v\coloneqq\ &           \tag*{\textbf{Value}}\\
			&\lambda(x: T)t                      \tag*{term lambda}\\
		\end{flalign*}
		\begin{flalign*}
			&\lambda[X<:S]t                      \tag*{type lambda}\\
			&\new{\lambda[c<:B]t}                    \tag*{capt. lambda}\label{syn:capt-lambda}\\
			&\new{\langle C, x\rangle}                    \tag*{pack}\\
			E,\,F\coloneqq\ &           \tag*{\textbf{Existential Type}}\\
			&\new{\exists c.\, T}   \tag*{existential}\\
			&T   \tag*{type}\\
			R,\,S\coloneqq\ &           \tag*{\textbf{Shape Type}}\\
			&\top                    \tag*{top}\\
			&X                    \tag*{type variable}\\
			&\forall(x: T)E             \tag*{term function}\\
			&\forall[X<:S]E             \tag*{type function}\\
			&\new{\forall[c<:B]E}             \tag*{capt. function}\label{syn:capt-fun}\\
   \end{flalign*}
   \begin{flalign*}
			a\coloneqq\ & x \mid v           \tag*{\textbf{Answer}}\\
			\theta\coloneqq\ & x \mid \new{c}           \tag*{\textbf{Capture}}\\
			C,\,D\coloneqq\ &\set{\theta_1,\cdots,\theta_n}     \tag*{\textbf{Capture Set}}\\
			B\coloneqq\ & * \mid C           \tag*{\textbf{Capture Bound}}\\
			T,\,U\coloneqq\ &           \tag*{\textbf{Type}}\\
			&S\capt C                    \tag*{capturing}\\
			&S                    \tag*{pure}\\
			\G,\,\Delta\coloneqq\ &           \tag*{\textbf{Context}}\label{syn:context}\\
			&\emptyset                    \tag*{empty}\\
			&\G, x: T                    \tag*{term binding}\label{syn:term-binding}\\
			&\G, X<:S                    \tag*{type binding}\\
			&\new{\G, c<:B}                    \tag*{capt. binding}\\
	\end{flalign*}
	\end{multicols}}
	\vspace{-3em}
	\caption{Abstract syntax of System \capless{}. Key differences from System \ccformal{} are \tnew{highlighted}.}\label{fig:syntax}
\end{wide-rules} 
\begin{figure*}[htbp]
\footnotesize

\flushleft{\textbf{Typing \quad $\typ{C}{\G}{t}{E}$}}

\vspace{-1em}

\begin{multicols}{3}

\infrule[\ruledef{var}]
{x: S\capt C\in \G}
{\typ{\set{x}}{\G}{x}{S\capt{\set{x}}}}

\infrule[\ruledef{pack}]
{\typ{C'}{\G}{x}{[c:=C]T}}
{\typ{\set{}}{\G}{\langle C, x\rangle}{\EXCAP{c} T}}

\infrule[\ruledef{sub}]
{\typ{C}{\G}{t}{E}\andalso
 \subs{\G}{E}{F}\\
 \subs{\G}{C}{C'}\andalso
 \wf{\G}{C',F}}
{\typ{C'}{\G}{t}{F}}

\infrule[\ruledef{abs}]
{\typ{C}{(\G, x: T)}{t}{E}\andalso
 \wf{\G}{T}}
{\typ{\set{}}{\G}{\lambda(x: T)t}{(\forall(x: T) E)\capt \left(C\setminus x\right)}}

\infrule[\ruledef{app}]
{\typ{C'}{\G}{x}{(\forall(z: T) E)\capt C}\\
 \typ{C'}{\G}{y}{T}}
{\typ{C'}{\G}{x\,y}{[z:=y]E}}

\infrule[\ruledef{tabs}]
{\typ{C}{(\G, X<:S)}{t}{E}\andalso\wf{\G}{S}}
{\typ{\set{}}{\G}{\lambda[X<:S]t}{(\forall[X<:S] E)\capt C}}

\infrule[\ruledef{tapp}]
{\typ{C'}{\G}{x}{(\forall[X<:S] E)\capt C}}
{\typ{C'}{\G}{x[S]}{[X:=S]E}}

\infrule[\ruledef{cabs}]
{\typ{C}{(\G, c<:B)}{t}{E}\andalso \wf{\G}{C}}
{\typ{\set{}}{\G}{\lambda[c<:B]t}{(\forall[c<:B]E)\capt C}}

\infrule[\ruledef{capp}]
{\typ{C'}{\G}{x}{(\forall[c<:D]E)\capt C}}
{\typ{C'}{\G}{x[D]}{[c:=D]E}}

\infrule[\ruledef{let}]
{\typ{C}{\G}{t}{T}\andalso
 \typ{C}{(\G, x: T)}{u}{E}\\ 
 \wf{\G}{C, E}}
{\typ{C}{\G}{\LET x = t\IN u}{E}}

\infrule[\ruledef{let-e}]
{\typ{C}{\G}{t}{\EXCAP{c} T}\\
 \typ{C}{(\G, c<:*, x: T)}{u}{F}\\ 
 \wf{\G}{C, F}}
{\typ{C}{\G}{\LET \langle c, x\rangle = t\IN u}{F}}

\end{multicols}

\vspace{-1.5em}
\flushleft{\textbf{Subcapturing \quad $\subs{\G}{C_1}{C_2}$}}

\begin{multicols}{5}

\infrule[\rruledef{sc-trans}]
{\subs{\G}{C_1}{C_2}\\
 \subs{\G}{C_2}{C_3}}
{\subs{\G}{C_1}{C_3}}

\infrule[\rruledef{sc-var}]
{x : S\capt C \in \G}
{\subs{\G}{\set{x}}{C}}

\infrule[\rruledef{sc-bound}]
{c <: C\in \G}
{\subs{\G}{\set{c}}{C}}

\infrule[\rruledef{sc-elem}]
{C_1\subseteq C_2}
{\subs{\G}{C_1}{C_2}}

\infrule[\rruledef{sc-set}]
{\subs{\G}{C_1}{C}\\ 
 \subs{\G}{C_2}{C}}
{\subs{\G}{C_1\cup C_2}{C}}

\end{multicols}

\vspace{-1.6em}
\flushleft{\textbf{Bound Subtyping \quad $\subs{\G}{B_1}{B_2}$} \text{same as subcapturing plus} $\subs{\G}{B}{*}$}

\flushleft{\textbf{Subtyping \quad $\subs{\G}{E_1}{E_2}$}}

\vspace{-1em}

\begin{multicols}{4}

\infax[\ruledef{top}]
{\subs{\G}{S}{\top}}

\infax[\ruledef{refl}]
{\subs{\G}{E}{E}}

\infrule[\ruledef{trans}]
{\subs{\G}{E_1}{E_2}\\ \subs{\G}{E_2}{E_3}}
{\subs{\G}{E_1}{E_3}}

\infrule[\ruledef{tvar}]
{\\
 X<:S\in\G}
{\subs{\G}{X}{S}}

\infrule[\ruledef{capt}]
{\subs{\G}{S_1}{S_2}\\\subs{\G}{C_1}{C_2}}
{\subs{\G}{S_1\capt C_1}{S_2\capt C_2}}
  
\end{multicols}

\begin{multicols}{2}

\infrule[\ruledef{exist}]
{\subs{(\G, c<:*)}{T_1}{T_2}}
{\subs{\G}{\exists c.\, T_1}{\exists c.\, T_2}}

\infrule[\ruledef{fun}]
{\subs{(\G, x: T_2)}{E_1}{E_2}
 \andalso\subs{\G}{T_2}{T_1}}
{\subs{\G}{\forall(x: T_1) E_1}{\forall(x: T_2) E_2}}

\infrule[\ruledef{tfun}]
{\subs{(\G, X<:S_2)}{E_1}{E_2}\andalso\subs{\G}{S_2}{S_1}}
{\subs{\G}{\forall[X<:S_1] E_1}{\forall[X<:S_2] E_2}}

\infrule[\ruledef{cfun}]
{\subs{(\G, c<:B_2)}{E_1}{E_2}\\
 \subs{\G}{B_2}{B_1}}
{\subs{\G}{\forall[c<:B_1] E_1}{\forall[c<:B_2] E_2}}

\end{multicols}

\vspace{-1.5em}
\caption{Typing rules of System \capless{}.}\label{fig:all-typing}

\end{figure*}
 
System \capless{} (\Cref{fig:syntax,fig:all-typing}) is a new foundation of capturing types. It
models the essence of our new capture checker implementation in Scala 3 (\Cref{sec:evaluation}).
Formally, it is a version of System \ccformal{} \cite{DBLP:journals/toplas/BoruchGruszeckiOLLB23},
the main difference being (1) the removal of the universal capability $\CAP$, and (2) having
explicit capture quantifications. 
\subsection{Syntax}\label{sec:capless-syntax}

The syntax of System \capless{} is shown in \Cref{fig:syntax}, highlighting
key differences to \ccformal{} \cite{DBLP:journals/toplas/BoruchGruszeckiOLLB23}.
Just like \ccformal{}, we represent programs in monadic normalform (MNF)~\cite{DBLP:conf/popl/HatcliffD94}.
This has the advantage that substitutions in dependent applications are always variable
renamings and preserve the structure of types.
The main difference stems from dropping the top capture set $\CAP$ in favor of explicit bounded universal
and existential capture quantification, which behave similarly to the respective quantifiers in
System F$_\leq$ for types. Accordingly, capture sets $C$ can now also mention capture variables $c$ next to
term variables $x$.

Capture quantification is bounded by a capture bound $B$ which can be either a concrete
capture set upper bound or \emph{unbounded} (denoted by $*$), though note that there is no ``top''
capture set any longer. When the bound $B$ is omitted, it defaults to $*$.

Existential capture quantification is unbounded and second class,
i.e., confined to the top level in the type system and function result types.
The typing judgment
assigns types from the syntax category $E$ to terms.
The argument types of term functions only range over the syntax category $T$.
This restriction is a deliberate design choice aimed at the minimality of the system.
It does not compromise expressiveness,
as functions that would otherwise take existential types as arguments
can always be equivalently expressed using universal capture quantification
followed by ordinary term functions.
Concretely, the type $\forall(z\!: \EXCAP{c}T)E$
is equivalently represented as
$\forall[c]\forall(z : T)E$.

\subsection{Type System}\label{sec:capless-typing}

The typing judgment $\typ{C}{\G}{t}{E}$  in \Cref{fig:all-typing} states that
term $t$ has existential type $E$ under context $\G$ with use set $C$.
Intuitively,
the use set $C$ are
the set of capabilities that will \emph{at most} be \emph{used}
by the evaluation of term $t$.
The use set of all values is empty, as they are already evaluated.
Tracking use sets
in the judgment improves over the previous System \ccformal{}
\cite{DBLP:journals/toplas/BoruchGruszeckiOLLB23} in two important ways: (1) capture tracking for
curried functions behaves more precisely akin to effect systems, i.e., captures of subsequent
function arrows do not accumulate on the current one, and (2) a more streamlined handling of let bindings.
We explain them in detail when discussing corresponding typing rules.

Typing rules (\Cref{fig:all-typing}) are mostly identical to System \ccformal{} with the
main difference
that typing potentially assigns an existential type $E$
and that there are changes and additions due to the reformulation with use sets 
and the introduction of universal and existential capture quantifications.

To type a variable $x$ in context \ruleref{var}, it has to be included in the use set.
Just like in \ccformal{}, the capture set is refined to $\{x\}$ in the assigned type
and the capture set $C$ in the environment can be recovered via subcapturing.
The subtyping rule \ruleref{sub} allows refining both the use set and the type.

Type abstraction \ruleref{tabs} and type application \ruleref{tapp} is mostly standard,
and follows System F$_\leq$, again remarking that values are pure, having an empty use set.
Following \ccformal{},
type application takes pure types without capture qualifiers.
The rules for universal \ruleref{cabs}, \ruleref{capp} and existential capture
quantification \ruleref{pack}, \ruleref{let-e} are in the spirit of System F$_\leq$ and hence
unsurprising.

Introducing and eliminating a $\lambda$-abstraction in \ruleref{abs} and \ruleref{app} is for the most part standard. In terms of captures, a $\lambda$ is pure, i.e., having an empty use set, and the use set
of the body $t$ is annotated to the function type. 
Thanks to the tracking of use sets in the typing judgment,
the capture tracking of curried functions is more precise compared to \cite{DBLP:journals/toplas/BoruchGruszeckiOLLB23}.
For instance, consider the term 
\(\small
\lambda(x_1:\textsf{Unit}).\ \LET z_0 = \textsf{logger}.\textsf{log}(\cdots)\IN\ \lambda(x_2:\textsf{Unit}).\,\textsf{console}.\textsf{readInt}()
\).
This term has type $(\forall(x_1:\textsf{Unit})(\forall(x_2:\textsf{Unit})\textsf{Int})\capt\set{\textsf{console}})\capt\set{\textsf{logger}}$,
while in the previous system \cite{DBLP:journals/toplas/BoruchGruszeckiOLLB23}
it would have been $(\forall(x_1:\textsf{Unit})(\forall(x_2:\textsf{Unit})\textsf{Int})\capt\set{\textsf{console}})\capt\set{\textsf{console},\textsf{logger}}$.
Our system has a better account of
\emph{when} the captured capabilities are used.

\subsubsection{Subtyping and Subcapturing}

Subcapturing and subtyping rules (\Cref{fig:all-typing}) follow System \ccformal{}.
The former is a preorder on capture sets that subsumes set inclusion, plus the more interesting rule \rruleref{sc-var}
which ``\emph{reflects an essential property of object capabilities}''~\cite{DBLP:journals/toplas/BoruchGruszeckiOLLB23}, namely that
the singleton capture set $\set{x}$ refines/derives from the capabilities $C$ from which $x$ was created.
For the most part, subtyping integrates subcapturing with the standard subtyping rules 
for kernel System F$_\leq$ unsurprisingly.
In addition,
the rules \ruleref{exist} and \ruleref{cfun} are for the new quantification forms.

\subsubsection{Let Bindings and Avoidance}
Let bindings \ruleref{let} are typed in a standard manner. Due to dependent typing, we need to
avoid mentions of the bound variable in the result type, which we enforce by subtyping.
Well-formedness of capture sets $\wf{\G}{C}$ and its lifting to types $\wf{\G}{E}$ 
(defined in \appendixref{A.1}{sec:well-formedness}) enforce that all mentioned variables are in scope.
Note that the locally bound variable $x$ is not dropped from the use set $C$.
Instead, it needs to be avoided in the use set with subcapturing.
This, along with preciser curried function types,
enables reasoning about \emph{when} the captured capabilities are used.
For instance, assuming $x_f : (\forall(x:\textsf{Unit})(\forall(y:\textsf{Unit})\textsf{Int})\capt\set{\textsf{console}})\capt\set{\textsf{logger}}$ and $\textsf{unit}$ a constant of type $\textsf{Unit}$,
the term $\LET z = x_f\,\textsf{unit}\IN z\,\textsf{unit}$
has the use set $\set{\textsf{console},\textsf{logger}}$.
This is because $z$, whose type is $(\forall(y:\textsf{Unit})\textsf{Int})\capt\set{\textsf{console}}$,
is mentioned in the body of the let binding.
To avoid the locally-bound $z$ in the use set,
we must widen $z$ to $\set{\textsf{console}}$.
Conversely, the following term,
which applies $x_f$ once and discards the result,
will only have the use set $\set{\textsf{logger}}$: $\LET z = x_f\,\textsf{unit}\IN \textsf{unit}$.
In the previous system \cite{DBLP:journals/toplas/BoruchGruszeckiOLLB23},
both terms capture $\set{\textsf{console}, \textsf{logger}}$
though the latter did not use \textsf{console}.
Besides,
the premises and the conclusion share the same use set $C$,
which is more uniform and streamlined.
One can always find a use set that accounts for all the capabilities used by the premises
and use \ruleref{sub} to make this rule applicable.

Beyond providing a theoretical basis for rcaps,
System \capless{} itself is a more principled and expressive foundation for capture tracking
that offers additional theoretical advantages,
which are further discussed in\condtext{}{ our report} \appendixref{B.2}{sec:capless:theoretical-benefits}.
The evaluation rules are almost identical to System \ccformal{},
and we elide them for brevity.
See \appendixref{A.2}{sec:reduction} for the full details.

\section{System \cappy{}: A Surface-Language Calculus for Reach Capabilities}\label{sec:cappy}

System \cappy{} is a core calculus that models the surface language of capture checking with \lstinline|cap|s and reach capabilities.
It is built on the previous System \ccformal{} by Boruch-Gruszecki et al.~\cite{DBLP:journals/toplas/BoruchGruszeckiOLLB23}.
We support the same lightweight end-user notation in types and extend it with \lstinline|@use|
parameters and reach capabilities \lstinline|x*| motivated in \Cref{sec:motivation}.
The semantics of \cappy{} is defined in terms of a type-preserving translation to System \capless{}
with explicit capture quantifiers (\Cref{sec:capless,sec:metatheory}).

\subsection{Syntax}\label{sec:cappy-syntax}
\begin{wide-rules}\noindent
	{\footnotesize\begin{multicols}{3}\noindent
		\begin{flalign*}
			x,\,y,\,z,\,\CAP         \tag*{\textbf{Variable}}\\
			X,\,Y,\,Z                \tag*{\textbf{Type Variable}}\\
			\kappa                \tag*{\textbf{Typedef Name}}\\
			s,\,t,\,u\coloneqq\ &           \tag*{\textbf{Term}}\\
			&a                              \tag*{answer}\\
			&x\,y                              \tag*{application}\\
			&x[S]                              \tag*{type application}\\
			&x[C]                              \tag*{capt. application}\\
			&\LET x = t \IN u                  \tag*{let}\\
			&C\UNBOX x                  \tag*{unbox}\\
			v\coloneqq\ &           \tag*{\textbf{Value}}\\
			&\lambda^\alpha(x: T)t                      \tag*{function}\\
			&\lambda[X<:\top]t                      \tag*{type func.}\\
   \end{flalign*}
	 \begin{flalign*}
			&\lambda[c]t                      \tag*{capture function}\\
			&\BOX x                    \tag*{box}\\
			C,\,D\coloneqq\ &\set{\theta_1,\cdots,\theta_n}     \tag*{\textbf{Capture Set}}\\
			R,\,S\coloneqq\ &           \tag*{\textbf{Shape Type}}\\
			&\top                    \tag*{top}\\
			&X                    \tag*{type variable}\\
			&\forall^\alpha(x: T)U             \tag*{function}\\
			&\forall[X<:\top]T             \tag*{type function}\\
			&\forall[c]T             \tag*{capture function}\\
			&\BOX T             \tag*{boxed}\\
			&\kappa[T_1,\cdots,T_n]  \tag*{applied type}\\
			\alpha\coloneqq\ & \epsilon \mid \USE           \tag*{\textbf{Use Annotation}}\\
			{d}\coloneqq\ & (X_1^{\nu_1},\cdots,X_n^{\nu_n})\mapsto S \tag*{\textbf{Type Definition}}\\
   \end{flalign*}
   \begin{flalign*}
			\theta\coloneqq\ & x \mid {x^*}           \tag*{\textbf{Capture}}\\
			a\coloneqq\ & x \mid v           \tag*{\textbf{Answer}}\\
			{\nu}\coloneqq\ & + \mid -       \tag*{\textbf{Variance}}\\
			T,\,U\coloneqq\ &           \tag*{\textbf{Type}}\\
			&S\capt C                    \tag*{capturing}\\
			&S                    \tag*{pure}\\
			\G,\,\Delta\coloneqq\ &           \tag*{\textbf{Context}}\\
			&\emptyset                    \tag*{empty}\\
			&\G, x: T                    \tag*{term binding}\\
			&\G, X<:\top                    \tag*{type binding}\\
			&\G, c                    \tag*{capture binding}\\
			{\cctx}\coloneqq\ &           \tag*{\textbf{Typedef Context}}\\
			&\emptyset                    \tag*{empty}\\
			&\cctx, \kappa=d                    \tag*{typedef}\\
	\end{flalign*}
	\end{multicols}}
	\vspace{-3em}
	\caption{Abstract syntax of System \cappy{}.}
	\label{fig:cappy-syntax}
\end{wide-rules} 
Figure~\ref{fig:cappy-syntax} shows the syntax of System \cappy{}.
The syntax is close to that of System \ccformal{} \cite{DBLP:journals/toplas/BoruchGruszeckiOLLB23},
basically System F$_\leq$ with captures and boxing.
The main difference is the addition of reach capabilities $x^*$, the use-annotation $\alpha$,
and applied types $\kappa[T_1,\cdots,T_n]$.
Following System \capless{} and \ccformal{}, 
terms are in monadic normal form (MNF) \cite{DBLP:conf/popl/HatcliffD94}, i.e., arguments to operations are always
let-bound variables. 
In the formal syntax, the presence/absence of \lstinline|@use| annotations on parameters is encoded by
annotations $\alpha$, e.g., \lstinline|(@use x: box (File^C)) ->{D} Int|
becomes $(\forall^\USE (x: \BOX (\mathsf{File} \capt C))\mathsf{Int})\capt D$.

Type abstractions take pure type arguments, i.e., type parameters are not qualified with captures
and qualified type arguments must be put into boxes. Boxing plays a crucial role in
enforcing the scope-safety of capabilities \cite{DBLP:journals/toplas/BoruchGruszeckiOLLB23}.
Type parameters are also unbounded (i.e., always being bounded by $\top$). This
choice arises from the translation of System \cappy{} into \capless{} (cf.~\Cref{sec:translation-to-capless}).

An applied type $\kappa[T_1,\cdots,T_n]$ applies a type definition $\kappa$ to the type arguments.
A global type definition context $\cctx$ that contains a list of type definitions is assumed.
In other words, System \cappy{} is parameterized by a type definition context $\cctx$.
A type definition $\kappa = (X_1^{\nu_1},\cdots,X_n^{\nu_n})\mapsto S$ acts like a type macro that,
given a list of type arguments $T_1,\cdots,T_n$,
expands into the type $[X_1:=T_1,\cdots,X_n:=T_n]S$.
$\nu_i$ is the variance of the $i$-th type argument,
which can be either $+$ (covariant) or $-$ (contravariant).
Type definitions enable control over reach-capability expansion (cf.~\Cref{sec:motiv:typedefs}) and
patterns like church-encoded data types in System \cappy{}.
Their interaction with reach refinement will be discussed in \Cref{sec:cappy:type-definitions}.

\subsection{Type System}\label{sec:cappy-statics}\label{sec:cappy-typing}

\begin{figure*}[htbp]
\footnotesize

\flushleft{\textbf{Typing \quad $\typ{C}{\G}{t}{T}$}}

\vspace{-1em}

\begin{multicols}{4}

\infrule[\rruledef{var}]
{x: S\capt C\in \G\\\RefineReach{\set{x^*}}{S}{S'}}
{\typ{\set{x}}{\G}{x}{S'\capt\set{x}}}

\infrule[\rruledef{sub}]
{\typ{C'}{\G}{a}{T'}\\
 \subs{\G}{C'}{C}\andalso
 \subs{\G}{T'}{T}\\
 \wf{\G}{C, T}}
{\typ{C}{\G}{a}{T}}

\infrule[\rruledef{box}]
{\\ \typ{C'}{\G}{x}{S\capt C}}
{\typ{\set{}}{\G}{\BOX x}{\BOX (S\capt C)}}

\infrule[\rruledef{unbox}]
{\typ{C}{\G}{x}{\BOX (S\capt C)}}
{\typ{C}{\G}{C\UNBOX x}{S\capt C}}

\end{multicols}

\begin{multicols}{2}

\infrule[\rruledef{abs}]
{\typ{C}{\G, x: T}{t}{U}\andalso
 \wf{\G}{T}\andalso
 x^*\notin C\ \text{if $\alpha = \epsilon$}}
{\typ{\set{}}{\G}{\lambda^\alpha (x: T)t}{(\forall^\alpha(x: T) U)\capt \left(C\setminus \set{x,x^*}\right)}}

\infrule[\rruledef{app}]
{\typ{C'}{\G}{x}{(\forall^\alpha(z: T) U)\capt C}\andalso
 \typ{C'}{\G}{y}{S\capt D}\\
 \subs{\G}{S\capt D}{T}\andalso
 \subs{\G}{\dcs{\G}{S}}{C'}\ \text{if $\alpha = \USE$}}
{\typ{C'}{\G}{x\,y}{[z^*:=_{+}\dcs{\G}{S}][z:=y]U}}

\end{multicols}

\begin{multicols}{2}

\infrule[\rruledef{cabs}]
{\typ{C}{\G,c}{\lambda[c]t}{T}\andalso\wf{\G}{C}}
{\typ{\set{}}{\G}{\lambda[c]t}{(\forall[c]T)\capt C}}

\infrule[\rruledef{capp}]
{\typ{C}{\G}{x}{(\forall[c]T)\capt C'}\andalso\wf{\G}{D}}
{\typ{C}{\G}{x[D]}{[c:=D]T}}
  
\end{multicols}

\begin{multicols}{3}

\infrule[\rruledef{tabs}]
{\typ{C}{\G, X<:\top}{t}{T}}
{\typ{\set{}}{\G}{\lambda[X<:\top]t}{(\forall[X<:\top] T)\capt C}}

\infrule[\rruledef{tapp}]
{\typ{C}{\G}{x}{(\forall[X<:\top] T)\capt C'}\\ \CAP\notin\dcs{\G}{S}}
{\typ{C}{\G}{x[S]}{[X:=S]T}}

\infrule[\rruledef{let}]
{\typ{C}{\G}{t}{T}\andalso
 \typ{C}{(\G, x: T)}{u}{U}\\
 \wf{\G}{C,U}}
{\typ{C}{\G}{\LET x = t\IN u}{U}}

\end{multicols}

\vspace{-2em}

\flushleft{\textbf{Subcapturing \quad $\subs{\G}{C_1}{C_2}$}\ \text{same as \Cref{fig:all-typing} but without the \rruleref{sc-bound} rule}}

\flushleft{\textbf{Subtyping \quad $\subs{\G}{T_1}{T_2}$}}

The \ruleref{top} and \ruleref{capt} rules are the same as in Figure~\ref{fig:all-typing}.
The \ruleref{trans} and \ruleref{refl} rules are the same as in Figure~\ref{fig:all-typing} but work on capturing types.

\vspace{-0.5em}

\begin{multicols}{2}

\infrule[\rruledef{boxed}]
{\subs{\G}{T_1}{T_2}}
{\subs{\G}{\BOX T_1}{\BOX T_2}}

\infrule[\rruledef{cfun}]
{\subs{(\G,c)}{T_1}{T_2}}
{\subs{\G}{\forall[c]T_1}{\forall[c]T_2}}

\end{multicols}

\begin{multicols}{2}

\infrule[\rruledef{fun}]
{\subs{\G, x: T_2}{U_1}{U_2}\andalso
 \subs{\G}{T_2}{T_1}\andalso
 \alpha_1\preceq \alpha_2}
{\subs{\G}{\forall^{\alpha_1}(x: T_1) U_1}{\forall^{\alpha_2}(x: T_2) U_2}}

\infrule[\rruledef{tfun}]
{\subs{\G, X<:\top}{T_1}{T_2}}
{\subs{\G}{\forall[X<:\top] T_1}{\forall[X<:\top] T_2}}

\end{multicols}

\begin{multicols}{2}

\infrule[\rruledef{applied-p}]
{\kappa=(X_1^{\nu_1},\cdots, X_i^+,\cdots, X_n^{\nu_n})\mapsto S\in\cctx\andalso
 \subs{\G}{T_i}{T'_i}}
{\subs{\G}{\kappa[T_1,\cdots,T_i,\cdots,T_n]}{\kappa[T_1,\cdots,T'_i,\cdots,T_n]}}

\infrule[\rruledef{applied-m}]
{\kappa=(X_1^{\nu_1},\cdots, X_i^-,\cdots, X_n^{\nu_n})\mapsto S\in\cctx\andalso
 \subs{\G}{T'_i}{T_i}}
{\subs{\G}{\kappa[T_1,\cdots,T_i,\cdots,T_n]}{\kappa[T_1,\cdots,T'_i,\cdots,T_n]}}

\end{multicols}

\infrule[\rruledef{dealias}]
{\kappa=(X_1^{\nu_1},\cdots,X_n^{\nu_n})\mapsto S\in\cctx\andalso
 \forall X_i^+,\CAP\notin\dcs{\G}{T_i}}
{\subs{\G}{\kappa[T_1,\cdots,T_n]}{[X_1:=T_1,\cdots,X_n:=T_n]S}\andalso
 \subs{\G}{[X_1:=T_1,\cdots,X_n:=T_n]S}{\kappa[T_1,\cdots,T_n]}}

\flushleft{\textbf{Reach Refinement \quad $\RefineReach{C}{T}{U}$}}

\vspace{-1em}

\begin{multicols}{4}

\infax[\rruledef{r-top}]
{\RefineReach{C}{\top}{\top}}

\infax[\rruledef{r-tvar}]
{\RefineReach{C}{X}{X}}

\infrule[\rruledef{r-capt}]
{\RefineReach{D}{S}{S'}}
{\RefineReach{D}{S\capt C}{S'\capt [\CAP:=D]C}}

\infrule[\rruledef{r-cfun}]
{\RefineReach{D}{T}{T'}}
{\RefineReach{D}{\forall[c]T}{\forall[c]T'}}

\infrule[\rruledef{r-boxed}]
{\RefineReach{D}{T}{T'}}
{\RefineReach{D}{\BOX T}{\BOX T'}}

\end{multicols}
\begin{multicols}{3}

\infrule[\rruledef{r-tfun}]
{\RefineReach{D}{T}{T'}}
{\RefineReach{D}{\forall[X<:\top]T}{\forall[X<:\top]T'}}

\infax[\rruledef{r-fun}]
{\\
 \RefineReach{D}{\forall^\alpha(z:T)U}{\forall^\alpha(z:T)U}}

\infrule[\rruledef{r-applied}]
{\kappa=(X_1^{\nu_1},\cdots,X_n^{\nu_n})\mapsto S\in\cctx\\
 \forall X_i^+,\RefineReach{D}{T_i}{T'_i}\andalso
 \forall X_i^-,T'_i=T_i}
{\RefineReach{D}{\kappa[T_1,\cdots,T_n]}{\kappa[T'_1,\cdots,T'_n]}}

\end{multicols}

\vspace{-1.5em}
\caption{Static semantics of System \capcalculus{}.}\label{fig:cappy-all-typing}
\end{figure*}

The typing judgement $\typ{C}{\G}{t}{T}$ in \Cref{fig:cappy-all-typing} %
is formulated with use sets
like System \capless{} (\Cref{sec:capless}).
Rule \rruleref{var} mostly matches the one in System \capless{}.
The additional
\emph{reach refinement} (cf.\ \Cref{sec:motiv:reach-capabilities}) of the variable's assumed type
$S$ to $S'$, which replaces certain occurrences of $\CAP$ in $S$ with the reach capability
$x^*$.
We explain the details of reach refinement later in \Cref{sec:reach-refinement}.

Rules governing box introduction \rruleref{box} and box elimination \rruleref{unbox} follow
\ccformal{}. Boxes are considered to be pure values, thus they have an empty use set, and
the unboxing operation is only allowed in a context where the use set matches the box's capabilities, as before.

The introduction and elimination of type abstractions \rruleref{tabs} and \rruleref{tapp} are standard.
$\CAP$ is not allowed in the deep capture set of the type argument $S'$ in \rruleref{tapp},
as it leads to ambiguity in the meaning of $\CAP$s
and breaks the translation from \cappy{} to \capless{} (cf. \Cref{sec:translation-to-capless}).

Let bindings \rruleref{let} are typed in a standard manner.
The mention of locally bound variable $x$
has to be avoided in the use set $C$ and the result type $U$
by subcapturing and subtyping, respectively.
We discuss the more
interesting typing rules for abstraction and application next.

\subsubsection{Abstraction and Application}\label{sec:cappy-fun-types}

The \rruleref{abs} rule mostly follows that of System \capless{}.
A function that does not declare its
parameter $x$ as used ($\alpha = \epsilon$) is accordingly barred from using the associated reach
capability $x^*$ in the use set $C$ of the body.

Dependent function application \rruleref{app} resolves reach capabilities. The reach capability of a function with use-parameter ($\alpha = \USE$) is
substituted with the call-site argument's \emph{deep capture set}:

\begin{definition}[Deep Capture Set]
The deep capture set of a type $T$ under context $\G$,
denoted as $\dcs{\G}{T}$,
is defined as follows:
$$\small
\begin{array}{r@{\hspace{5pt}}c@{\hspace{5pt}}l@{\hspace{20pt}}r@{\hspace{5pt}}c@{\hspace{5pt}}l}
\dcs{\G}{\top} &=& \set{} & \dcs{\G}{\forall[X<:S]T} &=& \dcs{\G}{T}\\
\dcs{\G}{X} &=& \dcs{\G}{S}\quad\text{if $X<:S\in\G$} & \dcs{\G}{\BOX T} &=& \dcs{\G}{T}\\
\dcs{\G}{\forall^\alpha (z: T)U} &=& \dcs{\G}{U}\setminus\set{z,z^*} & \dcs{\G}{S\capt C} &=& \dcs{\G}{S}\cup C\\
& & & \dcs{\G}{\forall[c]T} &=& \dcs{\G}{T}\setminus\set{c}\\
\dcs{\G}{\kappa[T_1,\cdots,T_n]} &=& \bigcup_{X_i^+}\dcs{\G}{T_i}&\hspace{-5em}\text{ if }\kappa = (X_1^{\nu_1},\cdots,X_n^{\nu_n})\mapsto S\in\cctx& & \\
\end{array}
$$
\end{definition}

The deep capture set notion $\dcs{\G}{T}$ formally answers "what's in the box?" of a boxed capture type. %
Our definition generalizes to a richer language like Scala with generic type constructors, e.g., $\dcs{\G}{\mathsf{List}[S]} = \dcs{\G}{S}$.
Variance-aware substitution $[x^*:=_{\nu} C]$
replaces covariant occurrences of $x^*$ with $C$
and contravariant ones with $\set{}$.
$\nu$ can be either $+$ (covariant) or $-$ (contravariant).
The starting variance is $+$. 
This ensures the subtyping relation is preserved under substitution.
 
\subsubsection{Subcapturing and Subtyping}\label{sec:cappy-subtyping}

The subsumption rule \rruleref{sub} allows refining the use set $C$ to $C'$
and the type $T$ to $T'$.
It can only be applied on an answer $a$, which is either a value or a variable
due to the translation mechanism from \cappy{} to \capless{},
which will be discussed in \Cref{sec:translation-to-capless}.

Like System \capless{}, subcapturing and subtyping follows those of System \ccformal{} \cite{DBLP:journals/toplas/BoruchGruszeckiOLLB23},
which are mostly unsurprising. 
We define an ordering $\preceq$ on annotations defined by $\alpha\preceq\alpha$ and $\epsilon\preceq\USE$, i.e.,
non-use functions can pass for use functions, but not vice versa (as in \rruleref{fun}).
The rules \rruleref{applied-p} and \rruleref{applied-m}
supports argument subtyping for applied types with respect to the variance.
\rruleref{dealias} deals with the expansion of applied types.

\subsubsection{Reach Refinement}\label{sec:reach-refinement}
Reach refinement $\RefineReach{C}{T}{U}$ replaces certain covariant occurrences of $\CAP$ in type $T$
with the capture set $C$, where $C$ is often the reach capability of a variable (e.g., in the \rruleref{var} rule). 
As shown in Figure~\ref{fig:cappy-all-typing}, refinement is defined recursively on the structure of types.
The base cases for top types and type variables \rruleref{r-top} and \rruleref{r-tvar} leave the type unchanged.
For capturing types \rruleref{r-capt}, refinement recursively applies to the shape type and directly substitutes occurrences of $\CAP$ in the capture set with $C$.
For boxed types \rruleref{r-boxed} and type functions \rruleref{r-tfun}, refinement is applied to the inner type.
For applied types \rruleref{r-applied}, refinement is applied to covariant type arguments
and contravariant ones are left unchanged.
Reach refinement touches neither the domain nor the codomain of function types \rruleref{r-fun},
which is due to the translation scheme discussed in \Cref{sec:motiv:reach-capabilities}.
This is further discussed in\condtext{}{ the technical report} \appendixref{B.3}{sec:reach-refinement-function-types}.

\subsubsection{Type Definitions}\label{sec:cappy:type-definitions}
Admittedly, this restriction that prevents refining function types indeed imposes a loss in expressiveness.
For example, church-encoded data types cannot be properly refined.
Nevertheless,
we can recover the expressiveness thanks to type definitions and applied types.
As an example, consider the church-encoded pair type:
{\footnotesize
\[
\textsf{Pair} = (X_1^+, X_2^+)\mapsto \forall[X_R<:\top]\forall({z}: X_1\Rightarrow X_2\Rightarrow X_R) X_R
\]}

\noindent Assume that the applied type $\textsf{Pair}[\BOX \textsf{IO}\capt\set{\CAP}, \BOX \textsf{IO}\capt\set{\CAP}]$
is bound to a variable $x$.
\rruleref{r-applied} refines this type to $\textsf{Pair}[\BOX \textsf{IO}\capt\set{x^*}, \BOX \textsf{IO}\capt\set{x^*}]$,
which can then be expanded to
$\forall[X_R<:\top]\forall({z}: \BOX \textsf{IO}\capt\set{x^*}\Rightarrow \BOX \textsf{IO}\capt\set{x^*}\Rightarrow X_R) X_R$
by the \rruleref{dealias} rule in subtyping.
Essentially, applied types change the way the existential quantifications are scoped:
In particular, the applied type
$\textsf{Pair}[\BOX \textsf{IO}\capt\set{\CAP}, \BOX \textsf{IO}\capt\set{\CAP}]$
is interpreted as 
$\EXCAP{c}\textsf{Pair}[\BOX \textsf{IO}\capt\set{c}, \BOX \textsf{IO}\capt\set{c}]$.
The existential variable is quantified at the outer-level.
The $\CAP$s can thus safely be replaced by $x^*$ in the refinement.

\subsection{Translation to System \capless{}}
\label{sec:translation-to-capless}

System \cappy{} is a surface language that ``desugars'' to \capless{}
in terms of a type-preserving translation (\Cref{sec:type-preserving-translation}).
The core idea of the translation is to
recursively convert
the occurrences of $\CAP$ in parameter positions
to universal capture parameters,
and those in function result types to existential quantifications.
Consider the System \cappy{} type of the \textsf{mkIterator} function (\Cref{sec:motiv:reach-capabilities}):
{\footnotesize
\[
\forall[X<:\top]\forall^\USE(x: \textsf{List}[\BOX (\textsf{Unit}\Rightarrow X)])\textsf{Iterator}[X]\capt\set{x^*}
\]
}
It translates to:
{\footnotesize
\[
\forall[X<:\top]\forall[c_x<:\set{}]\forall[c_{x^*}](\forall(x: \textsf{List}[\BOX (\textsf{Unit}\rightarrow X)\capt\set{c_{x^*}}]\capt\set{c_x})\textsf{Iterator}[X]\capt\set{c_{x^*}})\capt\set{c_{x^*}}
\]
}

\noindent
Here, two universal capture parameters $c_x$ and $c_{x^*}$ are introduced.
$c_x$ corresponds to the outermost capture set of the parameter $x$,
and is upper-bounded by an empty set in this case.
$c_{x^*}$ corresponds to the $\CAP$s inside the box of $x$,
which is exactly the \emph{meaning} of the reach capability $x^*$.
$x$ is a use-parameter, thus $c_{x^*}$ is allowed to be used in the body of the function.

One surprising aspect of the translation is that
subtyping in the source language
induces term transformations in the target language.
Consider a \cappy{} term $a$ whose type $T$ is a subtype of $U$.
Let $T'$ and $U'$ be the translated types of $T$ and $U$ respectively,
and $a'$ be the translated term of $a$ of type $T'$.
In general, $a'$ does not directly conform to the type $U'$.
It needs to be transformed to a term $a''$ to conform to the type $U'$.
In fact,
widening a capture set $C$ to $\set{\CAP}$ in the source language
corresponds to packing the capture set into an existential in the target language.
For instance, consider the types $\forall(x: \textsf{File}\capt\set{\CAP}) \textsf{File}\capt\set{x}$ and 
$\forall(x: \textsf{File}\capt\set{\CAP})\textsf{File}\capt\set{\CAP}$. The first type is a subtype of the second in \cappy{}.
However, their translations differ:
{\footnotesize
\begin{align*}
&\forall(x: \textsf{File}\capt\set{\CAP}) \textsf{File}\capt\set{x} && \text{translates to} && \forall[c_{x}]\forall(x: \textsf{File}\capt\set{c_x}) \textsf{File}\capt\set{c_x} \\
&\forall(x: \textsf{File}\capt\set{\CAP})\textsf{File}\capt\set{\CAP} && \text{translates to} && \forall[c_{x}]\forall(x: \textsf{File}\capt\set{c_x}) \EXCAP{c}\textsf{File}\capt\set{c}
\end{align*}
}

\noindent To transform a term of the first translated type to one of the second type, we must perform an eta-expansion to pack the capture set $\set{c_x}$ into the existential quantification:
{\footnotesize
\[
\lambda[c_{x}]\lambda(x: \textsf{File}\capt\set{c_x}).\<\set{c_x}, a_f\,x\>
\]
}
where $a_f$ is the translated version of the original function.

The necessity of term transformations motivates us to restrict subtyping on answers \rruleref{sub},
which ease the transformation process;
and to restrict type parameters to be unbounded,
due to the lack of term-level witnesses for subtyping over type parameters.
When translating subtyping derivations to System \capless{}, 
we require term-level witnesses that can be transformed to adapt between different translated types.
However, subtyping between bounded type parameters exists purely at the type level 
with no corresponding term representation that could be transformed during translation.
For instance, consider a type function $f: \forall[X <: (\textsf{IO}\capt\set{\textsf{io}} \rightarrow \textsf{Unit})] \ldots$
and an application $f[\textsf{IO}\capt\set{\CAP} \rightarrow \textsf{Unit}]$,
where $\textsf{IO}\capt\set{\CAP} \rightarrow \textsf{Unit} <: \textsf{IO}\capt\set{\textsf{io}} \rightarrow \textsf{Unit}$.
Translating this application form requires the source-language subtyping relation which materializes
into term transformation in the target language,
but no term is available to be adapted at the type-application site.
Since we cannot produce the necessary term-level adaptations for bounded type parameters,
we restrict all type parameters to be unbounded, so that the translation remains complete.
\section{Metatheory}\label{sec:metatheory}

We prove the type soundness of System \capless{} (\Cref{sec:capless}),
through standard progress and preservation theorems as well as its scope safey.
Those proofs are fully mechanized in Lean 4.

In addition,
we relate the surface-language System \cappy{} (\Cref{sec:cappy})
to the core language System \capless{}
via a type-preserving translation.
Specifically, any well-typed program in \cappy{} can be translated
to a well-typed one in \capless{} with an equivalent type.

\subsection{Type Soundness}

We take a standard syntactic approach towards type soundness, proving the following theorems.

\begin{theorem}[Preservation]
\label{theorem:preservation}
If
(1) ${\typSta{\sta}{\G}}$,
(2) $\typ{C}{\G}{t}{E}$,
and (3) $\<\sta\BAR t\>\red\<\sta'\BAR t'\>$,
then there exist $C'$ and $\Delta$ such that
(1) ${\typSta{\sta'}{(\G,\Delta)}}$;
and (2) $\typ{C'}{(\G,\Delta)}{t'}{E}$.
\end{theorem}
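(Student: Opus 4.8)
The plan is to prove preservation by case analysis on the reduction step in hypothesis~(3), invoking the induction hypothesis only in the congruence case for $\LET$. Before the cases, I would establish the standard infrastructure: a \emph{weakening} lemma (typing, subtyping, and subcapturing are stable under extending $\G$ with fresh bindings, which is exactly what lets the context grow by $\Delta$); the three \emph{substitution} lemmas for term, type, and capture variables, each preserving typing up to the matching substitution on the result type $E$ and a correspondingly substituted use set; a lemma that \emph{substitution preserves subtyping and subcapturing}; and a \emph{narrowing} lemma. Since typing is syntax-directed apart from \ruleref{sub}, I would also prove \emph{inversion-up-to-subtyping} lemmas that, from a value typed at $E$, recover the premises of its introduction rule (\ruleref{abs}, \ruleref{tabs}, \ruleref{cabs}, or \ruleref{pack}) together with the subtyping facts recording how the declared type was widened.

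For the redex cases I would argue uniformly. If the reduced redex is a term application $x\,y$, inverting \ruleref{app} (after absorbing \ruleref{sub}) gives $x : (\forall(z\!:\!T)E)\capt C$ and $y : T$. The reduction looks $x$ up in $\sta$; store typing~(1) and value-inversion yield a stored $\lambda(z\!:\!T_0)t_0$ with $\typ{C_0}{(\G, z\!:\!T_0)}{t_0}{E_0}$, where inverting \ruleref{fun} gives $\subs{\G}{T}{T_0}$ and $\subs{(\G, z\!:\!T)}{E_0}{E}$. I instantiate term-substitution with $y$ (which has type $T_0$ via \ruleref{sub}) to get $\typ{C'}{\G}{[z:=y]t_0}{[z:=y]E_0}$, then close with \ruleref{sub}, using that substitution preserves the subtyping $\subs{\G}{[z:=y]E_0}{[z:=y]E}$. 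The redexes $x[S]$ and $x[D]$ are structurally identical, invoking the type- and capture-substitution lemmas and inverting \ruleref{tfun}/\ruleref{cfun} (for the capture case the witness $D$ must respect the declared bound). When $t = \LET x = a \IN u$ with $a$ an answer, the reduction either allocates $a$ (a value) into the store or renames $a$ (a variable) via term-substitution; in the allocation sub-case I take $\Delta = (x\!:\!T)$ and $C' = C$, re-establish $\typSta{\sta'}{(\G, x\!:\!T)}$ from the first premise of \ruleref{let}, and observe $u$ is already typed at $E$ in $\G, x\!:\!T$. When $t = \LET \langle c, x\rangle = \langle D, y\rangle \IN u$ reduces against a \ruleref{pack}, I invert \ruleref{let-e} and \ruleref{pack} and apply capture-substitution ($c := D$, trivially within the unbounded bound $*$) followed by term-substitution ($x := y$); since $\wf{\G}{F}$ holds, $F = E$ is unaffected by both substitutions.

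The congruence case $t = \LET x = t_1 \IN u$ with $\<\sta\BAR t_1\>\red\<\sta'\BAR t_1'\>$ is where the induction hypothesis enters: inverting \ruleref{let} to type $t_1$ and applying the hypothesis yields $\Delta$ and $C_1'$ with $\typSta{\sta'}{(\G,\Delta)}$ and $\typ{C_1'}{(\G,\Delta)}{t_1'}{T}$. I then weaken $u$'s typing from $\G, x\!:\!T$ to $\G, \Delta, x\!:\!T$, reconcile the two premises' use sets into a common one by \ruleref{sub} (the system is designed to permit this), and reassemble with \ruleref{let}; the side conditions $\wf{(\G,\Delta)}{C, E}$ follow because $\Delta$ only extends scope and subcapturing lets the locally-bound $x$ be avoided.

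The main obstacle will be the inversion-up-to-subtyping for the application cases, specifically making substitution commute with subtyping in the \emph{dependent} codomain. Because the result type $E$ mentions the bound variable and may carry an existential capture quantifier, peeling a \ruleref{sub} off the stored lambda's type forces me to track how capture and use sets are widened by subcapturing and to certify that the substituted subtyping $\subs{\G}{[z:=y]E_0}{[z:=y]E}$ and the adjusted use set $C'$ remain valid and well-formed. Threading the \ruleref{pack} witness through \ruleref{let-e} while preserving $\wf{\G}{C', F}$ is the most delicate bookkeeping and is the part I expect to dominate the mechanization effort.
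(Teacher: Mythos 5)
Your proposal is correct and follows essentially the same route as the paper, which takes the standard syntactic approach (the paper's proof is fully mechanized in Lean 4 rather than written out, but it is the usual development via weakening, term/type/capture substitution, narrowing, and inversion-up-to-subtyping lemmas, with a case analysis covering exactly the redexes you list: \ruleref{apply}, \ruleref{tapply}, \ruleref{capply}, \ruleref{rename}, \ruleref{rename-e}, and \ruleref{lift}). The one presentational mismatch is that reduction is defined with evaluation contexts $\ec$ rather than a congruence rule for let, so your ``congruence case'' must be realized as an induction on the structure of $\ec$ (built only from let- and existential-let-frames) or via a context-decomposition lemma --- a routine adjustment that does not affect the substance of the argument.
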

Here, ${\typSta{\sta}{\G}}$ denotes \emph{store typing}:
given a store $\sta$,
this relation produces a typing context $\G$
whose bindings assign a type to each corresponding value in the store
(defined in\condtext{}{ the technical report} \appendixfigref{11}{fig:store-typing}).
$\<\sta\BAR t\>\red\<\sta'\BAR t'\>$ denotes the reduction relation.

\begin{theorem}[Progress]
\label{theorem:progress}
Given (1) ${\typSta{\sta}{\G}}$, and (2) $\typ{C}{\G}{t}{E}$,
we can show that either $t$ is an answer,
or there exist $\sta'$ and $t'$ such that
$\<\sta\BAR t\>\red\<\sta'\BAR t'\>$.
\end{theorem}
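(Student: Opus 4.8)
The plan is to prove progress by the standard syntactic method, inducting on the typing derivation $\typ{C}{\G}{t}{E}$. Because System \capless{} keeps terms in monadic normal form, every elimination form applies a \emph{variable} rather than an inline value, so the essential ingredient is a canonical-forms lemma that recovers the syntactic shape of the value that the store binds to that variable. I would rely throughout on the store typing $\typSta{\sta}{\G}$ to guarantee that every term variable in $\G$ is bound to an actual value in $\sta$.

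First I would dispatch the cases where $t$ is already an answer. The rules \ruleref{var}, \ruleref{abs}, \ruleref{tabs}, \ruleref{cabs}, and \ruleref{pack} each conclude with $t$ a variable or a value, so progress is immediate. The subsumption rule \ruleref{sub} leaves the term unchanged and merely refines $C$ and $E$, so the claim follows directly from the induction hypothesis on its premise.

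Next come the elimination forms, where reduction must actually fire. For \ruleref{app} with $t = x\,y$, the first premise types $x$ at a capturing function type $(\forall(z:T)E)\capt C$; looking $x$ up in the well-typed store yields a value $v$ whose type is a subtype of this, and canonical forms forces $v$ to be a term lambda $\lambda(z:T')t'$, so the configuration steps by the $\beta$-rule for application. The \ruleref{tapp} and \ruleref{capp} cases are identical in spirit, invoking canonical forms for type lambdas $\lambda[X<:S]t$ and capture lambdas $\lambda[c<:B]t$ respectively. For the binders I case split on the bound subterm: in \ruleref{let} with $t = \LET x = t_1\IN u$, if $t_1$ is an answer the configuration reduces by allocating the value (or substituting the variable) and continuing with $u$, and otherwise the induction hypothesis on $t_1$ supplies a step that lifts to the whole let via the evaluation-context rule; \ruleref{let-e} is handled the same way, with the answer case triggering the unpacking reduction that binds both the witness $c$ and the variable $x$ (again using canonical forms to expose the underlying pack $\langle C, x\rangle$).

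The main obstacle is the canonical-forms lemma, which has to survive the interaction of subtyping with the new quantifier forms. Concretely, I need that a value typed (possibly through \ruleref{sub}) at a capturing type whose shape is $\forall(z:T)E$ is syntactically a term lambda, and analogously for the type- and capture-function shapes. This demands inversion lemmas establishing that the subtyping rules \ruleref{capt}, \ruleref{trans}, \ruleref{fun}, \ruleref{tfun}, \ruleref{cfun}, and \ruleref{exist} never relate a function shape to $\top$, to a type variable, or to a function of a different kind, so the outer shape is preserved along any subtyping chain. Since values carry an empty use set and the only value with existential type is a pack (rule \ruleref{pack}), whose shape is $\EXCAP{c}T$ rather than a function type, no pack can masquerade as an eliminable function, closing the potential gap in the \ruleref{app}, \ruleref{tapp}, and \ruleref{capp} cases. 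Threading these shape-preservation facts through the store typing is the delicate part, whereas the reductions themselves, being essentially those of \ccformal{}, are routine.
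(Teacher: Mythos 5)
Your proposal is correct and follows essentially the same route as the paper, which establishes progress by the standard syntactic method (induction on the typing derivation with canonical-forms and subtyping-inversion lemmas, mechanized in Lean 4 rather than written out). The case analysis, the reliance on store typing to resolve MNF variable lookups, and the treatment of \ruleref{let}/\ruleref{let-e} via the evaluation-context rules all match the intended argument.
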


\subsection{Scope Safety}

Following \citeauthor{DBLP:journals/toplas/BoruchGruszeckiOLLB23},
we add an extension to System \capless{} which introduces scoped capabilities.
By proving it sound, we show that the system can ensure the scoping of capabilities.

\begin{figure*}[tbp]
\footnotesize

\begin{minipage}{.5\textwidth}
\flushleft{\textbf{Syntax}}
\begin{align*}
  s,\,t,\,u\coloneqq\ & \BOUNDARY[S] \AS \langle c,x\rangle \IN t \mid \cdots           \tag*{\textbf{Term}}\\
  R,\,S\coloneqq\ &\BREAK[S] \mid \cdots           \tag*{\textbf{Shape Type}}\\
\end{align*}
\end{minipage}%
\begin{minipage}{.5\textwidth}
\flushleft{\quad\textbf{Subtyping} \quad $\subs{\G}{S_1}{S_2}$}
\infrule[\ruledef{break}]
{\subs{\G}{S_2}{S_1}}
{\subs{\G}{\BREAK[S_1]}{\BREAK[S_2]}}
\end{minipage}

\flushleft{\textbf{Typing} \quad $\typs{\G}{t}{T}$}
\begin{multicols}{1}
\infrule[\ruledef{boundary}]
{\typ{C}{(\G, c:\CAPK, x:\BREAK[S]\capt\set{c})}{t}{S}\andalso
 \wf{\G}{S}}
{\typ{(C\setminus\set{c,x})}{\G}{\BOUNDARY[S]\AS\PACK{c}{x}\IN t}{S}}

\infrule[\ruledef{invoke}]
{\typ{C'}{\G}{x}{\BREAK[S]\capt C}\andalso\typ{C'}{\G}{y}{S}}
{\typ{C'}{\G}{x\,y}{E}}
\end{multicols}

\vspace{-1.5em}
\caption{Extensions to static rules of \calculus{}.}\label{fig:static-extensions}

\end{figure*} 
\subsubsection{Static Semantics of Scoped Capabilities}
Figure~\ref{fig:static-extensions} presents the extensions. We add a control delimiter
$\BOUNDARY[S]\AS\langle c,x\rangle\IN t$ (mirroring \lstinline|boundary|/\lstinline|Break| in
Scala~3) that introduces a scope with a fresh abstract capture $c$ and a scoped capability
$\BREAK[S]$. The capability may be used only within its defining boundary; the type system enforces
this as follows: \ruleref{boundary} checks the body under $\BREAK$, binds the fresh $c$, and
requires the result type $S$ to be well-formed in the outer context $\Gamma$, which suffices to
enforce scoping. \ruleref{invoke} types invocations of $\BREAK$, and \ruleref{break} provides
subtyping between $\BREAK$ capabilities.

\subsubsection{Dynamic Semantics of Scoped Capabilities}
In the reduction semantics we introduce (1) runtime labels to identify boundaries and (2) a scoping
construct that delimits a boundary together with its $\BREAK$ capability. Invoking $\BREAK$ requires
a matching scope in the surrounding evaluation context; otherwise evaluation is stuck. 
For brevity, we elide the full runtime forms here and refer to\condtext{}{ the technical report} \appendixref{A.3}{sec:dynamic-rules-scopedcaps} for the complete definition.

\subsubsection{Type Soundness}
We prove standard progress and preservation theorems
to establish the type soundness of the extended system.
Its type soundness implies that the system enforces the scoping discipline of the $\BREAK$ capability,
as a scope extrusion will lead to a stuck evaluation.
The proof is fully mechanized in Lean 4.

\subsection{Type Preserving Translation}
\label{sec:type-preserving-translation}

We develop a translation system that
translates well-typed programs in System \cappy{} to System \capless{}.
This essentially provides an understanding of $\CAP$s and reach capabilities:
they can be understood as existential and universal quantifications of capture sets.
The translation is stated and proven with pen and paper.
See\condtext{}{ our technical report} \appendixref{D}{sec:proof} for the full details.

\begin{restatable}[Translation Preserves Typing]{theorem}{thmtranslation}
\label{theorem:typing-interp}
Let $\tau = {\<D,\rho,\rho^*\>}$ be a proper translation context under type contexts $\G$ and $\Delta$, 
and $\typ{C}{\G}{t}{T}$ be a typing derivation in System \cappy{}.
Then, if $t$ is either of the application form $x\,y$
or the let-binding form $\LET z = s\IN u$,
there exists a term $t'$ in System \capless{} such that
$\typ{\embed{C}^{D'}}{\Delta}{t'}{\EXCAP{c}\embed{T}^{\set{c}}}$ for some $D'$;
otherwise,
there exists a term $t'$ in System \capless{} such that
$\typ{\embed{C}^{D'}}{\Delta}{t'}{\embed{T}^{D'}}$ for some $D'$,
and $t'$ is an answer when $t$ is an answer.
\end{restatable}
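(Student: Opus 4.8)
The plan is to proceed by structural induction on the \cappy{} typing derivation $\typ{C}{\G}{t}{T}$, constructing the target term $t'$ together with its \capless{} typing out of the typings that the induction hypothesis supplies for the immediate subterms. The dichotomy in the conclusion is exactly what makes the induction compose: the application and let forms are typed in \capless{} by \ruleref{app} and \ruleref{let-e}, which by construction yield existential result types, so they land in the $\EXCAP{c}\embed{T}^{\set{c}}$ branch, whereas answers, boxing, the abstraction introductions, and the type/capture applications produce a plain $\embed{T}^{D'}$. Throughout, the use set is translated in lockstep as $\embed{C}^{D'}$, and I would carry as part of the statement that the chosen witness $D'$ in the existential branch is the one dictated by the relevant deep capture set.

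Before the main induction I would prove the auxiliary result on which everything hinges: a \emph{coercion lemma} saying that a \cappy{} subtyping derivation $\subs{\G}{T}{U}$ induces a \capless{} term transformation, i.e.\ from an answer $a'$ of type $\embed{T}^{D}$ one can build an answer $a''$ of type $\embed{U}^{D}$. Its proof is by induction on the subtyping derivation, mirroring each rule by a target-level eta-expansion that repacks capture sets. The decisive case is widening a capture set to $\set{\CAP}$, which on the target corresponds to packing the current capture set into a fresh existential via \ruleref{pack}, precisely the eta-expansion $\lambda[c_{x}]\lambda(x:\ldots).\<\set{c_x}, a_f\,x\>$ shown in \Cref{sec:translation-to-capless}. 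Because \rruleref{sub} fires only on answers, the coercion is always applied to a translated answer, which is what keeps these transformations well-defined.

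With the coercion lemma available, the induction is mostly careful bookkeeping, supported by standard lemmas that source substitution $[z:=y]$ and $[c:=D]$ commute with the embedding. For \rruleref{var} the translation consults $\rho$ and $\rho^*$, and I must check that reach refinement $\RefineReach{\set{x^*}}{S}{S'}$ commutes with the embedding so that $x^*$ resolves to the capture variable $\rho^*$ assigns to the box contents of $x$. For \rruleref{box} and \rruleref{unbox} the translation replaces boxing by the existential machinery: box introduction becomes \ruleref{pack}, and unboxing becomes an existential unpack \ruleref{let-e} followed by a subcapturing coercion to the annotated set. In \rruleref{app} the source substitution $[z^*:=_{+}\dcs{\G}{S}]$ is realized by a capture application that supplies $\embed{\dcs{\G}{S}}$ as the existential witness, the subtyping premise $\subs{\G}{S\capt D}{T}$ is discharged by coercing the translated argument, and the $\alpha=\USE$ side condition puts $c_{x^*}$ into the use set; this is exactly why the result is existential. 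The remaining introductions \rruleref{abs}, \rruleref{tabs}, \rruleref{cabs} and their eliminations are routine once the paired bindings $c_x, c_{x^*}$ for each term parameter are threaded through $\tau$, and \rruleref{let} dispatches on whether its bound subterm is an application or let (existential branch, unpacked by an existential let) or not (ordinary let).

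The hardest part is the coercion lemma and its meshing with the deep capture set in \rruleref{app}: getting the variances to line up, so that the covariant-only $\dcs{\G}{S}$ and the substitution $[z^*:=_{+}\dcs{\G}{S}]$ agree with the existential witness chosen on the target and so that the repacking eta-expansions preserve rather than perturb the cap context $D$, is delicate and is where I expect most of the effort to go. A secondary obligation is maintaining the invariant packaged in ``proper translation context'': that $D$, $\rho$, and $\rho^*$ remain coherent with $\G$ and $\Delta$ across every context extension, especially the paired $c_x, c_{x^*}$ introduced per parameter. I would state this coherence as an explicit well-formedness condition and re-establish it at each inductive step.
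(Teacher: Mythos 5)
Your overall strategy coincides with the paper's: induction on the \cappy{} typing derivation, with the application and let forms landing in the existential branch; a separately proved coercion lemma (the paper's ``Translation Preserves Subtyping'') showing that each source subtyping step materializes as an eta-expanding term transformation on translated \emph{answers}, with capture-set widening to $\set{\CAP}$ realized by \ruleref{pack}; the treatment of \rruleref{app} via paired capture applications supplying the shallow and deep capture sets; and the dispatch in \rruleref{let} on whether the bound term's translation is existential. All of that matches the paper.

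There is, however, one concrete step that would fail: your translation of boxes. You propose that \rruleref{box} becomes \ruleref{pack} and \rruleref{unbox} becomes an existential unpack via \ruleref{let-e}. But in \capless{} existential types $\EXCAP{c}T$ are deliberately second class: they occur only as function codomains and at the top level of the typing judgment, never nested inside shape types, argument types, or type arguments. A \cappy{} boxed type $\BOX T$ is a \emph{shape type} and routinely occurs in exactly those nested positions (e.g.\ as a type argument in $x[\BOX T]$ or inside an applied type $\kappa[\BOX T]$), so its translation must again be a shape type; an existential cannot serve. Moreover \ruleref{pack} witnesses a capture variable $c$ in $\EXCAP{c}T$, whereas boxing introduces no such variable --- it merely hides the capture set from the use-set accounting while leaving the type's capture set intact. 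The paper instead encodes $\BOX (S\capt C)$ as the double type abstraction $\forall[X<:\top](\forall[X<:\top]S'\capt C')\capt C'$, translates $\BOX x$ to $\lambda[X<:\top]\lambda[X<:\top]a'$ (a pure value by \ruleref{tabs}), and translates $C\UNBOX x$ to two trivial type applications $z[\top][\top]$, which re-expose the capture set in the use set. Existential packing and unpacking are reserved for a different job entirely, namely interpreting covariant occurrences of $\CAP$. Conflating the two mechanisms breaks the type translation structurally, so the box and unbox cases of your induction need to be redone with the type-lambda encoding.
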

${\<D,\rho,\rho^*\>}$ is the translation context in System \decap{},
a translation system that converts capture sets and types in \cappy{}
into those in \capless{}.
The translation context consists of
(1) a capture set $D$ assigning meaning to $\CAP$s,
(2) and two mappings $\rho$ and $\rho^*$ that map capabilities (including reach capabilities) to their corresponding capture sets.
The notation ${\embed{T}^{D}}$ denotes the translation of a \cappy{} type $T$ to a \capless{} type,
and similarly for $\embed{C}^{D}$, which denotes the translation of a \cappy{} capture set $C$.
The key idea of the translation is to
interpret $\CAP$s in the argument types as universal quantifications
and those occuring covariantly in the result type as existential quantifications.
\section{Capture Tracking for Asynchronous Programming: A Case Study}
\label{sec:case_study}

We study the usage of capture checking and more specifically reach capabilities through
some real-world examples of asynchronous programming.

\subsubsection*{Futures and Scoping}

Scala supports a simple form of concurrency with Futures.
A \lstinline|Future| takes a computation and runs it concurrently under a provided execution context.
Note that despite the computation not being stored by the Future,
it can hold on to capabilities captured by the computation after \lstinline|Future.apply|
returns. Therefore, a Future should capture both the execution context and its computation.
We model the Future constructor
using a context function~\cite{DBLP:journals/pacmpl/OderskyBLBMS18}:
\begin{lstlisting}
object Future { def apply[T](comp: => T)(using ec: ExecutionContext^): Future[T]^{ec, comp} }
\end{lstlisting}
The following program attempts to read a file and filter its content, using
the standard \smallcode{Using} resource-pattern.
However, notice that \smallcode{Future.apply}
does not run the reading procedure to the end, instead returning immediately and causing \smallcode{Using} to
close the file right after.

\begin{lstlisting}[columns=fixed]
def findLines(p: String)(using ec: ExecutionContext^): Future[Seq[String]]^{ec} =
  Using(File.open("text.txt")): (file: File^) => // ERROR: file leaked
    Future.apply:                // : Future[Seq[String]]^{ec, file}
      file.readLines()           // : Iterator[String]^{file}
          .filter(_.contains(p)) // : Iterator[String]^{file}
          .toSeq                 // : Seq[String]
\end{lstlisting}
The compiler rejects it, detecting that \smallcode{Future.apply} returns a \lstinline|Future| capturing \smallcode{file} which escapes the \smallcode{Using} scope.
To fix this, the file has to be opened \emph{under} the \smallcode{Future.apply} call.

\subsubsection*{Composing Futures}
When writing concurrency code, it is common to create multiple concurrent computations, and then
combine them either by requiring both or either (racing) futures:
\begin{lstlisting}
extension [T](@use xs: Seq[Future[T]^])
  def all(using ec: ExecutionContext^): Future[Seq[T]]^{ec, xs*}
  def first(using ec: ExecutionContext^): Future[T]^{ec, xs*}
\end{lstlisting}
The extension methods \lstinline|all| and \lstinline|first| create a future that combines or races the
input futures. The resulting \lstinline|Future| upholds all scoping restrictions of the futures in the input sequence
by capturing the reach capability of the sequence itself.
As we saw in Section \ref{sec:motiv:reach-capabilities}, \lstinline|xs| is pure,
but we want to ``open'' the sequence of futures to await for their results, effectively using the futures' captures.
For this reason, the capture checker requires the \lstinline|@use| annotation; and when
added, it will propagate the reach capability \lstinline|xs*| to the capture set of the caller.
This is demonstrated in the following example:
\begin{lstlisting}
withIO:
  withThrow:
    val futs: Seq[Future[Int]^{async, io, throw}] =
      val f1: Future[Int]^{async, io} = Future(useIO())
      val f2: Future[Int]^{async, throw} = Future(useThrow())
    () => futs.all // (() -> Future[Seq[Int]])^{async, io, throw}
\end{lstlisting}
The returned function only captures \lstinline|futs|, a pure variable. However, \lstinline|@use| requires the \lstinline|.all|
call to add the reach capabilities of \lstinline|futs| to the closure, resulting in the rejection of the program.
Without this check, we would be able to invoke \lstinline|io| and \lstinline|throw| outside of their scope.

The implementation of \lstinline|.all| uses a similar construct to the \lstinline|collect| example in Section \ref{sec:introduction},
named \lstinline|Future.Collector[T]|. A collector takes a sequence \lstinline{xs} of \lstinline|Future[T]| and exposes a read-only channel that
passes back the futures as they are completed.
Naturally, the returned futures have the same capture set as the union of all captures of the input futures -- or the reach
capability \lstinline|xs*|.
To implement \lstinline|.all|, we create a \lstinline|Collector| from the given futures,
and chain the results on each as they are completed - guaranteeing that failing futures immediately return the exception
to the caller as soon as possible:
\begin{lstlisting}[columns=fixed]
extension [T](@use xs: Seq[Future[T]^]) def all(using ExecutionContext^) =
  val results = Collector(xs).results // : Channel[Future[T]^{xs*}]
  // create a future that poll the results channel xs.size times
  val poll = (0 until xs.size)
    .foldLeft(Future.unit)((fut, _) => fut.andThen(_ => results.read()))
  poll.andThen: _ => // all futures are resolved at this point
    xs.foldLeft(Future.apply(Seq.empty)): (seq, fut) =>
        seq.zip(fut).map(_ +: _) // : Future[Seq[T]]^{xs*}
\end{lstlisting}

\subsubsection*{Mutable Collectors}\label{sec:mutable-collectors}
A \textit{mutable} version of \lstinline|Collector[T]| is useful for work queues
where jobs dynamically spawn new concurrent tasks,
but results should arrive as soon as possible.
Unfortunately, for mutable collections like \lstinline|MutableCollector[T]|, reach capabilities are not expressive enough:
the collection may be created empty (and hence no longer accept inserting capturing values), but allowing the capture
set of the collection's items to grow is unsound.
In such cases, we reach a middle ground and require the user to \textit{declare in advance} the upper-bound capture set,
with an explicit parameter. The collection then can accept any item conforming to this declared capture set:

\begin{lstlisting}[columns=fixed]
class MutableCollector[T, C^]():
  val results: ReadChannel[Future[T]^{C}]; def remaining: Int
  def add(fut: Future[T]^{C})
def parSearch(run: Node => Seq[Node], start: Node)(using ec: ExecutionContext^) =
  val queue = MutableCollector[List[Node], {ec, run}]()
  queue += Future.apply(Seq(start))
  def loop(): Future[Unit]^{ec, run} =
    if queue.remaining == 0 then Future.unit else // : Future[Unit]^{}
      val nextToProcess = queue.results.read() // : Future[Seq[Node]]^{ec, run}
      nextToProcess.andThen: children =>
        children.foreach(node => queue += Future.apply(run(node)))
        loop()
  loop()
\end{lstlisting}
The above example illustrates the use of mutable collectors to implement parallel tree search.
In \lstinline|parSearch|, we create a mutable collector that accepts \lstinline|Future|s
that can capture the execution context and capabilities used by \lstinline|run|.
We utilize the collector as a work queue to immediately spawn new jobs as soon
as one comes back, resulting in minimal downtime waiting while more children are discovered.
To specify a capture set, we use the \lstinline|{ }| syntax in the type parameter.
Note that there is very little notational overhead when implementing \lstinline|parSearch|:
the mutable collector requires a non-inferrable capture set, and \lstinline|loop|, being recursive,
requires a specified return type. All other type annotations in comments, including capture annotations, are
inferred by the compiler.
\section{Evaluation}\label{sec:evaluation}

While the previous case study demonstrated the practicality of our approach through qualitative analysis of common programming patterns, this section provides a quantitative evaluation by measuring performance and required code changes when adopting capture checking in practice.

\subsection{Implementation}
\label{sec:implementation}

\begin{wraptable}{r}{0.4\textwidth}
	\vspace{-13pt}
	\scriptsize
	\caption{{\scriptsize Compiler performance compiling the capture-checked standard library. Average of 10 runs, measured on Linux PC (Ryzen 3800x, 32GB RAM, NixOS, JDK11)}}\label{tab:compiler-performance}
	\vspace{-10pt}
	\begin{tabular}{lrr}
		\toprule
		\textbf{Phase} & \textbf{Time (ms)}        & \textbf{Throughput}   \\
		\midrule
		Typing         & 10436 $\pm$ 220 (35.80\%) & 4596 $\pm$ 99 loc/s   \\
		CC             & 3896 $\pm$ 48 (13.36\%)   & 12307 $\pm$ 148 loc/s \\
		Other          & 14821 $\pm$ 269 (50.84\%) & 3236 $\pm$ 58 loc/s   \\
		\textbf{Total} & 29154 $\pm$ 395 (100.0\%) & 1645 $\pm$ 22 loc/s   \\
		\bottomrule
	\end{tabular}
	\vspace{-13pt}
\end{wraptable}
The Scala 3 capture checker provides a complete implementation of capture tracking
with reach capabilities, as well as optional universal quantification of capture sets.
It is enabled by an experimental language import.

The capture checker runs after the type checker and several transformation phases.
It re-checks the typed syntax tree of a compilation unit, deriving subcapturing constraints
between capture sets that are solved incrementally with a propagation-based constraint solver that
keeps track of which capabilities are known to form part of a capture set.
To understand the performance impact of capture checking,
we show the compilation time breakdown of compiling the capture-checked standard library in Table~\ref{tab:compiler-performance}.
Capture checking takes approximately half the time compared to the previous typing phase and accounts for less than $15\%$ of the total compilation time,
which is reasonable.
All type arguments are treated as boxed, and boxing and unboxing operations are inferred in an adaptation step
that compares actual against expected types.
 
\subsection{Porting the Scala Standard Library}\label{sub:porting-stdlib}

To evaluate the practicality of capture checking, we have ported a significant portion of the Scala standard library to compile
with the Scala 3 capture checker.
In particular, the collections library is ported in full, alongside common language structures like \lstinline|Array| and
\lstinline|Function| traits.
This shall be referred to as \lstinline|lib-cc|, as opposed to the original standard
library (\lstinline|lib|), which is at version 2.13.15 at the time of comparison.

\newcommand{\totalcolor}{\color[HTML]{6B7A8F}\itshape}

\begin{wraptable}{r}{0.72\textwidth}
	\vspace{-20pt}
	\caption{
		Number of changes to capture-checked standard library, grouped by categories of change.
		"Total in lib-cc" shows the total number of definitions, variables, higher order functions, casts and lines of code in the library.
		"Meaningful" measures lines of changes excluding whole-line comments, blank lines and feature imports.
		For reference, "Total in lib" shows corresponding numbers in the original standard library (Scala 2.13.15),
		including parts not yet implemented by lib-cc, i.e., 66\% of the standard library is collections.
	}
	\label{tab:changes-stdlib}
	\centering
	\scriptsize
	\vspace{-10pt}
	\begin{tabular}{lrrr}
		\toprule
		\textbf{Type of Change}         & \textbf{Added/Changed}                                             & \textbf{Total in lib-cc}                                             & {\totalcolor \textbf{Total in lib}}                                                  \\
		\midrule
		Capture sets on definitions     & \begin{tabular}[c]{@{}r@{}}691 functions\\ 91 classes\end{tabular} & \begin{tabular}[c]{@{}r@{}}6189 functions\\ 684 classes\end{tabular} & {\totalcolor \begin{tabular}[c]{@{}r@{}}11113 functions\\ 1518 classes\end{tabular}} \\
		- Only universal captures       & 351                                                                &                                                                      & {\totalcolor }                                                                       \\
		- Only capture set on returns   & 379                                                                &                                                                      & {\totalcolor }                                                                       \\
		Capture sets on local variables & 53                                                                 & 4583                                                                 & {\totalcolor 6262}                                                                   \\
		Restrict functions to pure      & 52                                                                 & 797                                                                  & {\totalcolor 1203}                                                                   \\
		Reach capabilities              & 19                                                                 &                                                                      & {\totalcolor     }                                                                   \\
		\texttt{@use} annotations       & 12                                                                 &                                                                      &                                                                                      \\
		Unsafe casts                    & 5                                                                  & 810                                                                  & {\totalcolor 1325}                                                                   \\
		Unsafe capture set removal      & 25                                                                 &                                                                      & {\totalcolor }                                                                       \\
		Class hierarchy changes         & 2                                                                  &                                                                      & {\totalcolor }                                                                       \\
		- New classes                   & 3                                                                  &                                                                      & {\totalcolor }                                                                       \\
		\midrule
		\textbf{Total lines}            & \begin{tabular}[c]{@{}r@{}}+1418/-1407 meaningful\end{tabular}     & \begin{tabular}[c]{@{}r@{}}52160 total\\ 31395 code\end{tabular}     & {\totalcolor \begin{tabular}[c]{@{}r@{}}94635 total\\ 48210 code\end{tabular}}       \\
		\bottomrule
	\end{tabular}
	\vspace{-12pt}
\end{wraptable}
We measure the number of changes by direct comparison (using standard \lstinline|diff|) between \lstinline|lib-cc| and \lstinline|lib|
source code. The changes are categorized into Table \ref{tab:changes-stdlib}.
Overall, capture checking annotations require only about 3\% of lines of code to be changed, with over half of the changes
involving only adding capture sets to function and class declarations.
Despite the vast majority (>98.5\%) of \lstinline|lib-cc| being generic collections, about 87\% of all classes
and 88\% of methods do not require any signature change - all existing code will compile as-is.
This is thanks to boxing type parameters as well as built-in type and capture set inference.

\subsubsection{Additional Annotations}

In a lot of cases, the existing signatures work as it is.
Examples are higher order functions like \smallcode{List.map(f: A => B)},
whose parameters are only used and not captured in the return value. 
Scala's capture checking implementation treats ``fat-arrow'' function types as effect-polymorphic,
and therefore \smallcode{List.map} does not require any changes.
There are some false-positives, where a pure function is expected: the
\lstinline|maxBy| method on an \lstinline|IterableOnce[A]| expects a function that extracts a property of \lstinline|A|, and should not perform any effects.
We restrict functions to pure (\lstinline{A -> B}) in such cases.
For many other cases, the existing implementation is sufficient to handle arbitrarily capturing inputs and outputs,
but extra capture set annotations are needed to communicate this fact.
One of such cases involves passing parameters of a possibly capturing type, such as an
iterator in the following method of \lstinline|List[A]|:
\begin{lstlisting}[aboveskip=2pt,belowskip=2pt]
def prependAll(it: IterableOnce[A]^): List[A]
\end{lstlisting}
In this case, \smallcode{prependAll} should consume items from the given iterator
(possibly invoking capabilities captured in it), and then return a new prepended list without capturing \lstinline|it|.
As-is, \smallcode{prependAll} would only allow pure iterators to be passed in,
greatly reducing its usefulness.
However, adding the universal capture annotation completely fixes this issue.
Over half (51.7\%) of all changed signatures involve only this addition of \lstinline|cap|.

In some cases, the return value captures one or more of its parameters (and possibly the current \lstinline|this|).
Of particular note is the implementation of functional operations of iterators and other lazy data structures, e.g.,
the following signature of \lstinline|map| on an \smallcode{Iterator[A]}:
\begin{lstlisting}[aboveskip=2pt,belowskip=2pt]
def map[B](f: A => B): Iterator[B]^{this, f}
\end{lstlisting}
Note that the return type captures both the original iterator (\lstinline|this|) and the mapping function; and
this is also the only form of additional annotation required. Such changes are very common, and make up of
most of the remaining half of changes on signatures.

As expected, all instances of reach captures belong to \lstinline|flatMap| (and \lstinline|flatten|,
which is a special case of \lstinline|flatMap|) implementations of (possibly-)lazy data structures and interfaces.
All such instances in parameter position show up with \lstinline|@use| annotations.
Both can be removed when overriding such interfaces in a strict data structure like \lstinline|Map|.

\subsubsection{Guaranteeing Compatibility}

To enable incremental adoption, Scala allows mixing capture-checked and unchecked modules. However,
compiling them together still presents challenges. One current limitation is that explicit
capture-set parameters cannot yet be used from unchecked modules. For example, the class
\lstinline|ConcatIterator[A]| internally maintains a mutable linked list:
\begin{lstlisting}[aboveskip=2pt,belowskip=2pt,mathescape=true]
class ConcatIterator[A](var iterators: mutable.List[IterableOnce[A]^]):
  // concatenate `it` with `this` into a new iterator
  def concat(it: IterableOnce[A]^): ConcatIterator[A]^{this, it} =$\;$iterators$\;$++=$\;$it.unsafeAssumePure
\end{lstlisting}
Similar to the mutable collectors in Section~\ref{sec:mutable-collectors}, such lists require an explicit capture parameter
to track its elements' capture sets. To preserve compatibility, we instead opt for external tracking:
moving the capture set of the list to the \lstinline|ConcatIterator| itself. This is unsafe (\lstinline{concat}
returns a new iterator reference with expanded captures, but the old reference is still valid).
Future work aims to lift this restriction safely.

Furthermore, interoperability between modules using \lstinline|lib-cc| and \lstinline|lib|
necessitates strict binary compatibility: no new classes or hierarchy changes.
Fortunately, most collection classes are individually capture-checked without such changes.
One notable example is \lstinline{IndexedSeqView[A]} - a lazy view on a sequential collection -
unsoundly implementing non-capturing operations similar to \lstinline{List.map}. To resolve this, we introduced a
separate class hierarchy, achieving capture soundness at the cost of
potential link-time errors, which are easily identified during development.

\section{Discussion}\label{sec:limitations}\label{sec:discussion}

\pparagraph{Towards fully path-dependent capabilities}
Reach capabilities alone are coarse-grained, because they track all the capabilities of a whole data
structure rather than the capabilities of individual elements.
E.g., they do not reflect that a function only accesses the first element of a pair:
\begin{lstlisting}[language=Dotty]
def fst(@use p: (() => Unit, () => Unit)): () ->{p*} Unit = p._1
val p: (() ->{io} Unit, () ->{fs} Unit) = ...
fst(p) // : () ->{io, fs} Unit instead of () ->{io} Unit
\end{lstlisting}
Passing \lstinline|p| to \lstinline|takeFirst| yields a result of type \lstinline|() ->{io, fs} Unit|,
capturing capabilities from both components even though only the first component is returned.
As a workaround, a precise version can be defined in our system by falling back to explicit capture
polymorphism:
\begin{lstlisting}[language=Dotty]
def fstAlt[C1, C2](p: (() ->{C1} Unit, () ->{C2} Unit)): () ->{C1} Unit
\end{lstlisting}
This approach, however, is antithetical to the lightweight nature of our system, which aims to avoid explicit polymorphism in most cases.
Has the lightweight syntax run out of steam already?

Fortunately, the expressive power of reach capabilities and the lightweight syntax can be greatly amplified through Scala's
fully \emph{path-dependent types}, generalizing reach capabilities on one variable to full paths.
We can give precise types to functions like \lstinline|fst| and \lstinline|snd| without resorting
to explicit capture polymorphism by capturing the paths to the individual components:
\begin{lstlisting}[language=Dotty,keepspaces=true,mathescape=true]
def fst( @use p:$\;$(() => Unit,$\;$() => Unit)):$\;$() ->{p._1*} Unit = p._1
def snd( @use p:$\;$(() => Unit,$\;$() => Unit)):$\;$() ->{p._2*} Unit = p._2
def copy(@use p:$\;$(() => Unit,$\;$() => Unit)):$\;$(() ->{p._1*} Unit,$\;$() ->{p._2*} Unit) = (fst(p),snd(p))
\end{lstlisting}
Indeed, the Scala capture checker already supports full paths in this way.

Scala also supports ``capture-set members,'' enabling
capture polymorphism like in DOT~\cite{DBLP:conf/oopsla/RompfA16}:
\begin{lstlisting}[aboveskip=2pt,belowskip=2pt]
trait CaptureSet { type C^ /* declare capture-set member */ }
def capturePoly(c: CaptureSet)(f: File^{c.C}): File^{c.C} = f
\end{lstlisting}
We leave the exploration of the underlying theory for future work.

\vspace{-5pt}
\pparagraph{Telling a Use from a Mention}
Another limitation of our system is that
it cannot precisely distinguish a ``mention'' of a variable
from an actual ``use'' of it \cite{gordon:LIPIcs.ECOOP.2020.10}.
This stems from the \ruleref{var} rule in Figure~\ref{fig:all-typing}, which always adds referenced variables to the use set.
For example:
\begin{lstlisting}[language=scala]
val f: () => Unit = ...
val g = () => f //  : () ->{f} () ->{f} Unit
\end{lstlisting}
Although \lstinline|g| does not invoke \lstinline|f| and only returns it, \lstinline|f| appears in its capture set, making \lstinline|g| impure. However, an equivalent yet pure version can be defined with explicit eta-expansion:
\begin{lstlisting}[language=scala]
val g1 = () => () => f() //  : () -> () ->{f} Unit
\end{lstlisting}
This alternative specifies the evaluation order more precisely, allowing the outer function to be pure while the inner function captures \lstinline|f|.

Our evaluation (\Cref{sec:evaluation}) suggests this limitation is benign in practice. We encountered no blockers while
porting the Scala standard library. Accurately tracking such fine-grained patterns %
would require a substantially richer system and remains interesting
future work.

\vspace{-5pt}
\pparagraph{Fresh Out of the Box: Reasoning about Aliasing and Separation}
Capture checking already shows promise toward fearless concurrency
\cite{DBLP:journals/pacmpl/XuBO24} and a solution to the ``what-color-is-your-function?''
problem \cite{whatcolorisyourfunction} for Scala~3. To be fully effective in these areas, we plan to add reasoning
about aliasing, separation, and capability exclusivity, leading to a system that
can express Rust-style ownership patterns and advanced uses like safe manual memory management.

Reachability types (RT)~\cite{DBLP:journals/pacmpl/BaoWBJHR21,DBLP:journals/pacmpl/WeiBJBR24} (an
independent line of work, unrelated to the Scala~3 capture-checking effort) demonstrate that capture
tracking supports reasoning about aliasing and separation, powering compiler optimizations for
impure functional DSLs in the LMS library for
Scala~2~\cite{DBLP:journals/pacmpl/BracevacWJAJBR23,DBLP:journals/corr/abs-2309-08118}.

RT is a box-free system\footnote{Also note that our reach capabilities have no analogue in RT due to
	the absence of boxes.} that handles generics by threading explicit type-and-capture quantifiers
through every definition~\cite{DBLP:journals/pacmpl/WeiBJBR24}. It therefore faces the same dilemma
as other box-free polymorphic systems of requiring invasive changes (cf. \Cref{sec:related}). Our
work for Scala 3 retrofits the existing ecosystem as much as possible via boxing, perhaps the only
realistic option to introduce CT into Scala without a complete overhaul.
That would alienate existing users.

Furthermore, Wei et al.~\cite{DBLP:journals/pacmpl/WeiBJBR24} point out an issue of CT's boxing mechanism
and the way it is used for escape-checking (see \Cref{sec:motiv:whats-in-the-box})
in conjunction with freshly allocated objects:
\begin{lstlisting}[language=scala,keepspaces=true,mathescape=true]
withFile("file.txt"): f =>
  val r = new Ref(0) // Ref[Int]^, a freshly allocated object, would be Ref[Int]$^{\color{ACMDarkBlue}\vardiamondsuit}$ in RT
  r                  // error, cannot unbox Ref[Int]^ in CT, but legal in RT
\end{lstlisting}
The most natural capture set for fresh values in CT is the top capability \lstinline|cap|,
but then it is impossible for such values to escape the scope they were created in. Hence, \citeauthor{DBLP:journals/pacmpl/WeiBJBR24}
reject having \lstinline|cap| and instead assign a ``freshness marker''
$\vardiamondsuit$ to fresh values, which is not a subcapturing top element.

System \capless{} (\Cref{sec:capless}) replaces the top capability \lstinline|cap| with
explicit quantification under the hood. This yields a principled basis for freshness: whereas
\lstinline|cap| denotes ``arbitrary unknown capabilities,'' we can extend existential quantifiers to distinguish
\emph{arbitrary} from \emph{fresh} capabilities.

For instance, with such extensions,
the function \lstinline|mkRef| that creates a fresh \lstinline|Ref[Int]| can be typed as
\lstinline[mathescape=true]|Int -> $\EXFRESH{c}$ Ref[Int]^{$c$}|,
where {\footnotesize$\EXFRESH{c}T$} represents a fresh existential type.
In the surface language, the type can be simply \lstinline|Int -> Ref[Int]^|,
with the hat \lstinline|^| denoting an existential.
This would enable the following example:
\begin{lstlisting}[language=Dotty,mathescape=true]
def makeCounter(init: Int): $\EXFRESH{c}$ Pair[box () ->{$c$} Int, box () ->{$c$} Int] =
  val r = mkRef(init)
  Pair(box(() => r.get), box(() => r.set(r.get + 1)))
val counter = makeCounter(0)  // c: Fresh, counter: Pair[box () ->{c} Int, box () ->{c} Int]
val get = unbox(counter.snd)  // : () ->{counter} Int  // ok
\end{lstlisting}
The last line is legal because \lstinline|c| is \emph{fresh},
which means that it represents capabilities
that are guaranteed to not conflict with any existing scope,
making unboxing safe without escape concerns.
This example is also supported by RT through their self-references \cite{DBLP:journals/pacmpl/WeiBJBR24},
which allow function types to refer to their own capture sets.
While both approaches express this pattern,
existential quantification arguably provides a more natural and principled representation.

To summarize,
CT embed capturing types
via boxing, whereas RT rely on explicit type-and-capture parameters threaded through
every generic definition, and a different subcapturing relation.  The approaches are therefore already orthogonal, and the gap will widen as we
progress on adding a separation-checking layer on top of System Capless in the future.

\vspace{-5pt}
\pparagraph{Categorizing Capabilities}
Our system covers Scala’s standard \emph{collections} library, but not yet the \emph{full} standard library. 
A notable gap is \smallcode{Try[T]}, a sum of a value \lstinline|T| and a (pure) exception. Its constructor accepts a by-name \lstinline|T|, evaluates it, and captures any thrown value; clients later call \lstinline|.get| to obtain \lstinline|T| or rethrow. 
With the current system we type:
\begin{lstlisting}[language=scala,keepspaces=true]
def apply[T](body: => T): Try[T] // should be Try[T]^{body.only[Control]}
\end{lstlisting}
Here \lstinline|body| is (unboundedly) capture-polymorphic while \lstinline|Try[T]| is pure.
Although that matches the value-level view (a boxed \lstinline|T| or a pure exception), it breaks
the exception discipline enforced by Scala's \lstinline|CanThrow| capabilities
\cite{canthrow,DBLP:conf/scala/OderskyBBLL21}: \lstinline|.get| can rethrow without the
corresponding capability. From the capability view, the constructor should instead record a
\emph{subset} of \lstinline|body|'s captures related to exception-like control transfers: a class we
call \emph{Control Capabilities}. Expressing this requires extending CC with (1) \emph{capability
categories}, and (2) \emph{category-restricted subsets} (e.g.\ \lstinline|body.only[Control]|).

Such categories arise elsewhere: label-safe delimited continuations require restricting captured labels
\cite{10.1145/3679005.3685979}; thread spawning must forbid capturing thread-local resources (e.g.\
mutex guards, stack pointers, non-atomic counters); and mutability can be treated as its own
category. We therefore see capability categorization, especially for asynchronous code, as promising
future work.

\section{Related Work}\label{sec:related}

To our knowledge, this is the first work that introduces an opt-in type system for tracking
effects-as-capabilities into a mainstream programming language, applying it to a widely-used
standard collections library. Although prior works have addressed effect tracking, capabilities, or
resources, they typically focus on effects or ownership rather than directly capturing retained
capabilities. Concerns about retrofitting such systems into existing languages (especially in a
non-invasive and pragmatic manner as in our work) are rarely addressed in the literature.

\vspace{-5pt}
\pparagraph{Polymorphism across Capabilities, Effects, and Ownership}
Polymorphism is indispensable: abstractions must work no matter which effect, resource, or ownership
context they run in. Classic polymorphic effect systems
\cite{DBLP:conf/popl/LucassenG88,DBLP:journals/iandc/TalpinJ94} show how to track side effects,
while region typing \cite{DBLP:journals/iandc/TofteT97,DBLP:conf/pldi/GrossmanMJHWC02}, uniqueness
and linearity
\cite{DBLP:conf/ifip2/Wadler90,DBLP:journals/mscs/BarendsenS96,DBLP:conf/esop/MarshallVO22}, and
ownership types
\cite{DBLP:series/lncs/ClarkeOSW13,DBLP:conf/ecoop/NobleVP98,DBLP:conf/popl/BoyapatiLS03,DBLP:conf/oopsla/PotaninNCB06,DBLP:conf/ecoop/DietlDM07}
control memory and aliasing.  Yet all of these scatter extra indices, regions, uniqueness flags, or
ownership parameters throughout every signature.
Our capture-checking design avoids that clutter: it reuses Scala’s path-dependent types for implicit
polymorphism, and embeds capture-tracked types through boxing, inspired by
CMTT~\cite{DBLP:journals/tocl/NanevskiPP08} and
Effekt~\cite{DBLP:journals/pacmpl/BrachthauserSLB22}. %
Boxing gently embeds a new type universe into the existing one: existing type variables in old
signatures can range over capturing types while preserving parametricity. Consequently, the standard
collections API needs no new type parameters, adoption is strictly opt-in, and code that ignores
capture checking remains fully source-compatible. Optional explicit capture polymorphism is
supported but rarely needed in practice (\Cref{sub:porting-stdlib}).

\vspace{-5pt}
\pparagraph{Practical Ownership and Capability Languages}
Rust~\cite{DBLP:conf/sigada/MatsakisK14} employs an ownership
model to ensure memory safety without
garbage collection. Its type system infers ownership, borrowing, and lifetimes but imposes
restrictive invariants, complicating higher-order functional programming as well as implementations
of data structures, like doubly-linked lists.
Pony~\cite{pony,DBLP:conf/agere/ClebschDBM15,DBLP:journals/pacmpl/ClebschFDYWV17} uses object and
reference capabilities to control mutability and aliasing, achieving capability polymorphism through
complex viewpoint adaptation rules. Mezzo~\cite{DBLP:journals/toplas/BalabonskiPP16} tracks shared,
exclusive, and immutable ownership implicitly via permissions, but still requires explicit
permission polymorphism for generic signatures. Project Verona~\cite{verona} simplifies memory
management with hierarchical regions and ownership semantics. It manages capabilities via isolated
regions and viewpoint adaptation through method overloading.

These systems exemplify qualified type
systems~\cite{DBLP:journals/pacmpl/LeeZLYSB24,DBLP:conf/pldi/FosterFA99,DBLP:journals/scp/Jones94}.
Our work differs by qualifying types explicitly with retained capabilities rather than
permissions or regions, aligning more closely with pure object
capabilities~\cite{objectcapabilites}. The primary novelty compared to previous
CT work~\cite{DBLP:journals/toplas/BoruchGruszeckiOLLB23} is introducing reach capabilities, for which
we are hard-pressed to find a direct analogue in the literature.

\vspace{-6pt}
\pparagraph{OCaml Modal Types}
Efforts to bring Rust-style ownership to
OCaml~\cite{DBLP:journals/corr/abs-2407-11816,oxidizingocaml} adopt a modal type system that tracks
an ambient effect context for enabling stack allocation and preventing data races.
\Cref{sec:discussion} sketches how Scala's capture checking can offer similar guarantees. Like our
work, these approaches aim for backward compatibility and lightweight signatures, yet they rely on
Hindley-Milner inference rather than our local type inference with path-dependent types. Tang et
al.~\cite{DBLP:journals/corr/abs-2407-11816} extend the modal approach to Frank-style effect
handlers~\cite{DBLP:conf/popl/LindleyMM17,DBLP:journals/jfp/ConventLMM20}, tracking effects via the
modal context rather than capture sets. Both systems sometimes require explicit polymorphism.
We look forward to the final
implementation and to clarifying the precise connection between modal and capturing types.

\vspace{-6pt}
\pparagraph{Path-dependent Types and Type members for Effects and Capabilities}
Wyvern
\cite{DBLP:journals/toplas/MelicherXZPA22,DBLP:conf/ecoop/MelicherSPA17,DBLP:conf/oopsla/FishMA20}
builds on object-capabilities \cite{objectcapabilites}, first estimating
authority with a whole-program analysis \cite{DBLP:conf/ecoop/MelicherSPA17} and later adding a
path-dependent effect system
\cite{DBLP:journals/toplas/MelicherXZPA22} inspired by DOT~\cite{DBLP:conf/oopsla/RompfA16}.  Each object can declare an abstract effect member that
upper- or lower-bounds the effects its methods may perform, mirroring DOT’s abstract type
members. Our capture checking takes the complementary view: we record the objects a value can reach
in its capture set; the allowable effects follow implicitly from those objects’ APIs.
Thus both systems use
the same abstraction and composition machinery, but Wyvern names effects directly, while Scala names
the resources through which those effects can occur.

Similarly, associated effects for Flix~\cite{DBLP:journals/pacmpl/LutzeM24}
extend the same idea to type classes: each class can declare an abstract effect row, giving first-class, per-instance effect polymorphism without extra type
parameters.  Like Wyvern, it abstracts over effects; capture sets instead abstract over
reachable resources, thereby opening the door to providing alias-control guarantees (cf. \Cref{sec:discussion}).

\vspace{-6pt}
\pparagraph{Second-Class Values}
Osvald et al.~\cite{osvald2016gentrification} introduced 2nd-class values for Scala~2, i.e., values
whose lifetime is tied to their lexical scope, but their stratification hinders higher-order
programming, restricting currying and lazy collections. Xhebraj et
al.~\cite{DBLP:conf/ecoop/XhebrajB0R22} extend the idea with 2nd-class returns (recovering currying)
and explicit storage-mode polymorphism, yet rely on an unusual stack semantics.
Being based on Scala~2 annotations, their implementation cannot
express fully storage-mode generic types such as \lstinline|List[Q, T @mode[Q]]|.
Our boxing-based system avoids these limitations. It does not yet support privilege lattices
or bounding a closure’s free variables, features useful for fractional capabilities
\cite[Fig.~7]{DBLP:conf/ecoop/XhebrajB0R22}, which we expect to model via capability categories
(\Cref{sec:discussion}).

\section{Conclusion}\label{sec:conclusion}

This paper introduced reach capabilities and developed their foundations, backed by mechanized type soundness and scope safety
proofs. Reach capabilities
made it possible to migrate the Scala 3 standard collections library to capture tracking with minimal adjustments.
Our approach effectively addresses limitations in
handling capabilities within generic data structures and collections, a significant milestone towards bringing lightweight and ergonomic effect systems
to the masses.
 
\clearpage

\begin{acks}
We thank Eug\`{e}ne Flesselle and the anonymous reviewers for their valuable feedback.
This work is supported by the \grantsponsor{snsf}{SNSF}{https://www.snf.ch} Advanced Grant \grantnum{snsf}{209506}, ``Capabilities for Typing Resources and Effects''.
\end{acks}

\section*{Data-Availability Statement}

The artifacts accompanying this paper \cite{artifact} include: (1) a complete mechanized formalization of System
Capless in Lean 4, including the scope safety extension and associated metatheory; (2) a practical
implementation of capture checking integrated into the Scala 3 compiler; and (3) source code and
scripts for reproducing the paper’s case studies.
The capture-checked version of
Scala’s standard collections library is included among these artifacts.

 \printbibliography{}

\clearpage
\appendix
\section{Additional Definitions}

\subsection{Well-formedness}
\label{sec:well-formedness}

\begin{figure}[htbp]
\centering  
\footnotesize
\begin{multicols}{3}

\infax[\ruledef{wf-top}]
{\wf{\G}{\top}}

\infrule[\ruledef{wf-tvar}]
{X<:S\in\G}
{\wf{\G}{X}}

\infrule[\ruledef{wf-fun}]
{\wf{\G}{T}\andalso\wf{(\G,x:T)}{E}}
{\wf{\G}{\forall(x:T)E}}

\infrule[\ruledef{wf-tfun}]
{\wf{\G}{S}\andalso\wf{(\G,X<:S)}{E}}
{\wf{\G}{\forall[X<:S]E}}

\infrule[\ruledef{wf-cfun}]
{\wf{\G}{B}\andalso\wf{(\G,c<:B)}{E}}
{\wf{\G}{\forall[c<:B]E}}

\infrule[\ruledef{wf-exists}]
{\wf{(\G,c<:*)}{T}}
{\wf{\G}{\EXCAP{c}T}}

\infrule[\ruledef{wf-capt}]
{\wf{\G}{S}\andalso\wf{\G}{C}}
{\wf{\G}{S\capt C}}

\infrule[\ruledef{wf-cset}]
{C\subseteq\dom{\G}}
{\wf{\G}{C}}

\infax[\ruledef{wf-star}]
{\wf{\G}{*}}
\end{multicols}

\caption{Definition of well-formedness of System \capless{}.}
\label{fig:well-formedness}
\end{figure}
 
\Cref{fig:well-formedness} defines the well-formedness of types, capture sets and capture bounds in System \capless{}.
It basically ensures that the capture sets only contain defined variables in the context.

\begin{figure}[htbp]
\centering  
\footnotesize
\begin{multicols}{3}

\infax[\rruledef{wf-top}]
{\wf{\G}{\top}}

\infrule[\rruledef{wf-tvar}]
{X<:\top\in\G}
{\wf{\G}{X}}

\infrule[\rruledef{wf-fun}]
{\wf{\G}{T}\andalso\wf{(\G,x:T)}{U}}
{\wf{\G}{\forall^\alpha(x:T)U}}

\infrule[\rruledef{wf-tfun}]
{\wf{(\G,X<:\top)}{T}}
{\wf{\G}{\forall[X<:\top]T}}

\infrule[\rruledef{wf-cfun}]
{\wf{(\G,c)}{T}}
{\wf{\G}{\forall[c]T}}

\infrule[\rruledef{wf-boxed}]
{\wf{\G}{T}}
{\wf{\G}{\BOX T}}

\infrule[\rruledef{wf-applied}]
{\kappa=(\kappa_1,\cdots,\kappa_n)\mapsto S\in\cctx\\
 \forall T_i, \wf{\G}{T_i}}
{\wf{\G}{\kappa[T_1,\cdots,T_n]}}

\infrule[\rruledef{wf-capt}]
{\wf{\G}{S}\andalso\wf{\G}{C}}
{\wf{\G}{S\capt C}}

\infrule[\rruledef{wf-cset}]
{\forall \theta\in C. \wf{\G}{\theta}}
{\wf{\G}{C}}

\infax[\rruledef{wf-cap}]
{\wf{\G}{\CAP}}

\infrule[\rruledef{wf-var}]
{x:T\in\G}
{\wf{\G}{x}}

\infrule[\rruledef{wf-reach}]
{x:T\in\G}
{\wf{\G}{x^*}}

\end{multicols}

\caption{Definition of well-formedness of System \cappy{}.}
\label{fig:cappy-well-formedness}
\end{figure}
 
\Cref{fig:cappy-well-formedness} shows the definition of the well-formedness judgement in System \cappy{}.
It in most parts matches that of System \capless{}
and is thus not surprising.
The primary difference is that
capture sets are allowed to include reach capabilities of defined variables
and the universal capability $\CAP$.

\begin{figure*}[htbp]
\centering  
\footnotesize

\flushleft{\textbf{Well-Formedness of Type Definition \quad $\wfv{\Theta;\G;(X_1^{\nu_1},\cdots,X_n^{\nu_n})}{S}{\nu}$}}

\begin{multicols}{2}

\infax[\rruledef{d-top}]
{\wfv{\Theta;\G;(X_1^{\nu_1},\cdots,X_n^{\nu_n})}{\top}{\nu}}

\infrule[\rruledef{d-capt}]
{\wf{\G}{C}\andalso\CAP\notin C\text{ if $\nu = +$}\\
 \wfv{\Theta;\G;(X_1^{\nu_1},\cdots,X_n^{\nu_n})}{S}{\nu}
}
{\wfv{\Theta;\G;(X_1^{\nu_1},\cdots,X_n^{\nu_n})}{S\capt C}{\nu}}

\infrule[\rruledef{d-tvar}]
{\wf{\G}{X}}
{\wfv{\Theta;\G;(X_1^{\nu_1},\cdots,X_n^{\nu_n})}{X}{\nu}}

\infax[\rruledef{d-tparam}]
{\wfv{\Theta;\G;(\cdots,X_i^{\nu_i},\cdots)}{X_i}{\nu_i}}

\infrule[\rruledef{d-fun}]
{\wfv{\Theta;\G;(X_1^{\nu_1},\cdots,X_n^{\nu_n})}{T}{\neg\nu}\\
 \wfv{\Theta;\G,x:T;(X_1^{\nu_1},\cdots,X_n^{\nu_n})}{U}{\nu}}
{\wfv{\Theta;\G;(X_1^{\nu_1},\cdots,X_n^{\nu_n})}{\forall^\alpha(x:T)U}{\nu}}

\infrule[\rruledef{d-tfun}]
{\wfv{\Theta;\G,X<:\top;(X_1^{\nu_1},\cdots,X_n^{\nu_n})}{T}{\nu}}
{\wfv{\Theta;\G;(X_1^{\nu_1},\cdots,X_n^{\nu_n})}{\forall[X<:\top]T}{\nu}}

\infrule[\rruledef{d-cfun}]
{\wfv{\Theta;\G,c;(X_1^{\nu_1},\cdots,X_n^{\nu_n})}{T}{\nu}}
{\wfv{\Theta;\G;(X_1^{\nu_1},\cdots,X_n^{\nu_n})}{\forall[c]T}{\nu}}

\infrule[\rruledef{d-boxed}]
{\wfv{\Theta;\G;(X_1^{\nu_1},\cdots,X_n^{\nu_n})}{T}{\nu}}
{\wfv{\Theta;\G;(X_1^{\nu_1},\cdots,X_n^{\nu_n})}{\BOX T}{\nu}}

\infrule[\rruledef{d-applied}]
{\kappa=(Y_1^{\nu'_1},\cdots,Y_n^{\nu'_m})\mapsto S\in\cctx\\
 \forall Y_i^+,\wfv{\Theta;\G;(X_1^{\nu_1},\cdots,X_n^{\nu_n})}{T_i}{\nu}\\
 \forall Y_i^-,\wfv{\Theta;\G;(X_1^{\nu_1},\cdots,X_n^{\nu_n})}{T_i}{\neg\nu}}
{\wfv{\Theta;\G;(X_1^{\nu_1},\cdots,X_n^{\nu_n})}{\kappa[T_1,\cdots,T_m]}{\nu}}

\end{multicols}

\flushleft{\textbf{Well-Formedness of Typedef Context \quad $\wf{}{\cctx}$}}

\begin{multicols}{2}

\infax[\rruledef{ds-empty}]
{\wf{}{\emptyset}}

\infrule[\rruledef{ds-cons}]
{\wf{}{\Theta}\\
 \wfv{\Theta;\emptyset;(X_1^{\nu_1},\cdots,X_n^{\nu_n})}{S}{+}}
{\wf{}{\Theta,\kappa=(X_1^{\nu_1},\cdots,X_n^{\nu_n})\mapsto S}}

\end{multicols}

\caption{Well-formedness of type definitions.}
\label{fig:cappy-well-formedness-tycon}
\end{figure*} \Cref{fig:cappy-well-formedness-tycon} defines the well-formedness judgement
for type definitions.
$\wf{}{\cctx}$ ensures that all type definitions in the context are well-formed.
Later type definitions can refer to earlier ones.
$\wfv{\cctx;\G;(X_1^{\nu_1},\cdots,X_n^{\nu_n})}{T}{\nu}$ ensures that
the body of a type definition is well-formed under
the variance $\nu$,
the existing type definitions $\cctx$,
a local type context $\G$,
and the parameter clause $(X_1^{\nu_1},\cdots,X_n^{\nu_n})$.
It extends type well-formedness by additionally ensuring that
the variances of parameters are consistent (as in \rruleref{d-tvar})
and that the body does not have covariant occurrences of $\CAP$s.

\subsection{Reduction Rules of System \capless{}}
\label{sec:reduction}

\begin{figure*}[htbp]
\scriptsize

\flushleft{\textbf{Syntax}}

\begin{align*}
			\sta\coloneqq\ &           \tag*{\textbf{Store}}\\
			&\emptyset                    \tag*{empty}\\
			&\sta, \VAL x\mapsto v                    \tag*{val binding}\\
			\ec\coloneqq\ &           \tag*{\textbf{Evaluation Context}}\\
			&[]                    \tag*{hole}\\
			&\LET x = \ec \IN t           \tag*{let}\\
			&\new{\LET \< c,x \> = \ec \IN t}           \tag*{ex. let}\\
\end{align*}

\flushleft{\textbf{Reduction \quad $\<\sta\BAR t\>\red\<\sta'\BAR t'\>$}}

\begin{multicols}{2}

\infrule[\ruledef{apply}]
{\sta(x) = \lambda(z: T)t}
{\<\sta\BAR\ec[ x\,y ]\> \red \<\sta\BAR\ec[ [z:=y]t ]\>}

\infrule[\ruledef{tapply}]
{\sta(x) = \lambda[X<:S]t}
{\<\sta\BAR\ec[ x[S'] ]\> \red \<\sta\BAR\ec[ [X:=S']t ]\>}

\newruletrue
\infrule[\ruledef{capply}]
{\sta(x) = \lambda[c<:B]t}
{\<\sta\BAR\ec[ x[C] ]\> \red \<\sta\BAR\ec[ [c:=C]t ]\>}
\newrulefalse

\infax[\ruledef{rename}]
{\<\sta\BAR\ec[ \LET x = y \IN t ]\>\red 
 \<\sta\BAR\ec[ [x:=y]t ]\>}

\newruletrue
\infax[\ruledef{rename-e}]
{\<\sta\BAR\ec[\LET \langle c, x \rangle =\< C,y \> \IN t]\>\red 
 \<\sta\BAR\ec[ [x:=y][c:=C]t]\>}
\newrulefalse

\infax[\ruledef{lift}]
{\<\sta\BAR\ec[ \LET x = v\IN t ]\> \red 
 \<\sta, \VAL x\mapsto v\BAR\ec[ t ]\>}

\end{multicols}

\vspace{-1.5em}
\caption{Evaluation rules of System \capless{}. Changes from System \ccformal{} \cite{DBLP:journals/toplas/BoruchGruszeckiOLLB23} are highlighted.}\label{fig:reduction}

\end{figure*}
 
Figure \ref{fig:reduction} presents the reduction rules of System \capless{}.
It is the same to those of System \ccformal{}
except for the addition of rules for capture application \ruleref{capply} and existential unpacking \ruleref{rename-e}.

\subsection{Dynamic Rules of Scoped Capabilities}
\label{sec:dynamic-rules-scopedcaps}

\begin{figure*}[tbp]
\footnotesize

\begin{minipage}{.48\textwidth}
\flushleft{\textbf{Syntax}}

\begin{align*}
  l_S         \tag*{\textbf{Label}}\\
  p,q,r\coloneqq\ & l_S \mid x           \tag*{\textbf{General Variable}}\\
  \theta\coloneqq\ & \new{p} \mid c           \tag*{\textbf{Capture}}\\
  s,\,t,\,u\coloneqq\ & \SCOPE_{l_S} \IN t \mid \cdots           \tag*{\textbf{Term}}\\
  \ec\coloneqq\ & \SCOPE_{l_S}\IN\ec \mid \cdots         \tag*{\textbf{Continuation Stack}}\\
\end{align*}

\end{minipage}\quad%
\begin{minipage}{.48\textwidth}

\flushleft{\textbf{Typing} \quad $\typs{\G}{t}{T}$}

\begin{multicols}{2}

\infax[\ruledef{label}]
{\typ{\set{l_S}}{\G}{l_S}{\BREAK[S]\capt\set{l_S}}}

\infrule[\ruledef{scope}]
{\typ{C}{\G}{t}{S}}
{\typ{C}{\G}{\SCOPE_{l_S}\IN t}{S}}

\end{multicols}

The \ruleref{var}, \ruleref{app}, \ruleref{tapp}, \ruleref{tapp}, \ruleref{invoke}, and \ruleref{pack} rules now work on general variables $p$.

\flushleft{\textbf{Subcapturing} \ $\subs{\G}{C_1}{C_2}$} 
now works on general variables $p$

\end{minipage}

\vspace{0.5em}

\flushleft{\textbf{Reduction \quad $\<\sta\BAR t\>\red\<\sta'\BAR t'\>$}}

\infrule[\ruledef{enter}]
{\text{$l$ is fresh}}
{\<\sta\BAR\ec[ \BOUNDARY[S]\AS\<c,x\>\IN t ]\>\red
 \<\sta\BAR\ec[ \SCOPE_{l_S}\IN [c:=\set{l_S}][x:=l_S]t ]\>}

\begin{multicols}{2}
\infrule[\ruledef{breakout}]
{\ec = \ec_1[ \SCOPE_{l_S}\IN \ec_2 ]}
{\<\sta\BAR\ec[ l_S\,p ]\>\red
 \<\sta\BAR\ec_1[ p ]\>}

\infax[\ruledef{leave}]
{\<\sta\BAR\ec[ \SCOPE_{l_S}\IN a ]\>\red
 \<\sta\BAR\ec[ a ]\>}
\end{multicols}

The \ruleref{apply}, \ruleref{tapply}, \ruleref{capply}, \ruleref{rename}, and \ruleref{rename-e} rules now work on general variables $p$.

\caption{Extensions to dynamic rules of \calculus{}.}\label{fig:dynamic-extensions}
\end{figure*} 
Figure~\ref{fig:dynamic-extensions} shows the extensions to the dynamic rules of System \capless{}.
We introduce runtime labels $l_S$ and scopes $\SCOPE_{l_S}\IN t$
for the dynamic semantics of scoped capabilities.
A label $l_S$ is a unique runtime identifier of a boundary.
\ruleref{enter} reduces a boundary form
by generating a fresh label $l$ as the identifier of the boundary.
The label also becomes the representation of the $\BREAK$ capability.
The boundary form is then replaced by a scope form $\SCOPE_{l_S}\IN t$,
which delimits the scope of the boundary.
The scope can be left by returning an answer $a$ from it,
as shown in the \ruleref{leave} rule.
It can also be broken out by invoking the $\BREAK$ capability,
as shown in the \ruleref{break} rule,
which looks for the scope with the matching label $l_S$ in the evaluation context,
and transfers the control there.
If a $\BREAK$ capability is invoked outside of its scope,
the evaluation gets stuck.
\section{Discussions}

\subsection{An Example of Local Mutable State Storing Tracked Values}
\label{sec:local-mutable-state}

The following defines the \lstinline|concat| function 
which takes a list of iterators
and merges them into one single iterator:
\begin{lstlisting}[language=Dotty]
def concat[T](@use xs: List[Iterator[T]^]): Iterator[T]^{xs*} = new Iterator[T]:
  var rest: List[Iterator[T]^{xs*}] = xs
  var cur: Iterator[T]^{xs*} = Iterator.empty
  def hasNext(): Boolean = cur.hasNext() || rest.exists(_.hasNext())
  def next(): T =
    if cur.hasNext() then
      cur.next()
    else
      cur = rest.head
      rest = rest.tail
      next()
\end{lstlisting}
It has the local mutable states \lstinline|rest| and \lstinline|cur|,
both store impure values.
In CT, the type of a mutable state cannot contain any covariant mention of $\CAP$,
as that provides a side channel for scoped capabilities to escape \cite{DBLP:journals/toplas/BoruchGruszeckiOLLB23}.
Previously, the above programming pattern is not supported,
as the best type we can assign to the element of \lstinline|xs| is \lstinline|Iterator[T]^{cap}|,
and storing a value of this type violates the aforementioned restriction.
Reach capabilities enable this pattern,
since we have a precise name \lstinline|xs*| other than the top element $\CAP$ for the effects of the elements in the list.

\subsection{Theoretical Benefits of System \capless{}}
\label{sec:capless:theoretical-benefits}

\subsubsection{Farewell, Boxes}\label{sec:capless-box}

System \capless{} drops boxes.
In fact, 
thanks to precise capture tracking over curried functions,
boxes can be encoded in System \capless{} using term abstractions.
Specifically,
a boxed term $\BOX x: \BOX S\capt C$ can be encoded as
a double type lambda abstraction
$\lambda[X<:\top]\lambda[X<:\top]. x$.
The inner lambda wraps the variable $x$ as a value with an empty use set:
it can be typed as $\typ{\set{}}{\G}{\lambda[X<:\top].x}{(\forall[X<:\top]S\capt C)\capt C}$.
Note that this value still has a non-empty capture set.
The outer lambda wraps it again to obtain a pure value:
$\typ{\set{}}{\G}{\lambda[X<:\top].\lambda[X<:\top].x}{(\forall[X<:\top](\forall[X<:\top]S\capt C)\capt C)\capt\set{}}$.
Then, the unbox term $C'\UNBOX x$
can be encoded as $x[\top][\top]$.
For the simplicity of presentation,
this term is not in MNF but can be normalized to that form trivially.
This way, boxing and unboxing behaves the same as in System \cappy{}
boxing hides the captures of a variable
and unboxing reveals them.
This is exactly how the translation system translates boxes in System \cappy{} to System \capless{}.
For more details, see Appendix~\ref{sec:decap}.

\subsubsection{Justification of Escape Checking}
\label{sec:justifying-unboxes}

System \capless{} provides insights into the unboxing restriction for escape checking proposed in the original System \ccformal{} \cite{DBLP:journals/toplas/BoruchGruszeckiOLLB23},
which has the restriction that
given a boxed value of type \lstinline|box T^{x1,...xn}|
it can be unboxed only if there is no \lstinline|cap|
in \lstinline|{x1,...,xn}|.
This is for \emph{escape checking}:
a scoped capability should not be used outside its defining scope.
Although this restriction is sound and useful, it feels somewhat
ad-hoc.
System \capless{} provides insights
and a theoretical justification
of this restriction.
In short,
\TCAP{} means existentially quantified capture sets,
and unboxing a \TCAP{} indicates using capabilities
that are unknown in the current scope,
which should be rejected.

The following is a concrete example
of an attempt to use a scoped capability outside its scope:
\begin{lstlisting}[language=Dotty]
def withFile[T](path: String)(op: (f: File^) => T): T = ...
val leaked = withFile[box () ->{cap} String]("test.txt"): f =>
  box () => f.read()
(unbox leaked)()  // error: cannot unbox with `cap'
\end{lstlisting}
\begin{itemize}
\item \lstinline|withFile| creates a scoped file capability given a path
and allows the usage of the file inside the scope of the function \lstinline|op|.
Afterwards the file is closed.
\item In the \lstinline|leaked| definition,
the user tries to return a closure that reads the file,
letting the scoped capability escape.
\item This program is rejected by capture checking,
as \lstinline|leaked| has the type \lstinline|box () ->{cap} String|,
and therefore is forbidden to be unboxed.
\end{itemize}
Intuitively,
the $\CAP$ in the type means that there are certain capabilities widen to the top,
which happens typically when a capability escapes its defining scope.

The equivalent of the previous example in System \capless{} is:
\begin{align*}
&\LET \<c,\textsf{leaked}\> = \textsf{withFile}[\BOX (() \rightarrow \textsf{String})\capt\set{c}](\textsf{"test.txt"})(f \Rightarrow \BOX ()\Rightarrow f.\textsf{read}()) \IN\\
&\LET z_0 = \set{c}\UNBOX \textsf{leaked} \IN z_0()
\end{align*}
Here, $\LET\<c,x\>=\cdots\IN\cdots$
is the form for unpacking an existential quantification.
Given a value of the type $\EXCAP{c'}T$,
this form binds the existential witness as $c$
and binds the value as $x$ where $c'$ is replaced by $c$ in $T$.
This form has the restriction that
the existential witness $c$ cannot be appear in the captured variables of the continuation term,
since this behavior signifies the capabilities witnessed by $c$ is used in the body
and we need a way to account for the capturing of these capabilities.
But we have no ways to approximate widen $c$
since it is completely abstract.
And the previous example violates this restriction:
the unbox operation $\set{c}\UNBOX\textsf{leaked}$
adds $c$ to the captured variables of the continuation term.
This signfies that some unknown and probably out-of-scope capabilities are used,
justifying the rejection of the program.

\subsection{Reach Refinement and Function Types}
\label{sec:reach-refinement-function-types}

Surprisingly, reach refinement \rruleref{r-fun} touches neither the domain nor the codomain of function types.
Let us first clarify why the codomain is not refined.
Assume $\textsf{IO}$ is a type in the context
and consider the example $x: (\forall(z: \textsf{IO}\capt\set{\CAP}) \textsf{IO}\capt\set{\CAP})\capt C$.
If we were to refine the codomain
and type $x$ as $(\forall(z: \textsf{IO}\capt\set{\CAP}) \textsf{IO}\capt\set{x^*})\capt C$,
we would implicitly treat the $\CAP$ in the codomain
as existentially quantified at the same scope of $x$,
namely $\EXCAP{c_{x^*}}\forall(z:\cdots)\textsf{IO}\capt\set{c_{x^*}}$.
This interpretation is incorrect
because that $\CAP$ may depend on capabilities introduced by the argument $z$.
A concrete example is the identity function $\lambda(z: \textsf{IO}\capt\set{\CAP}) z$.
A proper interpretation of $x$'s type is
$\forall(z:\cdots)\EXCAP{c}\textsf{IO}\capt\set{c}$.
Hence, the $\CAP$ in the codomain must not be replaced by $x^*$.

A similar reason applies to why the domain is unchanged.
One may naturally consider refining covariant $\CAP$ occurring in a ``double-flip'' scenario.
Consider an example\footnote{Slightly abusing the notation, we write $T\Rightarrow U$ as a shorthand for $(\forall(x:T)U)\capt\set{\CAP}$ to improve readability.}
$
x: (z_1: \textsf{IO}\capt\set{\CAP}\Rightarrow\textsf{Unit}) \Rightarrow (z_2: \textsf{IO}\capt\set{\CAP}) \Rightarrow \textsf{Unit}.
$
It is tempting to refine the covariantly occurring $\CAP$ in the domain,
yielding the type
$
(z_1: \textsf{IO}\capt\set{x^*}\Rightarrow\textsf{Unit}) \Rightarrow (z_2: \textsf{IO}\capt\set{\CAP}) \Rightarrow \textsf{Unit}.
$
However, this refinement is incorrect because that $\CAP$ might correspond to capabilities introduced by a later argument, such as $z_2$.
The function $\lambda(z_1:\cdots)\lambda(z_2:\cdots)z_1\,z_2$ is such an example.

\section{Proof of System \capless{}}

\subsection{Proof Devices}
\label{sec:capless-proof-devices}

\begin{figure*}[htbp]
\scriptsize

\flushleft{\textbf{Store Typing \quad $\typSta{\sta}{\Delta}$}}

\vspace{0.3em}

\begin{multicols}{2}

\infax[\ruledef{s-empty}]
{\typSta{\emptyset}{\emptyset}}

\infrule[\ruledef{s-val}]
{\typSta{\sta}{\G}\andalso
 \typs{\G}{v}{T}}
{\typSta{\sta, \VAL x\mapsto v}{\G, x: T}}
  
\end{multicols}

\caption{Store Typing}
\label{fig:store-typing}
\end{figure*}
 
Figure \ref{fig:store-typing} presents the store typing rules for System \capless{}.

\section{Proof of System \cappy{} and System \decap{}}
\label{sec:proof}

\subsection{System \decap{}}
\label{sec:decap}

\begin{figure*}[htbp]
\footnotesize

\flushleft{\textbf{Capture Set Encoding \quad $\encode{\<D,\rho,\rho^*\>}{C}{C'}$}}

\begin{multicols}{2}

\infax[\ruledef{i-empty}]
{\encode{\tau}{\set{}}{\set{}}}

\infrule[\ruledef{i-union}]
{\encode{\tau}{C_1}{C'_1}\andalso
 \encode{\tau}{C_2}{C'_2}}
{\encode{\tau}{C_1\cup C_2}{C'_1\cup C'_2}}

\infax[\ruledef{i-var}]
{\encode{\<D,\rho,\rho^*\>}{\set{x}}{\rho(x)}}

\infax[\ruledef{i-cap}]
{\encode{\<D,\rho,\rho^*\>}{\set{\CAP}}{D}}

\infax[\ruledef{i-reach}]
{\encode{\<D,\rho,\rho^*\>}{\set{x*}}{\rho^*(x)}}

\end{multicols}

\flushleft{\textbf{Type Encoding \quad $\encode{\<D,\rho,\rho^*\>}{T}{T'}$}}

\begin{multicols}{2}
  
\infrule[\ruledef{i-capt}]
{\encode{\tau}{S}{S'}\andalso\encode{\tau}{C}{C'}}
{\encode{\tau}{S\capt C}{S'\capt C'}}

\infax[\ruledef{i-top}]
{\encode{\tau}{\top}{\top}}

\infax[\ruledef{i-tvar}]
{\encode{\tau}{X}{X}}

\infrule[\ruledef{i-tfun}]
{\encode{\<D,\rho,\rho^*\>}{T}{T'}}
{\encode{\<D,\rho,\rho^*\>}{\forall[X<:\top]T}{\forall[X<:\top]T'}}

\infrule[\ruledef{i-cfun}]
{\encode{\<D,\rho,\rho^*\>}{T}{T'}}
{\encode{\<D,\rho,\rho^*\>}{\forall[c]T}{\forall[c]T'}}

\infrule[\ruledef{i-boxed}]
{\\ \\ \encode{\tau}{T}{S'\capt C'}}
{\encode{\tau}{\BOX T}{\forall[X<:\top](\forall[X<:\top]S'\capt C')\capt C'}}

\infrule[\ruledef{i-applied}]
{\kappa=(X_1^{\nu_1},\cdots,X_n^{\nu_n})\mapsto S\in\cctx\\
 \forall X_i^+,\encode{\tau}{T_i}{T'_i}\andalso
 \forall X_i^-,T'_i = T_i\\
 \encode{\tau}{[X_1:=T'_1,\cdots,X_n:=T'_n]S}{U}}
{\encode{\tau}{\kappa[T_1,\cdots,T_n]}{U}}

\end{multicols}

\infrule[\ruledef{i-fun}]
{\encode{\<\set{c_{x^*}},\rho,\rho^*\>}{S}{S'}\andalso
 \encode{\<\set{*}, \rho,\rho^*\>}{C_a}{B_a}\\
 \encode{\<\set{c},\rho[x\mapsto \set{c_x}],\rho^*[x\mapsto \set{c_{x*}}]\>}{U}{U'}\andalso
 \encode{\<D,\rho,\rho^*\>}{C_f}{C'_f}\\
 C''_f = \begin{cases}
   C'_f\cup\set{c_x} & \text{if $\alpha = \epsilon$}, \\
   C'_f\cup\set{c_x,c_{x*}} & \text{if $\alpha = \USE$}. \\
 \end{cases}}
{\encode{\<D,\rho,\rho^*\>}{(\forall^\alpha(x:S\capt C_a)U)\capt C_f}{\forall[c_x<:B_a]\forall[c_{x*}](\forall(x:S'\capt\set{c_x})\EXCAP{c} U')\capt C''_f}}

\vspace{-1.5em}
\caption{Semantics of System \decap{}}
\label{fig:cappy-encoding}
\end{figure*}
 
Figure \ref{fig:cappy-encoding} presents System \decap{},
a translation system
that translates \cappy{} types to \calculus{} types.

\begin{definition}[Translation Context]
A translation context, denoted as $\<C,\rho,\rho^*\>$
and by the meta-variable $\tau$,
consists of the following components:
\begin{itemize}
\item a capture set $D$ in System \capless{};
\item a function $\rho$ which maps term variable names to capture sets in System \capless{};
\item and a function $\rho^*$ which maps term and type variable names to capture sets in System \capless{}.
\end{itemize}
\end{definition}

Here, $D$ is called the \emph{interpretation}.
It assigns meanings to the $\CAP$s in the source language.
The $\rho$ function maps capabilities into their underlying capture sets in the target system.
And similarly for $\rho^*$, which maps reach capabilities.

\begin{definition}[Proper Translation Context]
\label{def:proper-tcontext}
A translation context $\tau = \<C,\rho,\rho^*\>$ is \emph{proper}
under a source context $\G$ and a target context $\Delta$, iff
\begin{itemize}
\item $D$, the codomain of $\rho$ and $\rho^*$, are well-formed in $\Delta$;
\item $\dom{\Delta}\subseteq\dom{\rho}$ and $\dom{\Delta}\subseteq\dom{\rho^*}$;
\item $\rho^*$ is a bijection;
\item for any $x:S\capt C\in\G$,
we have $\encode{\<\rho(x),\rho,\rho^*\>}{C}{C'}$
and $\subs{\Delta}{\rho(x)}{C'}$.
\item for any $x:S\capt C\in\G$,
we have $x:S'\capt\rho(x)\in\G$
where $\encode{\<\rho^*(x),\rho,\rho^*\>}{S}{S'}$.
\item for any $X<:S\in\G$,
we have $\rho^*(X) = \set{c_{X}}$
and $X<:S'\in\Delta$
where $\encode{\<\set{c_{X}},\rho,\rho^*\>}{S}{S'}$.
\end{itemize}
\end{definition}

\subsection{Properties of System \decap{}}

\begin{theorem}[Capture Set Translation is Complete]
\label{theorem:capset-interp-complete}
Given any $\<D,\rho,\rho^*\>$ and $C$,
there exists $I$ such that
$\encode{\<D,\rho,\rho^*\>}{C}{I}$.
\end{theorem}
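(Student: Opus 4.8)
The plan is to establish this totality (``completeness'') result by a direct structural induction on the capture set $C$, relying on the fact that the five capture-set encoding rules of System \decap{} are syntax-directed and jointly exhaust the grammar of capture sets. First I would recall that every \cappy{} capture set is a finite set $\set{\theta_1,\cdots,\theta_n}$ whose elements $\theta$ are, by the grammar, either a variable $x$ (including the universal capability $\CAP$, which sits in the variable category) or a reach capability $x^*$. Viewing $C$ as built from $\set{}$ by repeated union of singletons, I would induct on the number of elements of $C$.

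In the base case $C = \set{}$, rule \ruleref{i-empty} immediately yields $\encode{\<D,\rho,\rho^*\>}{\set{}}{\set{}}$, so $I = \set{}$ witnesses the claim. In the inductive step I would write $C = \set{\theta}\cup C'$ for a chosen element $\theta$ and the remainder $C'$, and case-split on the shape of $\theta$: if $\theta = \CAP$ then \ruleref{i-cap} gives the witness $D$; if $\theta = x$ for any other variable then \ruleref{i-var} gives $\rho(x)$; and if $\theta = x^*$ then \ruleref{i-reach} gives $\rho^*(x)$. Each of these is an axiom that fires unconditionally, which is exactly where totality of the maps $\rho$ and $\rho^*$ matters: since they are total functions on variable names, $\rho(x)$ and $\rho^*(x)$ are always defined. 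The induction hypothesis supplies some $I'$ with $\encode{\<D,\rho,\rho^*\>}{C'}{I'}$, and rule \ruleref{i-union} then combines the singleton encoding $I_\theta$ with $I'$ to produce $\encode{\<D,\rho,\rho^*\>}{C}{I_\theta\cup I'}$.

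I do not expect a genuine obstacle, as this is in essence a coverage-and-well-definedness check rather than a deep argument. The two points deserving care are (i) confirming the three singleton cases really do exhaust the capture grammar---in particular that $\CAP$ is routed through its dedicated rule \ruleref{i-cap} and is not meant to fall through \ruleref{i-var}---and (ii) ensuring the decomposition $C = \set{\theta}\cup C'$ is compatible with the induction measure. In the mechanization these reduce to choosing a suitable finite-set (or list-based) induction principle and discharging the three-way case analysis on $\theta$; the totality of $\rho$ and $\rho^*$ eliminates any side condition, so no premises of the axioms can fail.
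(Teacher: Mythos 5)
Your proposal is correct and follows essentially the same route as the paper: induction on the size of $C$, handling the empty set via \ruleref{i-empty}, the three singleton forms ($\CAP$, $x$, $x^*$) via their respective axioms, and recombining with \ruleref{i-union}. The only cosmetic difference is that you peel off one element at a time ($C = \set{\theta}\cup C'$) whereas the paper splits $C$ into two strictly smaller halves, which changes nothing of substance.
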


\begin{proof}
By induction on $|C|$, the size of $C$.

\emph{Case $|C| = 0$}.
Then $C$ is empty. Conclude this by the \ruleref{i-empty} rule.

\emph{Case $|C| = 1$}.
Then $C$ is a singleton.
If $C = \set{\CAP}$, set $I = D$ and conclude this case by the \ruleref{i-cap} rule.
If $C = \set{x}$, set $I = \rho(x)$ and this case is concluded by the \ruleref{i-var} rule.
Otherwise, we have $C = \set{x^*}$ for some $x$.
This can be concluded by the \ruleref{i-reach} rule.

\emph{Case $|C| > 1$}.
Then we can split $C = C_1\cup C_2$
where $|C_1| < |C|$ and $|C_2| <: |C|$.
We conclude by the IH and the \ruleref{i-union} rule.
\end{proof}

\begin{theorem}[Capture Set Translation is Monotonic (I)]
\label{theorem:capset-interp-monotonic}
Given $\<D_1,\rho,\rho^*\>$ and $\<D_2,\rho,\rho^*\>$,
two proper translation contexts under $\G$ and $\Delta$,
if
\begin{enumerate}
\item $\subs{\Delta}{D_1}{D_2}$,
\item $\encode{\<D_1,\rho,\rho^*\>}{C}{I_1}$,
\item and $\encode{\<D_2,\rho,\rho^*\>}{C}{I_2}$,
\end{enumerate}
then
$\subs{\Delta}{I_1}{I_2}$
\end{theorem}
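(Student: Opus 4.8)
The plan is to proceed by induction on $|C|$, the size of the capture set, mirroring the structure of the completeness proof (Theorem~\ref{theorem:capset-interp-complete}). The key observation that makes this tractable is that the two translation contexts share the same $\rho$ and $\rho^*$ and differ only in the interpretation $D$ of the universal capability. Consequently, the encodings $I_1$ and $I_2$ of $C$ agree on every element except possibly $\CAP$, and the only place where hypothesis~(1), $\subs{\Delta}{D_1}{D_2}$, does any work is the singleton case $C = \set{\CAP}$.

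For the base cases: if $C = \set{}$ then $I_1 = I_2 = \set{}$ by \ruleref{i-empty}, and $\subs{\Delta}{\set{}}{\set{}}$ follows from \rruleref{sc-elem}. For the three singleton shapes, if $C = \set{\CAP}$ then \ruleref{i-cap} gives $I_1 = D_1$ and $I_2 = D_2$, so the conclusion is exactly hypothesis~(1); if $C = \set{x}$ or $C = \set{x^*}$, then \ruleref{i-var} or \ruleref{i-reach} give $I_1 = I_2$ (namely $\rho(x)$ or $\rho^*(x)$), and subcapturing holds reflexively via \rruleref{sc-elem}. For the inductive step $|C| > 1$, I would split $C = C_1 \cup C_2$ with both parts strictly smaller, obtain $I_1 = I_1' \cup I_1''$ and $I_2 = I_2' \cup I_2''$ from \ruleref{i-union}, apply the induction hypothesis to get $\subs{\Delta}{I_1'}{I_2'}$ and $\subs{\Delta}{I_1''}{I_2''}$, and combine them: since $I_2' \subseteq I_2' \cup I_2''$, \rruleref{sc-elem} together with \rruleref{sc-trans} yields $\subs{\Delta}{I_1'}{I_2' \cup I_2''}$, symmetrically $\subs{\Delta}{I_1''}{I_2' \cup I_2''}$, and then \rruleref{sc-set} delivers $\subs{\Delta}{I_1' \cup I_1''}{I_2' \cup I_2''}$.

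The one point requiring care — and the nearest thing to an obstacle — is that the union rule \ruleref{i-union} admits several decompositions of the same $C$, so a priori the derivations witnessing $I_1$ and $I_2$ need not split $C$ the same way, which would complicate an induction phrased directly on derivations. The clean way to dispatch this is to first note that the encoding is functional in its output: each $\theta \in C$ contributes a fixed image ($D$, $\rho(x)$, or $\rho^*(x)$) and \ruleref{i-union} merely collects these, so the encoded set equals $\bigcup_{\theta \in C}$ of the per-element images independently of the chosen decomposition. With this remark the induction on $|C|$ goes through without having to align the shapes of the two derivations, and the rest of the argument is entirely routine. Properness of the two contexts is needed only to ensure that $D_1, D_2, \rho(x), \rho^*(x)$ are well-formed in $\Delta$, so that all the subcapturing judgments involved are meaningful.
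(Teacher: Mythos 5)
Your proposal is correct and takes essentially the same route as the paper's proof, which inducts on the first translation derivation rather than on $|C|$ but performs the identical case analysis: hypothesis~(1) is used only for $\set{\CAP}$, the other singletons translate identically under both contexts, and the union case combines the inductive hypotheses via \rruleref{sc-elem}, \rruleref{sc-trans}, and \rruleref{sc-set}. Your explicit remark that the encoding is functional per-element (so the two derivations need not split $C$ the same way) addresses a point the paper glosses over with ``analyze the second translation derivation,'' and is a welcome clarification rather than a divergence.
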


\begin{proof}
By induction on the first translation derivation.

\emph{Case \ruleref{i-empty}}.
Then $C = I_1 = \set{}$.
This case can be concluded immediately,
since an empty capture set is a subcapture of any capture set.

\emph{Case \ruleref{i-union}}.
Then $C = C_{1}\cup C_{2}$,
$I_1 = I_{11}\cup I_{12}$,
$\encode{\<D_1,\rho,\rho^*\>}{C_{1}}{I_{11}}$,
and $\encode{\<D_1,\rho,\rho^*\>}{C_{2}}{I_{12}}$.
We can analyze the second translation derivation
and conclude this case by applying the IH and the \ruleref{i-union} rule.

\emph{Case \ruleref{i-var}}.
Then $C = \set{x}$.
We can show that $I_1 = I_2 = \rho(x)$.
We can conclude this case by the reflexivity of the subcapturing relation.

\emph{Case \ruleref{i-reach}}.
Analogous to the \ruleref{i-var} case.

\emph{Case \ruleref{i-cap}}.
Then $C = \set{\CAP}$.
By analyzing the translation derivations,
we can show that $I_1 = D_1$ and $I_2 = D_2$.
So this case follows directly from the assumption.
\end{proof}

\begin{theorem}[Redundant Interpretation]
\label{theorem:interp-redundant}
Given any $\encode{\<D,\rho,\rho^*\>}{C}{I}$
such that $\CAP\notin C$,
we can show that
$\encode{\<\set{},\rho,\rho^*\>}{C}{I}$.
\end{theorem}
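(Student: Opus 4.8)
The plan is to proceed by straightforward structural induction on the derivation of $\encode{\<D,\rho,\rho^*\>}{C}{I}$. The key observation that drives the whole argument is that, among the five capture-set encoding rules in \Cref{fig:cappy-encoding}, only \ruleref{i-cap} actually consults the interpretation component $D$: the axioms \ruleref{i-empty}, \ruleref{i-var}, and \ruleref{i-reach} produce their results ($\set{}$, $\rho(x)$, and $\rho^*(x)$, respectively) independently of $D$, while \ruleref{i-union} merely recombines the outputs of its premises. Consequently, replacing $D$ by $\set{}$ leaves every rule except \ruleref{i-cap} applicable verbatim, and the hypothesis $\CAP\notin C$ is exactly what rules out the one offending rule.

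Concretely, I would dispatch the cases as follows. In the base cases \ruleref{i-empty}, \ruleref{i-var}, and \ruleref{i-reach}, the produced set $I$ does not mention $D$ at all, so the very same axiom instance immediately yields $\encode{\<\set{},\rho,\rho^*\>}{C}{I}$. In the \ruleref{i-union} case, where $C = C_1\cup C_2$ and $I = I_1\cup I_2$ with $\encode{\<D,\rho,\rho^*\>}{C_1}{I_1}$ and $\encode{\<D,\rho,\rho^*\>}{C_2}{I_2}$, the hypothesis $\CAP\notin C$ gives both $\CAP\notin C_1$ and $\CAP\notin C_2$, so the induction hypothesis applies to each premise, and re-assembling with \ruleref{i-union} under the empty interpretation closes the case.

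The remaining case, \ruleref{i-cap}, would force $C = \set{\CAP}$, directly contradicting the assumption $\CAP\notin C$; hence this case is vacuous and requires no treatment. There is essentially no hard part here: the only point needing care is that $\CAP\notin C$ must be propagated to the two subcomponents in the union case so that the induction hypothesis is available, but this is immediate since absence from a union decomposes over its parts. The lemma is thus the precise formal expression of the intuition that the interpretation of $\CAP$ is dead weight whenever $\CAP$ is syntactically absent from the set being translated.
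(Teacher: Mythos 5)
Your proposal is correct and follows exactly the paper's own argument: induction on the translation derivation, noting that the base cases \ruleref{i-empty}, \ruleref{i-var}, and \ruleref{i-reach} never consult $D$, the \ruleref{i-union} case goes through by the induction hypotheses and the same rule, and the \ruleref{i-cap} case is vacuous because $\CAP\notin C$. No gaps.
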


\begin{proof}
By induction on the translation derivation.

\emph{Case \ruleref{i-empty}, \ruleref{i-var} and \ruleref{i-reach}}.
These cases follow directly from the premise.

\emph{Case \ruleref{i-union}}.
By the IHs and the same rule.

\emph{Case \ruleref{i-cap}}.
This case is absurd.
\end{proof}

\begin{theorem}[Capture Set Translation is Monotonic (II)]
\label{theorem:capset-interp-monotonic-helper}
Given a proper translation context $\<\set{},\rho,\rho^*\>$
under contexts $\G$ and $\Delta$,
two capture sets $\subs{\G}{C_1}{C_2}$
such that $\CAP\notin C_1$ and $C_2$,
if $\encode{\<\set{},\rho,\rho^*\>}{C_1}{I_1}$
and $\encode{\<\set{},\rho,\rho^*\>}{C_2}{I_2}$,
then $\subs{\Delta}{I_1}{I_2}$.
\end{theorem}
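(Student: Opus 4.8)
The plan is to prove the statement by induction on the derivation of $\subs{\G}{C_1}{C_2}$, which in System \cappy{} is built from exactly four rules: \rruleref{sc-trans}, \rruleref{sc-var}, \rruleref{sc-elem}, and \rruleref{sc-set}. Throughout I will use that under a fixed $\<\set{},\rho,\rho^*\>$ the capture-set translation is essentially a function of its input: the result of $\encode{\<\set{},\rho,\rho^*\>}{C}{I}$ is $I=\bigcup_{\theta\in C}\mathsf{tr}(\theta)$, where a term variable $x$ maps to $\rho(x)$, a reach capability $x^*$ maps to $\rho^*(x)$, and $\CAP$ maps to $\set{}$. In particular the translation exists for every $C$ by \Cref{theorem:capset-interp-complete} and is deterministic as a set.

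Before the main induction I would establish a small auxiliary fact that I expect to be the crux: if $\subs{\G}{C}{C'}$ and $\CAP\notin C'$, then $\CAP\notin C$. This follows by a separate induction on the subcapturing derivation. In \rruleref{sc-elem} we have $C\subseteq C'$, so $\CAP\in C$ would force $\CAP\in C'$; in \rruleref{sc-var} the subject is a singleton $\set{x}$ with $x\neq\CAP$ (the context never binds $\CAP$); in \rruleref{sc-set} we split $C=C_a\cup C_b$ and the inductive hypotheses exclude $\CAP$ from each part; and \rruleref{sc-trans} threads the hypothesis through the intermediate set. The content of this lemma is that $\CAP$ can never be ``absorbed'' while moving upward in a subcapturing chain, which is precisely what is needed to discharge transitivity in the main proof.

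For the main induction, the routine cases are \rruleref{sc-elem} and \rruleref{sc-set}. In \rruleref{sc-elem} we have $C_1\subseteq C_2$ with both sets $\CAP$-free, so by the explicit formula $I_1\subseteq I_2$ and \capless{}'s \rruleref{sc-elem} applies. In \rruleref{sc-set} we split $C_1=C_{1a}\cup C_{1b}$, observe that $\CAP$ is absent from each part, apply the inductive hypothesis to each to obtain $\subs{\Delta}{I_{1a}}{I_2}$ and $\subs{\Delta}{I_{1b}}{I_2}$, and recombine with \rruleref{sc-set}. The interesting case is \rruleref{sc-var}, where $C_1=\set{x}$ and $x:S\capt C_2\in\G$: here $I_1=\rho(x)$, and the fourth clause of \Cref{def:proper-tcontext} supplies a $C'$ with $\encode{\<\rho(x),\rho,\rho^*\>}{C_2}{C'}$ and $\subs{\Delta}{\rho(x)}{C'}$. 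Since $\CAP\notin C_2$, \Cref{theorem:interp-redundant} lets me replace the interpretation $\rho(x)$ by $\set{}$, and determinism identifies $C'$ with $I_2$, yielding $\subs{\Delta}{I_1}{I_2}$.

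The transitivity case \rruleref{sc-trans} is where the auxiliary lemma pays off, and it is the main obstacle. From $\subs{\G}{C}{C_2}$ with $\CAP\notin C_2$, the lemma gives $\CAP\notin C$, so both premises $\subs{\G}{C_1}{C}$ and $\subs{\G}{C}{C_2}$ now satisfy the $\CAP$-freeness side conditions required by the inductive hypothesis. I translate the intermediate $C$ to some $I$ (which exists by \Cref{theorem:capset-interp-complete}), apply the inductive hypotheses to get $\subs{\Delta}{I_1}{I}$ and $\subs{\Delta}{I}{I_2}$, and close with \rruleref{sc-trans} in System \capless{}. This step also explains why the seemingly symmetric statement needs the hypotheses $\CAP\notin C_1,C_2$ at all: without the auxiliary lemma the intermediate set could carry a $\CAP$ that the fixed interpretation $\set{}$ silently erases, breaking the inductive hypothesis, and the lemma is exactly what rules out that scenario under a $\CAP$-free upper bound.
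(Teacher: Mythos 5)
Your proof is correct and follows the same overall strategy as the paper's: induction on the subcapturing derivation, with \rruleref{sc-elem} and \rruleref{sc-set} handled by set reasoning and the IH, and \rruleref{sc-var} discharged via the fourth clause of the proper-translation-context definition together with Theorem~\ref{theorem:interp-redundant}. The one genuine difference is your auxiliary lemma that $\CAP$-freeness propagates downward along subcapturing (if $\subs{\G}{C}{C'}$ and $\CAP\notin C'$ then $\CAP\notin C$), which you use to justify applying the induction hypothesis to both premises of \rruleref{sc-trans}. The paper's own proof of that case simply translates the intermediate set $C_0$ and invokes the IH without establishing that $\CAP\notin C_0$, even though the theorem's hypotheses demand it; your lemma is exactly the missing side condition, and it holds for the reasons you give (the context never binds $\CAP$, so \rruleref{sc-var} cannot apply to it, and the remaining rules preserve membership upward). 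So your write-up is not just equivalent but slightly more rigorous than the paper's on the transitivity case.
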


\begin{proof}
By induction on the subcapturing derivation.

\emph{Case \rruleref{sc-trans}}.
Then $\subs{\G}{C_1}{C_0}$
and $\subs{\G}{C_0}{C_2}$
for some $C_0$.
By Theorem \ref{theorem:capset-interp-complete},
we can show that
$\encode{\<\set{},\rho,\rho^*\>}{C_0}{I_0}$ for some $I_0$.
We conclude this case by the IH
and the \ruleref{sc-trans} rule.

\emph{Case \rruleref{sc-set}}.
Then $C_1 = C_{11}\cup C_{12}$
$\subs{\G}{C_{11}}{C_2}$
and $\subs{\G}{C_{12}}{C_2}$.
We conclude by the IH and the \ruleref{sc-set} rule.

\emph{Case \rruleref{sc-var}}.
Then $C_1=\set{x}$
and $x:S\capt C_2\in \G$.
Then by Definition~\ref{def:proper-tcontext},
we can show that
$\subs{\Delta}{\rho(x)}{I_2}$
for some $\encode{\<D_2,\rho,\rho^*\>}{C_2}{I_2}$.
Since $\CAP\notin C_2$,
we invoke Theorem~\ref{theorem:interp-redundant} to show that
$\encode{\<\set{},\rho,\rho^*\>}{C_2}{I_2}$.
We can therefore conclude this case.

\emph{Case \rruleref{sc-elem}}.
Then $C_1\subseteq C_2$.
We can show that $D_1\subseteq D_2$,
and conclude this case by the \ruleref{sc-elem} rule.
\end{proof}

\begin{theorem}[Capture Set Translation is Monotonic (III)]
\label{theorem:capset-interp-monotonic-alt}
Given a proper translation context $\<D_1,\rho,\rho^*\>$
under contexts $\G$ and $\Delta$,
and $\subs{\G}{C_1}{C_2}$,
then for any $\encode{\<D_1,\rho,\rho^*\>}{C_1}{I_1}$,
there exists $D_2$ and $I_2$ such that
$\encode{\<D_2,\rho,\rho^*\>}{C_2}{I_2}$,
$\subs{\Delta}{I_1}{I_2}$,
and $D_2\subseteq I_1$.
\end{theorem}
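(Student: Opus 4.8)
The plan is to forgo a direct induction on the subcapturing derivation $\subs{\G}{C_1}{C_2}$ and instead split on whether the universal capability $\CAP$ occurs in $C_2$. The reason is the output requirement $D_2 \subseteq I_1$: the interpretation I must return is pinned below the \emph{already fixed} translation $I_1$ of $C_1$, and this constraint does not survive \rruleref{sc-trans}. An inductive call across a transitive step $\subs{\G}{C_1}{C_0}$, $\subs{\G}{C_0}{C_2}$ would only deliver $D_2 \subseteq I_0$ for some intermediate $I_0$ with $\subs{\Delta}{I_1}{I_0}$, and a subcapturing relation $\subs{\Delta}{I_1}{I_0}$ gives no control over the set inclusion $D_2 \subseteq I_1$. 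A global choice of $D_2$, dictated by $C_2$ rather than threaded through the derivation, avoids this.

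First I would prove a propagation lemma: if $\subs{\G}{C_1}{C_2}$, then $\CAP \in C_1$ implies $\CAP \in C_2$. Since $\CAP$ is a constant and is never bound in $\G$, rule \rruleref{sc-var} can never fire with $\CAP$ on its left-hand side, and the remaining rules \rruleref{sc-elem}, \rruleref{sc-set}, and \rruleref{sc-trans} all preserve membership of $\CAP$; a short induction on the derivation then gives the claim. Contrapositively, $\CAP \notin C_2$ entails $\CAP \notin C_1$, which is exactly what the $\CAP$-free branch below needs.

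For the case $\CAP \in C_2$, I set $D_2 := I_1$. Then $D_2 \subseteq I_1$ is immediate, and since $C_2$ mentions $\CAP$, the encoding rules \ruleref{i-cap} and \ruleref{i-union} force the resulting $I_2$ (which exists by \Cref{theorem:capset-interp-complete}) to contain $D_2 = I_1$ as a subset; hence $I_1 \subseteq I_2$, and $\subs{\Delta}{I_1}{I_2}$ follows by \rruleref{sc-elem}. For the case $\CAP \notin C_2$, the propagation lemma gives $\CAP \notin C_1$, so by \Cref{theorem:interp-redundant} the value $I_1$ is also the translation of $C_1$ under the empty interpretation $\<\set{},\rho,\rho^*\>$. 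I then take $D_2 := \set{}$ (so $D_2 \subseteq I_1$ trivially) and let $I_2$ be the translation of $C_2$ under $\<\set{},\rho,\rho^*\>$. Properness is insensitive to the interpretation component—the clauses of \Cref{def:proper-tcontext} governing term and type bindings each fix their own interpretation ($\rho(x)$, $\rho^*(x)$, and $\set{c_X}$ respectively) rather than the ambient $D$—so $\<\set{},\rho,\rho^*\>$ is again proper, and \Cref{theorem:capset-interp-monotonic-helper} applies directly to conclude $\subs{\Delta}{I_1}{I_2}$.

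The main obstacle is the tension between the two conclusions $\subs{\Delta}{I_1}{I_2}$ and $D_2 \subseteq I_1$: enlarging $D_2$ eases the first but threatens the second. The resolving insight is that $\CAP$ is the sole point of flexibility in the interpretation. When $\CAP \in C_2$, I can spend that flexibility by setting $D_2 = I_1$ and letting the $\CAP$-slot of the encoding reproduce $I_1$ inside $I_2$; when $\CAP \notin C_2$, the interpretation is irrelevant by Redundant Interpretation, so the statement collapses to the already-established $\CAP$-free monotonicity of \Cref{theorem:capset-interp-monotonic-helper}. I expect the only delicate bookkeeping to be verifying that switching the interpretation to $\set{}$ preserves properness, and that the encoding of a capture set containing $\CAP$ literally has its interpretation as a subset.
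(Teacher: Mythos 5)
Your proposal is correct and follows essentially the same route as the paper's proof: a case split on whether $\CAP\in C_2$, setting $D_2=I_1$ and using \rruleref{sc-elem} in the first case, and setting $D_2=\set{}$ and combining \Cref{theorem:interp-redundant} with \Cref{theorem:capset-interp-monotonic-helper} in the second. You additionally spell out two details the paper leaves implicit (the $\CAP$-propagation lemma justifying $\CAP\notin C_2\Rightarrow\CAP\notin C_1$, and the inclusion $D_2\subseteq I_2$ via \ruleref{i-cap} and \ruleref{i-union}), both of which are sound.
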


\begin{proof}
By a case analysis on whether $\CAP\in C_2$.

\emph{Case $\CAP\in C_2$}.
Then set $D_2 = I_1$.
By Theorem~\ref{theorem:capset-interp-complete},
we can show that
$\encode{\<D_2,\rho,\rho^*\>}{C_2}{I_2}$ for some $I_2$.
One can verify that $D_2 = I_1\subseteq I_2$.
We can conclude this case by the \rruleref{sc-elem} rule.

\emph{Case $\CAP\notin C_2$}.
Then we can show that $\CAP\notin C_1$ as well.
By Theorem~\ref{theorem:interp-redundant}
we can show that
$\encode{\<\set{},\rho,\rho^*\>}{C_1}{I_1}$.
We then set $D_2=\set{}$ and
can show that
$\encode{\<D_2,\rho,\rho^*\>}{C_2}{I_2}$
for some $I_2$ by Theorem~\ref{theorem:capset-interp-complete}.
Then, we invoke Theorem~\ref{theorem:capset-interp-monotonic-helper} to show that
$\subs{\Delta}{I_1}{I_2}$.
This case can be concluded.
\end{proof}

\begin{theorem}[Type Translation is Complete]
\label{theorem:type-interp-complete}
Given any $\tau$ and $T$,
$\encode{\tau}{T}{U}$
for some $U$.
\end{theorem}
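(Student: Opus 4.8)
The plan is to prove totality of the type-translation relation $\encode{\tau}{T}{U}$ by a single well-founded induction, universally quantified over the translation context $\tau$ so that the inductive hypothesis may be re-invoked at the modified contexts appearing in \ruleref{i-fun} (e.g.\ $\rho[x\mapsto\set{c_x}]$). The capture-set premises occurring in \ruleref{i-capt}, \ruleref{i-fun} and elsewhere are discharged immediately by \Cref{theorem:capset-interp-complete} (capture-set translation is complete), so the real content is to exhibit, for each syntactic form of $T$, exactly one applicable type rule whose type premises are on strictly smaller arguments.

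For every form except applied types this is routine structural induction: \ruleref{i-top} and \ruleref{i-tvar} are axioms; \ruleref{i-capt}, \ruleref{i-tfun}, \ruleref{i-cfun}, \ruleref{i-boxed} and \ruleref{i-fun} recurse on proper subterms of $T$ and combine the results with the completeness of capture-set translation. Here I would adopt the standing convention that a pure type $S$ is read as $S\capt\set{}$, so that the shape-plus-capture pattern $S'\capt C'$ demanded in \ruleref{i-boxed} is always available.

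The crux, and the only place where a naive structural induction breaks, is \ruleref{i-applied}: its type premise translates the substituted body $[X_1:=T'_1,\cdots,X_n:=T'_n]S$, which is not a subterm of $\kappa[T_1,\cdots,T_n]$ and may even be structurally larger. Termination here rests entirely on the stratification of the type-definition context enforced by $\wf{}{\cctx}$ (\Cref{fig:cappy-well-formedness-tycon}, rule \rruleref{ds-cons}): the body $S$ of $\kappa$ may mention only definitions declared earlier, i.e.\ of strictly smaller rank. Accordingly I would induct on the lexicographic pair $(\mu(T),\,|T|)$, where $|T|$ is structural size and $\mu(T)$ is the multiset of type-definition ranks of all applied-type occurrences in $T$, ordered by the Dershowitz--Manna multiset extension of $<$ on $\mathbb{N}$. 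Since a translated (\capless) type contains no applied-type nodes at all, each covariant argument $T'_i$ contributes the empty multiset, so every covariant recursive call $\encode{\tau}{T_i}{T'_i}$ has rank-multiset $\mu(T_i)\subsetneq\mu(T)$ (it lacks the $\mathrm{rank}(\kappa)$ occurrence) and is thus strictly smaller.

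For the body itself, expanding the rank-$k$ node $\kappa[\cdots]$ deletes one occurrence of $k$ from $\mu(T)$, drops the covariant arguments' contributions entirely, re-uses the contravariant arguments' sub-multisets unchanged, and introduces from $S$ only occurrences of rank $<k$. Deleting an element and adding only strictly smaller ones is precisely a strict decrease in the multiset order, so $\mu([X_1:=T'_1,\cdots]S)<\mu(T)$ and the IH applies, yielding the witness $U$. The hard part will be exactly this termination argument: the obvious measure (maximal rank, then size) fails, because a contravariant argument of rank $\ge k$ survives into the expanded body and keeps the maximal rank fixed while the size grows; choosing the multiset ordering over ranks is what rescues the proof, and the key observation that makes it go through is that covariant arguments genuinely drop to the empty multiset, since the target calculus \capless{} has no applied types.
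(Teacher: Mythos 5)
Your diagnosis of where the difficulty lies is sharper than the paper's own proof: the paper disposes of this theorem with ``straightforward induction on the structure of $T$,'' which does not actually cover \ruleref{i-applied}, since its premise $\encode{\tau}{[X_1:=T'_1,\cdots,X_n:=T'_n]S}{U}$ concerns a type that is not a subterm of $\kappa[T_1,\cdots,T_n]$ and may well be larger. Your handling of all the other cases (structural recursion with $\tau$ universally quantified so the IH survives the context updates in \ruleref{i-fun}, reading a pure $S$ as $S\capt\set{}$ for \ruleref{i-boxed}, and discharging capture-set premises via Theorem~\ref{theorem:capset-interp-complete}) matches what the paper intends, and your observation that the translated covariant arguments $T'_i$ are \capless{} types and therefore contain no applied-type nodes is exactly the right fact for the covariant half of the expansion.

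However, the multiset measure does not go through in the contravariant half. Contravariant arguments are substituted into the body \emph{untranslated} ($T'_i = T_i$), and nothing in Figure~\ref{fig:cappy-well-formedness-tycon} forbids a contravariant parameter from occurring several times in the body $S$, nor forbids the argument $T_i$ from mentioning definitions of rank $\ge \mathrm{rank}(\kappa)$ (stratification constrains only the bodies of definitions, not the arguments at a use site). Concretely, take $\cctx = \kappa_1=(X^-)\mapsto\forall(x_1:X)\forall(x_2:X)\top,\ \kappa_2=()\mapsto\top$ and $T=\kappa_1[\kappa_2[]]$: then $\mu(T)=\set{1,2}$, while the expanded body $\forall(x_1:\kappa_2[])\forall(x_2:\kappa_2[])\top$ has $\mu=\set{2,2}$, which is strictly \emph{greater} in the Dershowitz--Manna order (an occurrence of rank $1$ is removed but an occurrence of rank $2>1$ is added). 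The theorem itself is not in danger --- expansion over an acyclic, stratified definition context terminates --- but your measure does not witness it. A standard repair is to replace the single well-founded induction by a nested one: an outer induction on the rank $k$ of the definition being expanded, with an inner structural induction on its body $S$ strengthened to a substitution lemma of the form ``if translation is total on each $\sigma(X_i)$, then it is total on $[\sigma]S$''; occurrences of the parameters in $S$ are then leaves discharged by hypothesis rather than by a size measure, and every applied type occurring properly inside $S$ has rank $<k$ by \rruleref{ds-cons}, so the outer IH applies.
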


\begin{proof}[Proof Sketch]
By straightforward induction on the structure of $T$.
Use Theorem~\ref{theorem:capset-interp-complete} and the IH to conclude each case.
\end{proof}

\begin{theorem}[Capture Translation is Injective]
\label{theorem:capset-interp-injective}
Given $\encode{\tau}{C}{I_1}$
and $\encode{\tau}{C}{I_2}$,
we can show that $I_1 = I_2$.
\end{theorem}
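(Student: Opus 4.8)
The plan is to recognize that, although stated as ``injectivity,'' the statement is really a \emph{determinacy} (functionality) property: since the translation context $\tau = \<D,\rho,\rho^*\>$ is fixed and the data $D$, $\rho$, $\rho^*$ are part of $\tau$, I want to show that the image $I$ of a capture set $C$ under the translation relation is uniquely determined by $C$ and $\tau$. The only source of nondeterminism in the rules of Figure~\ref{fig:cappy-encoding} is \ruleref{i-union}, which is not syntax-directed: a capture set such as $\set{x,y}$ can be split as $\set{x}\cup\set{y}$, as $\set{x,y}\cup\set{}$, and so on, giving many distinct derivations a priori. The hard part is thus to show that all of these decompositions collapse to the same answer.

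To do this, first I would define a canonical image. For $\tau = \<D,\rho,\rho^*\>$, let $\phi_\tau$ act on a single capture by $\phi_\tau(x) = \rho(x)$, $\phi_\tau(x^*) = \rho^*(x)$, and $\phi_\tau(\CAP) = D$, and set $\Phi_\tau(C) = \bigcup_{\theta\in C}\phi_\tau(\theta)$. The key observation is that $\Phi_\tau$ distributes over set union: because target-level union is idempotent, commutative, and associative (it is ordinary set union on capture sets), we have $\Phi_\tau(C_1\cup C_2) = \Phi_\tau(C_1)\cup\Phi_\tau(C_2)$, with any overlap between $C_1$ and $C_2$ absorbed by idempotence.

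Next I would prove, by induction on the structure of a derivation of $\encode{\tau}{C}{I}$, that $I = \Phi_\tau(C)$. The base cases \ruleref{i-empty}, \ruleref{i-var}, \ruleref{i-cap}, and \ruleref{i-reach} are immediate, since $\Phi_\tau(\set{}) = \set{}$ and $\Phi_\tau$ agrees with $\phi_\tau$ on singletons by definition. In the \ruleref{i-union} case we have $C = C_1\cup C_2$ with subderivations $\encode{\tau}{C_1}{J_1}$ and $\encode{\tau}{C_2}{J_2}$ and conclusion $I = J_1\cup J_2$; by the induction hypotheses $J_i = \Phi_\tau(C_i)$, and the distribution property then gives $I = \Phi_\tau(C_1)\cup\Phi_\tau(C_2) = \Phi_\tau(C_1\cup C_2) = \Phi_\tau(C)$.

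Finally, applying this lemma to both hypotheses of the theorem yields $I_1 = \Phi_\tau(C) = I_2$, which is the desired conclusion. I expect the only subtlety to be the distribution lemma for $\Phi_\tau$, where one must be careful that elements duplicated across the two sides of a union do not produce a spurious difference --- this is exactly where idempotence of union is used.
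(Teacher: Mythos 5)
Your proof is correct, and it is a more careful argument than the one the paper gives. The paper disposes of this theorem with a one-line ``straightforward induction on the translation derivation,'' which implicitly inducts on one derivation and matches the other against it; that glosses over exactly the point you isolate, namely that \ruleref{i-union} is not syntax-directed, so the two derivations of $\encode{\tau}{C}{I_1}$ and $\encode{\tau}{C}{I_2}$ may decompose $C$ into unions in incompatible ways, and the inductive hypotheses then do not line up case-for-case. Your detour through the canonical denotation $\Phi_\tau(C)=\bigcup_{\theta\in C}\phi_\tau(\theta)$, proving that \emph{every} derivation computes $\Phi_\tau(C)$ and concluding $I_1=\Phi_\tau(C)=I_2$, is the standard and fully rigorous way to establish determinacy for a relation with an associative/commutative/idempotent congealing rule; the distribution lemma $\Phi_\tau(C_1\cup C_2)=\Phi_\tau(C_1)\cup\Phi_\tau(C_2)$ is exactly the right pivot, and your observation that idempotence of target-level union absorbs overlapping decompositions is the one subtlety worth recording. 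What the paper's terser route buys is brevity; what yours buys is an explicit closed form for the translation that also immediately yields completeness (Theorem~\ref{theorem:capset-interp-complete}) as a by-product, since $\Phi_\tau(C)$ always exists, so the two theorems justifying the functional notation $\enc{\cdot}^{\tau}$ of Definition~\ref{defn:functional-notation} collapse into one.
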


\begin{proof}
By straightforward induction on the translation derivation.
\end{proof}

\begin{theorem}[Type Translation is Injective]
\label{theorem:type-interp-injective}
Given $\encode{\tau}{T}{U_1}$
and $\encode{\tau}{T}{U_2}$,
we can show that $U_1 = U_2$.
\end{theorem}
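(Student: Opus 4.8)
The plan is to prove $U_1 = U_2$ by induction on the first translation derivation $\encode{\tau}{T}{U_1}$, exploiting the fact that the type-encoding rules of System \decap{} are syntax-directed: the head constructor of the input type $T$ determines which rule may fire. First I would observe that each case pairs $T$ with a unique rule, so inverting the second derivation $\encode{\tau}{T}{U_2}$ necessarily yields the same final rule applied to premises over the same inputs. The induction hypothesis then equates the recursively translated components, and the two outputs coincide. The atomic cases \ruleref{i-top} and \ruleref{i-tvar} are immediate, since the output is fixed by the rule. For the capturing case \ruleref{i-capt} with $T = S\capt C$, the induction hypothesis gives uniqueness of the shape translation, and Theorem~\ref{theorem:capset-interp-injective} (injectivity of capture-set translation) forces the two capture components to agree, so $U_1 = U_2$.

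The structural cases \ruleref{i-tfun}, \ruleref{i-cfun}, and \ruleref{i-boxed} each reduce to a single recursive call to which the induction hypothesis applies directly; for \ruleref{i-boxed} I would additionally note that the premise constrains the inner translation to a \emph{capturing} type $S'\capt C'$, whose decomposition into shape and capture set is unique, so the boxed output $\forall[X<:\top](\forall[X<:\top]S'\capt C')\capt C'$ is determined. The function case \ruleref{i-fun} is the most involved: its four premises translate the argument shape $S$, the argument capture bound $C_a$, the codomain $U$, and the function's own capture set $C_f$ under distinct, rule-fixed interpretation contexts, and the guard $C''_f$ is produced by a deterministic case split on $\alpha$. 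Since each interpretation context is pinned down by the rule, and the introduced capture-variable names $c_x$, $c_{x^*}$, $c$ are taken canonically from $x$, the induction hypothesis together with Theorem~\ref{theorem:capset-interp-injective} equates all four premise outputs, hence $U_1 = U_2$.

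The one case that does not recurse on a structural subterm is \ruleref{i-applied}, whose final premise translates the substituted body $[X_1:=T'_1,\cdots,X_n:=T'_n]S$ rather than a subterm of $\kappa[T_1,\cdots,T_n]$. This is exactly why I would run the induction on the translation \emph{derivation} rather than on the structure of $T$, since that substituted-body premise is still a strict subderivation. Here uniqueness of the typedef lookup $\kappa=(X_1^{\nu_1},\cdots,X_n^{\nu_n})\mapsto S\in\cctx$ fixes the definition in both derivations; the covariant arguments $T'_i$ agree by the induction hypothesis, the contravariant ones satisfy $T'_i = T_i$ verbatim, so the substituted bodies coincide, and a final appeal to the induction hypothesis on that shared premise yields $U_1 = U_2$.

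The main obstacle I anticipate is establishing genuine syntax-directedness, specifically the apparent overlap between \ruleref{i-capt} and \ruleref{i-fun} on a capturing function type $(\forall^\alpha(x:S\capt C_a)U)\capt C_f$. I would resolve this by noting that applying \ruleref{i-capt} would require translating the bare term-function shape $\forall^\alpha(x:S\capt C_a)U$, for which \emph{no} type-encoding rule exists; hence \ruleref{i-capt} is inapplicable to such inputs and \ruleref{i-fun} is the unique applicable rule. Once this single subtlety is discharged, every remaining case is the routine mechanical matching described above, and no appeal to totality (Theorem~\ref{theorem:type-interp-complete}) is needed.
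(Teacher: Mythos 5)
Your proposal is correct and follows essentially the same route as the paper, which proves this by induction on the translation derivation, invoking the induction hypothesis and Theorem~\ref{theorem:capset-interp-injective} in each case (the paper gives only a sketch, whereas you additionally discharge the syntax-directedness subtlety for \ruleref{i-fun} versus \ruleref{i-capt} and the non-structural recursion in \ruleref{i-applied}). No gaps.
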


\begin{proof}[Proof Sketch]
By induction on the derivation of $\encode{D}{\G}{T}{U_1}$.
Make use of the IH and Theorem~\ref{theorem:capset-interp-injective} to conclude each case.
\end{proof}

\begin{definition}[Functional Notation for Translation]\label{defn:functional-notation}
Given that the capture set and type translation judgements are functional,
that is,
they are both injective and complete,
as shown in Theorem~\ref{theorem:capset-interp-injective},
Theorem~\ref{theorem:capset-interp-complete},
Theorem~\ref{theorem:type-interp-injective},
and Theorem~\ref{theorem:type-interp-complete},
we use $\enc{\cdot}^{\tau}$ to denote the output of the translation derivation
under the input $\tau$, and $\cdot$, 
where $\cdot$ can either be a capture set of a type.

Sometimes we write simply $\enc{\cdot}^{D}$
if the $\rho$ and $\rho^*$ in the translation context
are clear from the context.
\end{definition}

\begin{theorem}[Capture Set Replacement]
Given an already capturing type $T$,
$T\capt C$ replaces the capture set of $T$ with $C$.
Specifically, it is defined as:
\begin{equation*}
(S\capt C_0)\capt C \coloneqq S\capt C
\end{equation*}
where $T = S\capt C_0$.
\end{theorem}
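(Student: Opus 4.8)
The statement is, despite being cast as a theorem, a definition of overloaded notation: it extends the capturing operator $\capt$ so that, when its left operand is an \emph{already capturing} type $T = S\capt C_0$ rather than a bare shape type, the old capture set $C_0$ is discarded and replaced by $C$. Accordingly, the plan is to treat ``proving'' it as discharging the one obligation such a definition incurs, namely \emph{well-definedness}: that the defining equation $(S\capt C_0)\capt C \coloneqq S\capt C$ determines a single result for every capturing input $T$.

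First I would appeal to the grammar of types in \Cref{fig:syntax}, where $T,U \coloneqq S\capt C \mid S$. The two alternatives are syntactically disjoint, and the capturing form is uniquely readable: if $S_1\capt C_1$ and $S_2\capt C_2$ denote the same type, then $S_1 = S_2$ and $C_1 = C_2$. Hence for a given capturing type $T$ the witnesses $S$ and $C_0$ in $T = S\capt C_0$ are uniquely determined, so the right-hand side $S\capt C$ does not depend on any choice of decomposition, and the operation is well-defined. I would then note consistency with the primitive $\capt$: restricted to inputs already of the form $S\capt C_0$, the new clause yields exactly the primitive capturing type $S\capt C$, so the two readings of the symbol never conflict.

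The only subtlety is notational rather than mathematical. Because $\capt$ is now overloaded to accept either a shape type or a full capturing type on the left, one must confirm that each concrete left operand falls into exactly one of these cases, so that the intended clause always applies unambiguously. This too follows directly from the disjointness of the syntactic categories of \Cref{fig:syntax}, so I expect no genuine obstacle; the entire ``proof'' reduces to an appeal to unique readability of the grammar, and the hardest part is merely recognizing that no deeper argument is required.
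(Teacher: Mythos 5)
You are right that this ``theorem'' is really a definition of notation, and the paper accordingly supplies no proof at all --- it simply states the equation with $\coloneqq$ and moves on. Your observation that the only conceivable obligation is well-definedness, discharged by unique readability of the type grammar, is correct and consistent with the paper's treatment; nothing further is required.
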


\begin{theorem}[Translation Preserves Subtyping]
\label{theorem:subtyp-interp}
Given a proper translation context $\<D,\rho,\rho^*\>$
under contexts $\G$ and $\Delta$,
the subtyping derivation $\subs{\G}{S_1}{S_2}$
implies that
given any well-typed answer $a$ in System \capless{}
$\typ{\embed{C_a}^{D_1}}{\Delta}{a}{\embed{S_1\capt C}^{D_1}}$
we have
$\typ{\embed{C_a}^{D_2}}{\Delta}{a'}{\embed{S_2\capt C}^{D_2}}$
for some $D_2$.
\end{theorem}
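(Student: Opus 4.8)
The plan is to induct on the subtyping derivation $\subs{\G}{S_1}{S_2}$, in each case producing an adapted answer $a'$, an interpretation $D_2$, and a \capless{} typing derivation witnessing $\typ{\embed{C_a}^{D_2}}{\Delta}{a'}{\embed{S_2\capt C}^{D_2}}$. Reflexivity \ruleref{refl} is immediate with $a'=a$ and $D_2=D_1$; transitivity \ruleref{trans} composes the two adaptations supplied by the induction hypotheses, threading the intermediate interpretation. The \ruleref{top} case needs no term change: since $\embed{S_1}^{D_1}<:\top$ already holds in the target, we set $a'=a$, $D_2=D_1$, and close with a single \capless{} \ruleref{sub}. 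The structural congruence cases \rruleref{boxed}, \rruleref{tfun}, and \rruleref{cfun} recurse with the induction hypothesis on the inner subderivation and re-wrap the result under the corresponding target binder; for boxes this means lifting the inner adaptation through the double type-lambda produced by \ruleref{i-boxed}, and for the two quantifier cases through the target $\forall[X<:\top]$ or $\forall[c]$.

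For the applied-type rules I would lean on the fact that \ruleref{i-applied} translates an applied type to the translation of its variance-substituted expansion. Thus \rruleref{dealias} reduces to reflexivity on the encoded types (the two encodings coincide), while \rruleref{applied-p} and \rruleref{applied-m} reduce, after dealiasing, to the covariant/contravariant structural cases already handled. The capture-set content that appears in subderivations --- whenever a \ruleref{capt} step widens a capture component inside a domain or codomain --- is discharged by the monotonicity machinery rather than by a term transformation: given $\subs{\G}{C_1}{C_2}$, Theorem~\ref{theorem:capset-interp-monotonic-alt} supplies an interpretation $D_2$ with $D_2\subseteq I_1$ and $\subs{\Delta}{I_1}{I_2}$, so the capture qualifier can be widened by one \capless{} \ruleref{sub} step while the shape is adapted by the induction hypothesis.

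The main obstacle is the function case \rruleref{fun}, the one place where source subtyping forces a genuine term transformation instead of a mere \ruleref{sub}. Because \ruleref{i-fun} always binds the universal capture parameters $\forall[c_x<:B_a]\forall[c_{x*}]$ and wraps the codomain in an existential $\EXCAP{c}U'$, the two translated function types disagree on the packed witnesses, so $a$ cannot be retyped directly. I would adapt $a$ by eta-expansion, introducing the binders $\lambda[c_x]\lambda[c_{x*}]\lambda(x\!:\,S'\capt\set{c_x})$ prescribed by the target type, and inside them (i) adapting the bound argument $x$ \emph{contravariantly} via the induction hypothesis on the domain subderivation $\subs{\G}{T_2}{T_1}$ --- this step depends crucially on \rruleref{sub} being restricted to answers, so that the argument variable $x$ qualifies as an answer to which the hypothesis applies; (ii) applying $a$ to the adapted argument; (iii) unpacking the existential result with \ruleref{let-e}; (iv) adapting the unpacked value \emph{covariantly} via the induction hypothesis on the codomain subderivation; and (v) repacking with \ruleref{pack} into the target existential. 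The delicate points are choosing the pack witness in step (v) so that it validates against the interpretation $D_2$ returned by the codomain hypothesis, and checking that the use sets balance after the unpack, so that the newly bound $c_x$ (and, when $\alpha=\USE$, the $c_{x*}$ licensed by the $C''_f$ clause of \ruleref{i-fun}) are accounted for exactly. Carrying this bookkeeping uniformly across the $\alpha=\epsilon$ and $\alpha=\USE$ variants is what makes the function case the crux of the proof.
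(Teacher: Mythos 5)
Your proposal is correct and follows essentially the same route as the paper's proof: induction on the subtyping derivation, identity adaptations for \rruleref{top} and \rruleref{refl}, rewrapping through the target binders for the congruence cases, capture-set monotonicity (Theorem~\ref{theorem:capset-interp-monotonic-alt}) for widening, and the same eta-expand/apply/unpack/adapt/repack construction for \rruleref{fun}. The only point where the paper is more explicit is the applied-type cases: rather than ``reducing to the structural cases after dealiasing,'' it runs an inner induction on the size of $\cctx$ and the structure of the definition body $S$ to push the argument adaptation through the expansion, and for \rruleref{dealias} it relies on the side condition $\CAP\notin\dcs{\G}{T_i}$ to show that the two encodings literally coincide.
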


\begin{proof}
By induction on the subtyping derivation.

\emph{Case \rruleref{top}}.
Then $S_2 = \top$.
We set $D_2 = D_1$ and $a' = a$.
We have $\embed{S_2\capt C}^{D_2} = \top\capt\embed{C}^{D_2}$
and conclude by the \rruleref{sub} and \rruleref{top} rules.

\emph{Case \rruleref{refl}}.
Then $S_1 = S_2$.
We set $D_2 = D_1$ and $a = a'$,
and conclude this case immediately.

\emph{Case \rruleref{trans}}.
Then $\subs{\G}{S_1}{S_0}$ and $\subs{\G}{S_0}{S_2}$
for some $S_0$.
We conclude by repeated application of the IHs.

\emph{Case \rruleref{boxed}}.
Then $S_1 = \BOX R_1\capt C_1$ and $S_2 = \BOX R_2\capt C_2$
for some $T_1$ and $T_2$.
We have
$\embed{S_1}^{D_1} = \forall[X<:\top](\forall[X<:\top]R_1\capt C_1)\capt C_1$.
Given
$\typ{\embed{C_a}^{D_1}}{\Delta}{\lambda[X<:\top]\lambda[X<:\top]a}{\embed{S_1}^{D_1}}$,
we invert the typing derivation to show that
$\typ{\embed{C_i}^{D_1}}{(\Delta,X<:\top,X<:\top)}{a}{\embed{R_1\capt C_2}^{D_1}}$
for some $C_i$.
Then, we invoke the IH to show that
$\typ{\embed{C_i}^{D_2}}{\Delta}{a_0}{\embed{R_2\capt C_2}^{D_2}}$
for some $a_0$ and $D_2$.
We set $a' = \lambda[X<:\top]\lambda[X<:\top]a_0$
and conclude this case by the \rruleref{tabs} rule.

\emph{Case \rruleref{fun}}.
Then $S_1 = \forall^{\alpha_1}(z:R_1\capt C_1)U_1$ and $S_2 = \forall^{\alpha_2}(z:R_2\capt C_2)U_2$.
In this case, we have
$\embed{S_1}^{D_1} = \forall[c_z<:\embed{C_1}]\forall[c_{z^*}](\forall(z:\embed{R_1}^{c_{z^*}}\capt\set{c_z})\EXCAP{c}\embed{U_1}^{c})$.
Given
$\typ{\embed{C_a}^{D_1}}{\Delta}{a}{\embed{S_1\capt C_f}^{D_1}}$,
we set $a'$ to
\begin{align*}
&\lambda[c_z<:\embed{C_2}]\lambda[c_{z^*}]\lambda(z:\embed{R_2}^{c_{z^*}}\capt\set{c_z})\\
&\quad\LET z_f = a\IN\\
&\quad\LET z_a = a_0\IN\\
&\quad\LET z_1 = z_f[\set{c_z}]\IN\\
&\quad\LET z_2 = z_1[D_0]\IN\\
&\quad\LET \<c_3,z_3\> = z_2\,z_a\IN\\
&\quad\LET z_o = a'_0\IN\\
&\quad\quad \<D'_0,z_o\>\\
\end{align*}
Here, $a_0$ and $D_0$ are the result of invoking the IH of
the subtyping derivation between $R_2$ and $R_1$
with $z$ as the input.
$a'_0$ and $D'_0$ is the result of invoking the IH of the subtyping derivation
between $U_1$ and $U_2$ with $z_3$ and $\set{c_3}$ as the input.
Note that from Theorem~\ref{theorem:capset-interp-monotonic-alt}
we can show that
$\subs{\Delta}{\embed{C_2}}{\embed{C_1}}$.
By $\alpha_1\preceq\alpha_2$
and definition of the type translation,
we can always find a $C_f'$ such that
$\subs{\Delta}{\embed{C_f}^{D_1}}{\embed{C_f}^{D_2}}$.
We therefore conclude this case by repeated application of
the \ruleref{cabs}, \ruleref{abs} and \ruleref{sub} rules.

\emph{Case \rruleref{tfun} and \rruleref{cfun}}.
Analogous to the \rruleref{box} case.

\emph{Case \rruleref{applied-p}}.
Then $S_1 = \kappa[T_1,\cdots,T_i,\cdots,T_n]$, $S_2 = \kappa[T_1,\cdots,T_i',\cdots,T_n]$,
$\kappa = (X_1^{\nu_1},\cdots,X_i^+,\cdots,X_n^{\nu_n})\mapsto S\in\cctx$,
and $\subs{\G}{T_i}{T_i'}$.
By the IH, we can show that
given any answer $a$ typed at $\embed{T_i}^{D_1}$
there exists a $D_2$ and an answer $a'$ such that
$a'$ can be typed at $\embed{T_i'}^{D_2}$.
By induction on the lexicographic order of
the size of $\cctx$ and the structure of $S$.
In each case, we can construct a way adapt the term.
Note that in the case of $S$ being an applied type,
we continue the induction by decreasing the size of $\cctx$,
since by the well-formedness of $\cctx$,
any type definitions only depend on ones coming before it.

\emph{Case \rruleref{applied-m}}.
Analogous to the \rruleref{applied-p} case.

\emph{Case \rruleref{dealias}}.
There are two subcases of two different directions of the subtyping derivation.
Let us consider the first case,
where $S_1 = \kappa[T_1,\cdots,T_n]$
$S_2 = [X_1:=T_1,\cdots,X_n:=T_n]S$,
$\kappa = (X_1^{\nu_1},\cdots,X_n^{\nu_n})\mapsto S\in\cctx$,
and $\forall X_i^+,\CAP\notin\dcs{\G}{T_i}$.
Note that for any choice of $D$, we have
$\embed{S_1}^{D} = \embed{[X_1:=T_1',\cdots,X_n:=T_n']S}^{D}$
where $\forall X_i^+,T'_i = \embed{T_i}^{D}$
and $\forall X_j^-,T'_j = T_j$.
For any choice of $D$, we have
$\embed{S_2}^{D} = \embed{[X_1:=T_1,\cdots,X_n:=T_n]S}^{D}$.
Note that since $\forall X_i^+,\CAP\notin\dcs{\G}{T_i}$,
the value of $T_i'$ is irrelavant to the choice of $D$,
and we can therefore show that
$\embed{S_1}^{D} = \embed{S_2}^{D}$.
This case is concluded by setting $D_2 = D_1$ and $a' = a$ given any $a$ that can be typed at $\embed{S_1}^{D_1}$.
The other subcase is analogous.
\end{proof}

\subsection{The Main Theorem}

\thmtranslation*

\begin{proof}[Proof of Theorem~\ref{theorem:typing-interp}]
By induction on the typing derivation $\typ{C}{\G}{t}{T}$.

\emph{Case \rruleref{var}}.
Then the typing derivation is of the form
$\typ{\set{x}}{\G}{x}{S'\capt\set{x}}$
where $\RefineReach{\set{x^*}}{S}{S'}$.
By Definition~\ref{def:proper-tcontext}, we know that
$x:\embed{S}^{\rho^*(x)}\capt\rho(x)\in\Delta$.
We can show that
$\embed{S'}^{\rho^*(x)} = \embed{S}^{\rho^*(x)}$.
Since $x$ is an answer, we need to choose a $t'$ that is an answer.
We set $D' = \rho^*(x)$ and $t' = x$ and conclude this case.

\emph{Case \rruleref{sub}}.
Then $\typ{C_0}{\G}{a}{T_0}$, $\subs{\G}{C_0}{C}$
and $\subs{\G}{T_0}{S}$.
By the IH, we can show that
$\typ{\embed{C_0}^{D_0}}{\Delta}{a'_0}{\embed{T_0}^{D_0}}$
for some capture set $D_0$ and answer $a'_0$.
We then conclude this case by Theorem~\ref{theorem:subtyp-interp} and Theorem~\ref{theorem:capset-interp-monotonic}.

\emph{Case \rruleref{box}}.
Then the typing derivation is of the form
$\typ{\set{}}{\G}{\BOX x_0}{\BOX (S_0\capt C_0)}$
where we have
$\typ{C_0}{\G}{x_0}{S_0\capt C_0}$.
By the IH, we can show that
$\typ{\embed{C_0}^{D_0}}{\Delta}{a'_0}{\embed{S_0\capt C_0}^{D_0}}$
for some $D_0$ and $a'_0$.
We let $D' = D_0$ and construct the following $t'$:
\begin{align*}
\lambda[X<:\top]\lambda[X<:\top]a'_0,
\end{align*}
which is also an answer.
We conclude this case by repeated application of the \ruleref{tabs} rule.

\emph{Case \rruleref{unbox}}.
Then the typing derivation is of the form
$\typ{C_0}{\G}{C\UNBOX x_0}{S_0\capt C_0}$
and we have
$\typ{C_0}{\G}{x_0}{\BOX S_0\capt C_0}$.
By analyzing the typing derivation, we can show that
$\typ{\set{}}{\G}{x_0}{\BOX S_0\capt C_0}$.
By the IH, we have
$\typ{\set{}}{\G}{a'_0}{\forall[X<:\top](\forall[X<:\top]S'_0\capt C'_0)\capt C'_0}$
where $C'_0 = \embed{C_0}^{D_0}$ and $S'_0 = \embed{S_0}^{D_0}$ for some $D_0$.
We set $D' = D_0$ and construct the following $t'$:
\begin{align*}
&\LET z_0 = a'_0\IN\\
&\LET z_1 = z_0[\top]\IN\\
&\quad z_1[\top]
\end{align*}
We can show that
$\typ{C'_0}{\Delta}{t'}{S'_0\capt C'_0}$
using the \ruleref{let} and \ruleref{tapp} rules,
and therefore conclude this case.

\emph{Case \rruleref{abs}}.
Then $t = \lambda^\alpha(x:R_0\capt C_0)t_0$, $T = (\forall^\alpha(x:U_0)T_0)\capt C_f$
and $\typ{C_f}{(\G,x:U_0)}{t_0}{T_0}$.
Let $\Delta' = (\Delta,c_x<:\embed{R_0},c_{x^*}<:\CAPK)$.
By the IH, we can show that one of the following holds:
\begin{enumerate}[(i)]
\item $\typ{\embed{C_f}^{D_0}}{\Delta'}{t'_0}{\embed{T_0}^{D_0}}$ for some $D_0$ and $t'_0$,
\item $\typ{\embed{C_f}^{D_0}}{\Delta'}{t'_0}{\EXCAP{c}\embed{T_0}^{\set{c}}}$ for some $D_0$ and $t'_0$.
\end{enumerate}
If the first case holds, we can construct the following $t'$:
\begin{align*}
&\lambda[c_x<:\embed{R_0}]\lambda[c_{x^*}]\lambda(x:\embed{R_0}^{c_{x^*}}\capt\set{c_x})\\
&\quad\LET z_r = t'_0\IN\\
&\quad\quad\<D_0,z_r\>\\
\end{align*}
Otherwise,
we can construct the following $t'$:
\begin{align*}
&\lambda[c_x<:\embed{R_0}]\lambda[c_{x^*}]\lambda(x:\embed{R_0}^{c_{x^*}}\capt\set{c_x})\\
&\quad t'_0\\
\end{align*}
In both cases, $t'$ is an answer.
We conclude by applying the \ruleref{cabs}, \ruleref{abs} and \ruleref{let} rules.

\emph{Case \rruleref{app}}.
Then $t = x\,y$,
$\typ{C}{\G}{x}{(\forall^\alpha(z:U_1)U_2)\capt C_f}$,
$y: S_y\capt D_y\in \G$,
$\subs{\G}{S_y\capt\set{y}}{U_1}$,
and $T = [z^*:=_{+}\dcs{\G}{S_y}][z:=y]U_2$.
We first analyze the typing derivation of $x$
and show that
$\typ{\set{x}}{\G}{x}{(\forall^\alpha(z:U_1)U_2)\capt\set{x}}$,
By the IH, we can show that
$\typ{\embed{\set{x}}^{D_0}}{\Delta}{a_x}{\embed{(\forall^\alpha(z:U_1)U_2)\capt\set{x}}^{D_0}}$
for some $D_0$ and $a_x$.
By Theorem~\ref{theorem:subtyp-interp},
we can show that
$\typ{\embed{\set{y}}^{D_1}}{\Delta}{a_y}{\embed{U_1}^{D_1}}$ for some $D_1$.
We construct the following $t'$:
\begin{align*}
&\LET z_x = a_x\IN\\
&\LET z_y = a_y\IN\\
&\LET z_1 = z_x[\rho(y)]\IN\\
&\LET z_2 = z_1[D_1]\IN\\
&\quad z_2\,z_y\\
\end{align*}
Then we can conclude this case by the \ruleref{app}, \ruleref{capp} and \ruleref{let} rules.
Note that, if $\alpha = \USE$,
the use set when typing $t'$ will include $D_1$.

\emph{Case \rruleref{tabs} and \rruleref{cabs}}.
Analogous to the \rruleref{abs} case.

\emph{Case \rruleref{tapp} and \rruleref{capp}}.
Analogous to the \rruleref{app} case.

\emph{Case \rruleref{let}}.
Then $t = \LET z = s_1\IN s_2$,
$\typ{C}{\G}{s_1}{T_1}$,
$\typ{C}{(\G,x:T_1)}{s_2}{T_2}$.
By applying the IH on the first typing derivation,
we can show that one of the following holds:
\begin{enumerate}[(i)]
\item $\typ{\embed{C}^{D_1}}{\Delta}{t'_1}{\embed{T_1}^{D_1}}$ for some $D_1$ and $t'_1$,
\item $\typ{\embed{C}^{D_1}}{\Delta}{t'_1}{\EXCAP{c}\embed{T_1}^{\set{c}}}$ for some $D_1$ and $t'_1$.
\end{enumerate}
If the first case holds,
we invoke the IH on the second typing derivation to show that
$\typ{\embed{C}^{D_2}}{\Delta,z:\embed{T_1}^{D_1}}{t'_2}{\embed{T_2}^{D_2}}$ for some $D_2$ and $t'_2$.
We construct the following $t'$:
\begin{align*}
&\LET z_1 = t'_1\IN\\
&\quad t'_2\\
\end{align*}
and conclude this case by the \ruleref{let} rule.
If the second case holds,
we invoke the IH on the second typing derivation to show that
$\typ{\embed{C}^{D_2}}{\Delta,c_z:\CAPK,z:\embed{T_1}^{c_z}}{t'_2}{\embed{T_2}^{D_2}}$ for some $D_2$ and $t'_2$,
and construct the following $t'$:
\begin{align*}
&\LET \<c_z,z_1\> = t'_1\IN\\
&\quad t'_2\\
\end{align*}
Then, this case can be concluded by the \ruleref{let-e} rule.

\end{proof}
 
\end{document}